\mleftright \usepackage{soul}
\pgfmathsetmacro{\scaling}{0.4}
\tikzset{plane/.style n args={3}{insert path={#1 -- ++ #2 -- ++ #3 -- ++ ($-1*#2$) -- cycle}},
unit xy plane/.style={plane={#1}{(\scaling,0,0)}{(0,\scaling,0)}},
unit xz plane/.style={plane={#1}{(\scaling,0,0)}{(0,0,\scaling)}},
unit yz plane/.style={plane={#1}{(0,\scaling,0)}{(0,0,\scaling)}},
get projections/.style={insert path={let \p1=(\scaling,0,0),\p2=(0,\scaling,0)  in 
[/utils/exec={\pgfmathtruncatemacro{\xproj}{sign(\x1)}\xdef\xproj{\xproj}
\pgfmathtruncatemacro{\yproj}{sign(\x2)}\xdef\yproj{\yproj}
\pgfmathtruncatemacro{\zproj}{sign(cos(\tdplotmaintheta))}\xdef\zproj{\zproj}}]}},
pics/unit cube/.style={code={
\path[get projections];
\draw (0,0,0) -- (\scaling,\scaling,\scaling);
\ifnum\zproj=-1
 \path[3d cube/every face,3d cube/xy face,unit xy plane={(0,0,0)}]; 
\fi
\ifnum\yproj=1
 \path[3d cube/every face,3d cube/yz face,unit yz plane={(\scaling,0,0)}]; 
\else
 \path[3d cube/every face,3d cube/yz face,unit yz plane={(0,0,0)}]; 
\fi
\ifnum\xproj=1
 \path[3d cube/every face,3d cube/xz face,unit xz plane={(0,0,0)}]; 
\else
 \path[3d cube/every face,3d cube/xz face,unit xz plane={(0,\scaling,0)}]; 
\fi
\ifnum\zproj>-1
 \path[3d cube/every face,3d cube/xy face,unit xy plane={(0,0,\scaling)}]; 
\fi
}},
3d cube/.cd,
xy face/.style={fill=lightgray},
xz face/.style={fill=darkgray},
yz face/.style={fill=gray},
every face/.style={draw,very thick}
}
\tikzset{
        hatch distance/.store in=\hatchdistance,
        hatch distance=10pt,
        hatch thickness/.store in=\hatchthickness,
        hatch thickness=1pt
}
\colorlet{grayX}{black!26!white}
\colorlet{darkgrayX}{black!47!white}
\theoremstyle{plain}
\newtheorem{thm}{\protect\theoremname}
\theoremstyle{plain}
\newtheorem*{thm*}{\protect\theoremname}
\theoremstyle{definition}
\newtheorem{defn}[thm]{\protect\definitionname}
\theoremstyle{plain}
\newtheorem{cor}[thm]{\protect\corollaryname}
\theoremstyle{remark}
\newtheorem*{note*}{\protect\notename}
\theoremstyle{remark}
\newtheorem*{rem*}{\protect\remarkname}
\theoremstyle{remark}
\newtheorem*{notation*}{\protect\notationname}
\theoremstyle{remark}
\newtheorem{claim}[thm]{\protect\claimname}
\theoremstyle{plain}
\newtheorem{lem}[thm]{\protect\lemmaname}
\theoremstyle{plain}
\newtheorem{problem}{\protect\probname}
\theoremstyle{plain}
\newtheorem{prop}[thm]{\protect\propositionname}
\theoremstyle{plain}
\crefname{equation}{}{}
\providecommand{\claimname}{Claim}
\providecommand{\corollaryname}{Corollary}
\providecommand{\definitionname}{Definition}
\providecommand{\lemmaname}{Lemma}
\providecommand{\probname}{Problem}
\providecommand{\notationname}{Notation}
\providecommand{\notename}{Note}
\providecommand{\propositionname}{Proposition}
\providecommand{\remarkname}{Remark}
\providecommand{\theoremname}{Theorem}
\providecommand{\observationname}{Observation}
\newif\if@mainmatter \@mainmattertrue
\newcommand\frontmatter{\cleardoublepage
  \@mainmatterfalse
  \pagenumbering{Roman}}
\newcommand\mainmatter{\cleardoublepage
  \@mainmattertrue
  \pagenumbering{arabic}}
\def\wasyfamily{\fontencoding{U}\fontfamily{wasy}\selectfont}
\DeclareTextFontCommand{\textwasy}{\wasyfamily}
\DeclareSymbolFont{wasy}{U}{wasy}{m}{n}
\def\hexag{\mbox{\wasyfamily\char55}}
\newcommand{\FF}{\mathsf{F}}
\newcommand{\ff}{\mathsf{f}}
\newcommand{\N}{\mathbb{N}}
\newcommand{\Z}{\mathbb{Z}}
\newcommand{\R}{\mathbb{R}}
\newcommand{\I}{\mathcal{I}}
\newcommand{\vp}{\varphi}
\DeclareMathOperator{\lip}{Lip}
\DeclareMathOperator{\hexagon}{\hexag}
\newcommand{\norm}[1]{\left\lVert#1\right\rVert}
\newcommand{\hyp}[1]{$\mathcal{H}(#1)$}
\newcommand{\p}{^\prime}
\newcommand{\D}{\Delta}
\date{\today}
\author{Arye Deutch\thanks{Einstein Institute of Mathematics, Hebrew University of Jerusalem, Israel. \url{ arye.deutsch@mail.huji.ac.il}}
\and
Ohad Noy Feldheim\thanks{Einstein Institute of Mathematics, Hebrew University of Jerusalem, Israel. \url{ ohad.felheim@mail.huji.ac.il}} 
\and
Rani Hod\thanks{Blavatnik School of Computer Science, Raymond and Beverly Sackler Faculty of Exact Sciences, and Iby and Aladar Fleischman Faculty of Engineering, Tel Aviv University, Tel Aviv, Israel. Research supported by Len Blavatnik and the Blavatnik Family foundation. \url{ranihod@tau.ac.il}}}
\begin{document}

\title{Multi-layered planar firefighting}

\maketitle

\begin{abstract}
Consider a model of fire spreading through a graph; initially some vertices are burning, and at every given time-step fire spreads from burning vertices to their neighbors. 
The firefighter problem is a solitaire game in which a player is allowed, at every time-step, to protect some non-burning vertices (by effectively deleting them) in order to contain the fire growth. How many vertices per turn, on average, must be protected in order to stop the fire from spreading infinitely?

Here we consider the problem on $\Z^2\times [h]$ for both nearest neighbor adjacency and strong adjacency.
We determine the critical protection rates for these graphs to be~$1.5h$ and~$3h$, respectively. 
This establishes the fact that using an optimal two-dimensional strategy for all layers in parallel is asymptotically optimal.
\end{abstract} 
\section{Introduction}\label{sec:introintroduction}
lLet $G=\left(V,E\right)$ be an infinite graph.
The firefighter problem on $G$ is the following solitaire combinatorial game, introduced by Hartnell~\cite{hartnell1995firefighter}. The game starts with a finite starting set of \emph{burning} vertices $B(0)\subset V$. 
At every turn $t\in\N$, the player chooses an arbitrary finite collection of non-burning vertices and \emph{protects} them permanently, subject to the constraint that at most~$\ff(t)$ vertices are protected by time~$t$.
Then, the unprotected neighbors of burning vertices become burning. The goal of the game is to ensure that eventually no new burning vertices are generated.
If this goal is achieved, we say that the player \emph{contained} the fire.
We say that $C(G)$ is the \emph{critical protection rate} if for $\ff(t)=C(G) t$, there exists a starting set of burning vertices for which there is no play strategy that allows the player to contain the fire, while for $\ff(t)=(C(G)+\varepsilon) t$ for any $\varepsilon>0$, such a strategy exists for all finite $B_0\subset V$.

Denote the set of all positive integers by $\N$ and the set of~$h$ smallest positive integers by $[h] := \{1, 2, \ldots, h\}$, for $h\in\N$.
Write $\Box$ (resp., $\boxtimes$) for the Cartesian product (resp., the strong Cartesian product) of graphs. We study the firefighter problem on $G_1=G_1^h:=(\Z\;\Box\;\Z)\;\Box\;[h]$
and $G_2 = G_2^h:=(\Z\boxtimes\Z)\;\Box\;[h]$. These two infinite graphs have the same set $\Z\times\Z\times[h]$ of vertices, which we think of as $h$ vertically-stacked copies of the horizontal plane $\Z^2$. The difference is that the degree of a typical vertex in $G_1$ is $6$ (four horizontal neighbors, one above and one below), versus degree $10$ in $G_2$ ($8$ horizontal neighbors, one above and one below).

Our main result is the following.
\begin{thm}\label{thm:1}
For every $h\in\mathbb{N}$ and for $q\in\{1,2\}$, we have $C(G_q)=\frac32qh$.
\end{thm}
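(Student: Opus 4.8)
The plan is to prove the theorem in two halves for each $q$: a lower bound showing that protecting at fewer than $\tfrac32 q h \cdot t$ vertices is insufficient, and an upper bound giving a containment strategy whenever the rate exceeds $\tfrac32 q h$. The conceptual heart of the matter is the two-dimensional statement $C(\Z\Box\Z)=\tfrac32$ and $C(\Z\boxtimes\Z)=3$ — the former is classical (fire in the planar grid cannot be contained with fewer than $1.5t$ protections, and $1.5t + O(1)$ suffices), and the analogous $\boxtimes$-statement should either be cited or rederived by the same perimeter-counting method. Given this, the theorem asserts that the $h$-layer problem is exactly $h$ times harder, i.e.\ that nothing is gained (for the player) nor lost by the vertical coupling between layers.

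For the \emph{upper bound} ($\ff(t) = (\tfrac32 q h + \varepsilon)t$ suffices), the natural strategy is: run an optimal planar strategy \emph{independently in each of the $h$ layers}, ignoring vertical edges. A burning vertex in layer $i$ can ignite its projection in layers $i\pm1$, so the set of ``effectively burning'' cells in each layer is the union of the projections of the burning sets of its neighbours; but since all layers start from (projections of) a common finite set and each runs the same planar strategy, one argues inductively that the burning region in each layer stays within a bounded neighbourhood of a single planar fire, which the per-layer budget of $(\tfrac32 q + \varepsilon/h)t$ contains up to a lower-order correction. Making the vertical leakage rigorous — ensuring the inter-layer ignitions only contribute an $O(1)$ or lower-order perturbation that the slack $\varepsilon t$ absorbs — is a bookkeeping step; I would handle it by having each layer protect a slightly enlarged barrier, or by a coupling/domination argument comparing the true $h$-layer fire to $h$ decoupled planar fires each fed a common enlarged source.

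For the \emph{lower bound} ($\ff(t) = \tfrac32 q h\, t$ is insufficient), the idea is an isoperimetric/perimeter argument lifted from the planar case. In the planar grid one shows that if the fire has burned for $t$ steps from a single source, its boundary (the set of burning vertices adjacent to unburnt, un-protected territory) has size growing like a constant times $t$, and containing it forever requires protecting vertices at a cumulative rate matching the growth of this boundary; the constant is $\tfrac32$ for $\Box$ and $3$ for $\boxtimes$. For $G_q^h$, take the starting set to be a full column $\{0\}\times\{0\}\times[h]$ (or a projection-thick finite set); because the vertical edges make the fire in all $h$ layers advance essentially in lockstep, the burning region is, up to lower-order terms, a planar fire replicated $h$ times, so its total boundary grows $h$ times as fast. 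A rate below $\tfrac32 q h$ then leaves a positive-density deficit between the budget and the perimeter that must be sealed, forcing the fire to escape — this is formalized by the standard ``the player can never catch up'' accounting, e.g.\ tracking the quantity (perimeter) $-$ (protections used so far) and showing it stays bounded below by a positive linear function of $t$.

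The main obstacle I expect is the \emph{vertical coupling} in both directions: in the lower bound one must rule out clever strategies that exploit the layered structure — e.g.\ sacrificing one layer to build a cheaper barrier protecting the others, or using the fact that a single protected vertex in layer $i$ blocks fire spreading vertically as well as horizontally — and show no such trick beats the factor-$h$ bound; in the upper bound one must confirm that the inter-layer ignitions genuinely remain a lower-order effect rather than compounding across $\Theta(t)$ time steps. I would address the lower-bound subtlety with a potential-function argument that charges each protected vertex against at most a bounded amount of ``boundary reduction'' summed over all layers it touches (a protected vertex lies in one layer and can only help seal the perimeter in that layer plus, via vertical edges, a bounded contribution to the adjacent layers), yielding the clean $\tfrac32 q h$ threshold; the planar base estimates for $\boxtimes$ may need to be established from scratch if not available in the literature.
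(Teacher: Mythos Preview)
Your upper bound is essentially the paper's argument, and in fact simpler than you fear: take the horizontal projection $S'$ of the initial fire and run the \emph{same} planar strategy against $S'$ in every layer. Since the protected sets are identical across layers, vertical edges never carry fire past a barrier, and no leakage bookkeeping is needed.

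The lower bound, however, has a real gap. Your plan rests on the claim that the fire ``advances essentially in lockstep'' across layers, so the boundary is $h$ times the planar boundary up to lower-order terms. But this is exactly what the player can break: by concentrating firefighters on one layer's front, the player can extinguish that front entirely, after which that layer contributes \emph{zero} boundary growth, not a bounded deficit. Your proposed charging scheme---each protected vertex reduces boundary by $O(1)$ summed over the layers it touches---does not capture this, because once a layer's front dies its loss of perimeter growth is unbounded in time, not chargeable to the $O(1)$ protections that killed it.

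The paper's machinery exists precisely to repair this. It tracks a fronts structure whose levels need not advance every step: a burning level is held still unless its fierity exceeds a threshold depending on its position relative to the other levels of that front (\Cref{eq:aDef}). The point is that when a stalled level is eventually ``pulled'' forward by an advancing neighbour, vertical \emph{overflow} (\Cref{prop:Vert-Fogarty}) reignites it with enough burning vertices to restore the potential. A secondary indicator $g_i$ (\Cref{eq:gDef}) tracks periods of sub-typical growth and guarantees they last at most $2h$ steps (\Cref{lem:badTimes}); an inductive ``control'' argument (\Cref{prop:inductive-step}) then shows the average potential increase is still $3qh$ per step. None of this is a routine refinement of the planar Fogarty argument---the decision of \emph{when} to advance a level, and the compensation mechanism via overflow, are the new ideas, and your proposal does not contain them.
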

The special case $h=1$ of Theorem~\ref{thm:1} has been obtained by the second and third authors in~\cite{feldheim20133}.
In light of this, Theorem~\ref{thm:1} can be interpreted as a ``parallel repetition'' statement: in the multi-layered setting, the player cannot asymptotically improve upon the simple strategy of repeating the same two-dimensional strategy~$h$ times.
In fact, in the course of the proof we will see that the bound~$3qh$ is also viable for the graph $(\Z\boxtimes\Z)\boxtimes[h]$.
We believe that this observation holds in much greater generality (see discussion in~\Cref{subsec:OpenProblems}).  
\subsection{Background}
The firefighter problem is a model for the spread of a phenomenon at the face of an effort to contain it. The model is best suited to describe the spread of an epidemic (or a false rumor) in a population while a vaccine (or clear contradictory evidence) are administered to prevent it.

The problem was introduced by Hartnell~\cite{hartnell1995firefighter} in 1995, formulated with one firefighter per time-step and a single vertex as initial fire and was generalized to the version described above. The problem could be seen as an earlier variant of Conway's celebrated \emph{angel problem} \cite{conway1996angel}, in which the angel has power $k=1$, the devil is oblivious to the location of the angel and must be certain to catch it in order to win the game. 

On finite graphs, the most natural challenges are to reduce the number of burning vertices as the process terminates and to reduce the speed at which the fire is contained, see~\cite[Chapter~5]{Fin09}.
The problem has been also studied from an algorithmic point of view. 
MacGillivray and Wang~\cite{MacGillivray2003Ontheff} show that on bipartite graphs finding a strategy which minimizes the number of burning vertices is NP-complete. This has been extended to trees and cubic graphs in~\cite{finbow2007firefighter,king2010firefighter}.
On trees, however, the solution could be approximated up to a constant factor in polynomial time (See~\cite{adjiashvili2018firefighting,Approx2,hartke2004attempting,HL00,iwaikawa2011improved}).
For a survey on the problem see~\cite{Fin09} and references therein.

On infinite, vertex-transitive graphs with planar-type growth, the most natural question is to recover the critical protection rate. 
Currently, the only known values of~$C(G)$ are $C(\Z\;\Box\;\Z)=\frac{3}{2}$ and $C(\Z\boxtimes\Z)=3$, obtained in~\cite{feldheim20133} (following earlier results by Wang and Moller~\cite{moeller2002fire}, Ng and Raff~\cite{ng2005fractional} and Messinger~\cite{messinger2008average}).
These rely on an isoperimetric argument developed by Fogarty in her thesis~\cite{fogarty2003catching}.
On the 6-regular hexagonal lattice we have
$1\le C(\Z\;\hexagon\;\Z)\le2$ \cite{fogarty2003catching}, while on the cubic triangular lattice $\frac12\le C(\Z\;\triangle\;\Z)\le1$ (unpublished, could be obtained in a similar way).
Computing the critical protection rate of these lattices remains an open problem.

Develin and Hartke~\cite{DH07} generalized Fogarty's argument to show that~$2d-1$ firefighters are required to contain a single-source fire in~$\Z^{\Box d}$ and conjectured that for $\ff(t) \ll t^{d-2}$, there exists an initial fire such that any strategy using $\ff(t)$ firefighters fails to contain it.
A new result by Amir, Baldasso and Kozma~\cite{amir2020firefighter} use a Fogarty-type argument together with an isoperimetric tool to prove a generalized version of Develin and Hartke's conjecture, namely that in Cayley graphs with polynomial growth, any strategy with $\ff(t) \ll t^{d-2}$ cannot contain a large enough initial fire.

Only when~$G$ has a planar growth rate, constant protection rate should suffice to contain the fire and $C(G)$ is well defined. Thus, in a constant protection rate setting, the multi-layered variant studied here is a natural three-dimensional analog of the problem.
We further believe that our methods could be applied to nearly any local connectivity structure between multiple layers of $\Z\;\Box\;\Z$ and $\Z\boxtimes\Z$, so that whenever a three-dimensional layered graph has its layers connected along a sub-lattice, playing identically on each layer is asymptotically optimal.
We also see the new tools presented here as a stepping stone for tackling other graphs, and hopefully also a continuous variant (see \Cref{subsec:OpenProblems} below).  
\subsection{Outline of the proof for the lower bound}\label{subsec:pf outline}
\textbf{Firefronts and levels.} 
The analysis in~\cite{feldheim20133} of the fire evolution in the planar lattice considered four \emph{firefronts}, forming a rectangle that bounded the fire at every time step. To generalize this to the multi-layered setting, we separately treat the horizontal components of each firefront, which we call \emph{levels}. Each level is a horizontal segment of vertices, and the four levels in the each layer form a rectangle. 
The firefronts are defined in a way which guarantees that each front will always be adjacent to many burning vertices.

Much like in~\cite{feldheim20133}, the orientation of the levels differs by $45^\circ$ between $G_1$ and $G_2$: levels are parallel to the horizontal axes in $G_2$, while in $G_1$ they are diagonal to them. 
In each front, the horizontal distances between the origin and each level may differ; however, a front cannot have ``vertical jumps''; i.e., the horizontal distance to the origin differs by at most one between vertically adjacent levels of the same front.
We call this structure the \emph{fronts structure} (see \Cref{fig:front_structure}).

\textbf{Movement of the levels.} The horizontal distance between the origin and a level can never decrease, only increase by one or stay the same. 
When we decide to increase this distance in a certain direction, we say that that level \emph{advances}, and otherwise that is it \emph{still}.
We say that a level is \emph{burning} (at some time) if at least one vertex on it is burning.
A burning level that advances is called \emph{active}. The only possible way for a non-burning level to advance is when ``pulled'' by an adjacent advancing level (burning or not) in order to maintain the ``no vertical jumps'' property. 
To show that the fire cannot be contained, we will prove that the fronts structure advances indefinitely.
In previous works, burning levels would always advance (see~\cite{feldheim20133}). In this work, we may decide to keep a burning level still for some time. This new approach allows us to better control the interplay between the the number of burning vertices on a level and the length of horizontally adjacent levels.

\textbf{Fierity and Potential.} We keep count of protected and burning vertices encountered by the fronts structure. 
This could be thought of as a time change of the original process which is non-homogeneous in space, in the sense that different fronts undergo a different time change.
We call the number of burning vertices on the $k$-th level of the $i$-th front its \emph{fierity} and denote it by~$\vp^k_i$.
When a level is burning, surely in the next time step the horizontal neighbors of its burning vertices are either burning or protected. An argument of Fogarty~\cite{fogarty2003catching} uses this observation to show that sum of the fierity $\vp^k_i$ of a burning level and the number~$f^k_i$ of newly counted protected vertices increases by at least~$q$ (when playing on $G_q$). 
In our situation vertices may shift from one firefront to an adjacent front; to keep track of these transitions we count the difference between vertices shifted into the $k$-th level of the $i$-th front and out of it, which we denote by $p^k_i$.

Putting these together we define~$\mu^k_i$, the \emph{potential} of the $k$-th level of the $i$-th front, as the sum of a level's fierity, protected and shifted vertices. A variant of Fogarty's argument is used to show that for an active level this number increases by at least~$q$.

Much of the effort would have been spared if we could have just followed the footsteps of~\cite{feldheim20133} and inductively prove that at any given time, there are at least~$3h$ active levels.
Via the corresponding potential increase, it would show that the fire cannot be contained.
Unfortunately, this is not always the case; indeed, there could be times at which there are fewer than~$3h$ active levels. 
We show, that when such a time occurs, this gap is compensated within $2h$ time-steps. Therefore, the average increase in the number of burning vertices on the front structure is still~$3h$.
    
\textbf{Overflow.} The source of additional growth in the fierity of a level comes from vertical \emph{overflow}. 
That is a situation in which the fire spreads from burning vertices above or below some level to that level. 
Overflow occurs when an active level is pulling another level. 
In order to guarantee sufficient overflow, we must allow a level to be active only when it has sufficiently many vertices, so that if it overflows to a neighbor, it will \emph{significantly} increase its fierity. 

\noindent\textbf{Remark.}
We strive to provide a unified proof for $q=1,2$, instead of duplicating the proof with minor changes.
In~\cite{feldheim20133} a proof for $q=2$ sufficed to imply the case $q=1$ using the observation that $\Z\boxtimes\Z$ is contained in the square of $\Z\;\Box\;\Z$, so playing on $\Z\boxtimes\Z$ is at least as easy for the player as playing on $\Z\;\Box\;\Z$ with twice as many firefighters.
For $h>1$, however, the connection between $G_1$ and $G_2$ is not as obvious, since the square of $G_1$ also allows double vertical steps.

Upon a first reading, the reader is advised to focus on~$G_2$ and ignore the case $q=1$.  
\subsection{Outline of the paper}\label{subsec:paper outline}
\Cref{sec:Preliminaries} consists of the definitions required to formulate the firefighter problem and the evolution of the firefronts structure given an activity criterion, concluding with a reduction of \Cref{thm:1} to a lower bound on the potential of the firefronts structure (\Cref{prop:inductive-step}).
In \Cref{sec:Fire growth} we define a finer notion of potential, which treats each firefront separately. We further establish key inequalities controlling the growth of this potential (\Cref{prop:mu>a>0} and~\Cref{prop:mu>vp}).
\Cref{sec:Main proof} is dedicated to the proof of \Cref{prop:inductive-step} and auxiliary lemmata. 
\subsection{Discussion}\label{subsec:Discussion}
In this paper we equip $\Z^2\times [h]$ with vertical connectivity of a Cartesian product, which is weaker than the strong Cartesian product. Hence we have $C(\Z\;\Box\;\Z\boxtimes[h])=\frac{3}{2}h$ and $C(\Z\boxtimes\Z\boxtimes[h])=3h$ as an immediate corollary of~\cref{thm:1}.
Indeed, the statement of~\cref{thm:1} only implies a lower bound on the critical protection rate, but the same containment strategy used to prove~\cref{thm:1}, which is repeating a horizontal containment strategy in each layer, would work here too (and also for any intermediate vertical connectivity).
We believe that one can use the same methods to obtain the same critical protection rates for any weaker regular connectivity, that is, a connectivity in which in every $k\times k$ horizontal square there is a vertex connected to the vertex above it (except in the topmost layer). For example, one can consider a connectivity in which $(x_1,y_1,z_1),(x_2,y_2,z_2)$ with $z_1\ne z_2$ are neighbors, if and only if $|z_1-z_2|=1,(x_1,y_1)=(x_2,y_2)$ and $x_1+y_1$ is divisible by~$3$. 
We also tend to believe that even if this connectivity is further weakened by subdividing every vertical edge into a path constant length, the same result holds. 
To deal with weaker connectives one needs to change the definition of a front structure and the activity of the fronts in order to guaranteed overflow. We have decided not to do to, in order to increase the readability of our work.

The main technical innovation of the paper, is that the analysis of the fire is not done by estimating the number of burning vertices on the rectangular boundary of the burning region at every time-step, but rather on the rectangular boundary of an artificially chosen smaller domain. This allows the classical techniques from potential theory, to ignore thin stretches of burning vertices which reduce the number of burning site on the rectangular boundary of the burning region. 

For $h=1$, the lower bound of~\cite{feldheim20133} is quite tight: for $\ff(t)\le \frac32q t$ it is impossible to contain even the smallest possible initial fire --- one burning vertex.
Clearly, this no longer makes sense for $h>1$. Indeed, the maximal degree in~$G_q$ is~$2+4q$, so it is possible to contain, in a single time-step, any initial fire of size
\[
    \frac14h \le \frac{\frac32qh}{2+4q}.
\]
Furthermore, it is possible to have an initial fire of size at least $\frac1{32}q h^3$
that can be contained in~$h$ time-steps for $\ff(t) = \frac32qht$. Assume for simplicity that $h$ is divisible by~$8/q$, and set $r=\frac34h-1$.
The pyramid-shaped construction (see~\Cref{fig:pyramid}) goes as follows: in the bottom layer we have a square (resp., diamond) of side (resp., diagonal) length~$r$, which forms the base of the pyramid. The choice of~$r$ makes it possible to horizontally surround the base in a single time-step. Above the base we have a smaller square/diamond, of side/diagonal length~$r-2$, which grows to a square/diamond of side/diagonal length~$r$ in the second time-step. Again, there are just enough firefighters to horizontally surround the fire on layer~$2$, by placing them directly above the layer~$1$ firefighters. By time~$h$, the cylinder of firefighters will reach the top layer, and the fire will be contained.

It is quite easy to see that for $q=2$ the initial fire consists of
\[
        r^2 + (r-2)^2 + (r-4)^2 + \cdots
    =   \frac{r(r+1)(r+2)}6
    =   \frac{(r+1)^3}6 - \frac{r+1}6
    =   \frac9{128}h^3 - \frac18h
    \ge \frac1{16}h^3
\]
burning vertices, and for $q=1$ there are
\[
        \frac{1+r^2}2+\frac{1+(r-2)^2}2+\cdots+\frac12
    =   \frac{r(r+1)(r+2)}{12} + \frac{r+1}4
    >   \frac{(r+1)^3}{12} 
    =   \frac9{256} h^3
    >   \frac1{32} h^3
\]
burning vertices.
\begin{figure}[h!]
    \centering
    \captionsetup{margin=1cm}
	\tdplotsetmaincoords{65}{35}
	\begin{tikzpicture}[line join=round]
    \begin{scope}[tdplot_main_coords]
    \path[get projections];
	\foreach \z/\a/\b/\c/\d in { 0/ 2/ 2/ -2/ -2
                                         ,1/ 1/ 1/ -1/ -1}
            {
			\pgfmathsetmacro{\aa}{\a-1}
                
                \foreach \xi in {\d,...,\b}
                {
                        \path (\scaling*\xi,\scaling*\a,\scaling*\z) pic{unit cube};
                }
                \foreach \yi in {\aa,...,\c}
                {
                        \path (\scaling*\d,\scaling*\yi,\scaling*\z) pic{unit cube};
                }
                \foreach \xi in {\d,...,\b}
                {
                        \path (\scaling*\xi,\scaling*\c,\scaling*\z) pic{unit cube};
                }
                \foreach \yi in {\a,...,\c}
                {
                        \path (\scaling*\b,\scaling*\yi,\scaling*\z) pic{unit cube};
                }
            }
        \path (\scaling* 0,\scaling* 0,\scaling* 2) pic{unit cube};
	\end{scope}
    \end{tikzpicture}
    \tdplotsetmaincoords{65}{55}
    \begin{tikzpicture}[line join=round]
    \begin{scope}[tdplot_main_coords]
        \path[get projections];
        \path (\scaling*-2,0,0) pic{unit cube};
        \foreach \y in {1,0,-1}
            {\path (\scaling*-1,\scaling*\y,0) pic{unit cube};}
        \foreach \y in {2,1,0,-1,-2}
                    {\path (0,\scaling*\y,0) pic{unit cube};}
        \foreach \y in {1,0,-1}
            {\path (\scaling,\scaling*\y,0) pic{unit cube};}
        \path (\scaling*2,0,0) pic{unit cube};
        
        \path (\scaling*-1,0,\scaling) pic{unit cube};
        \foreach \y in {1,0,-1}
            {\path (0,\scaling*\y,\scaling) pic{unit cube};}
        \path (\scaling,0,\scaling) pic{unit cube};
        \path (0,0,\scaling*2) pic{unit cube};
        
    \end{scope}
    \end{tikzpicture}
    \caption{
        An initial fire in~$G_q$ of size at least $qh^3/32$ that can be contained with~$\frac32qh$ firefighters per turn.
        Here~$h=8$, $r=5$, and indeed $5^2+3^2+1^2 = 35 > 8^3 / 16$ for $q=2$ and $13+5+1 = 19 > 8^3 / 32$ for $q=1$.
    }
    \label{fig:pyramid}
\end{figure}
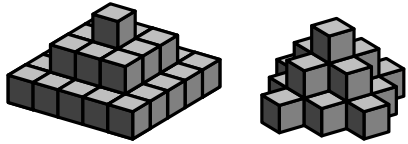
\subsection{Open problems}\label{subsec:OpenProblems}
In the section we present what we see as the most interesting problems concerning the firefighter problem on lattices with planar type growth, which we hope that this new technique could help tackling.

\begin{problem}\label{prob:pr1}
    Given any local connectivity for $\Z\times\Z$, compute the critical protection rate.
    In particular, compute \emph{$C(\Z\;\hexagon\;\Z)$} and  $C(\Z\;\triangle\;\Z)$.
\end{problem}

For $\Z\;\hexagon\;\Z$, at first sight the problem seems similar to $\Z\boxtimes\Z$, and indeed several authors claimed to have solved it only to find an error in their arguments. 
Even the second and third authors have claimed in~\cite{feldheim20133} that this problem seems approachable using the same techniques. 
However, the fact that the total length of three adjacent hexagonal fronts is not necessarily half of the circumference makes the current analysis fail. 
It appears that part of the reason that current methods fall short of tackling the hexagonal lattice, is the fact that they player can create a situation in which there are many burning vertices on the convex boundary of the burning domain, but not on the bounding hexagon containing it. 
It seems worthwhile to try and extend the techniques presented here to take into account these burning vertices.

\textbf{Continuous variant.}
In 2007 Bressan~\cite{bressan2007differential} introduced the following continuous space-time variant of the problem, which we present here as a scaling limit of a continuous space-discrete time process.
Consider a starting fire~$F(0)$ that is a path-connected set in~$\mathbb{R}^2$ containing the origin. At every~$\varepsilon$ time-step, the player adds to the set of protected paths~$P(t)$, initialized as the empty set, a segment of length~$c \cdot \varepsilon$ (in some metric~$d_\text{protect}$), disjoint from the fire.
After this, the fire~$F(t+\varepsilon)$ is increased to be the connected component of origin in~$F(t)+B_\varepsilon\setminus P(t)$, where~$B_\varepsilon$ is the~$\varepsilon$ ball in some metric~$d_\text{fire}$.
Taking~$\varepsilon$ to~$0$, a continuous variant of the firefighter problem arises where~$C(d_\text{protect}, d_\text{fire})$ is the infinimum of the set of~$c$ for which the fire could be contained.
For example, $C(L_\infty, L_1) = 1.5$ and $C(L_1, L_\infty) = 3$ by taking scaling limit over the results of~\cite{feldheim20133}.

In the natural setting, when~$B_\varepsilon$ is the Euclidean ball and segment length is measured in Euclidean metric, the answer is conjectured by Bressan to be~$2$ and a tight upper bound is provided. We conjecture that the same holds for any~$L_p$ metric.
\begin{problem}\label{prob:pr2}
    In the continuous firefighter problem on $L_p$, compute $C(L_p, L_p)$.
\end{problem}

We hope that progress towards \Cref{prob:pr1} would render it possible, through analogy and through scaling limits, to obtain new results for \Cref{prob:pr2}. 
\section{Preliminaries}\label{sec:Preliminaries}
\textbf{Notation and conventions.}
The sub-lattice $\Z^2\times\{k\}$ is called the $k$-th \emph{layer} of our graph (either $G_1$ or $G_2$). We think of it as being horizontal.
Throughout the paper, we follow the convention that superscripts refer to layers (and levels), subscripts refer to directions (and fronts), and the argument of functions is time.
Moreover, we reserve the indices $i$ and $j$ for directions/fronts, the indices $k$ and $\ell$ for levels/layers, and the variables $s, t$ and $\tau$ for time.
Addition and subtraction in subscript indices is henceforth always taken modulo 4.

The notation $A\sqcup B$ refers to the union $A\cup B$ of two sets, $A$ and $B$, assumed to be \emph{disjoint}.

For a function $g$ on the integers (resp., on sets), denote by $\D g(t)$ its discrete derivative $\D g(t):=g(t)-g(t-1)$ (resp., $\D g(t):=g(t) \setminus g(t-1)$).
Sometimes it would be more convenient to directly define $\D g(t)$, and obtain $g(t)$ inductively as $g(t):=g(t-1) + \D g(t)$ for some base case $g(0)$ (resp., as $g(t):=g(t-1) \sqcup \D g(t)$).

Omitting an index (superscript or subscript) in a notation serves to represent summation (or union) over this index; e.g., $f_{i}(t):=\sum_{k\in\left[h\right]}f_i^k(t)$.
Additionally, writing~$\max$ (resp.,~$\min$) for an index represents taking the maximum (resp., minimum) over it; e.g., $r_i^{\min}(t):=\min_{k\in[h]}r_i^k(t)$.

Given a set of vertices $U\subset \Z^2\times[h]$, denote by~$U^+$ its closed neighborhood, namely vertices in $U$ or adjacent to some $u\in U$.

\textbf{Evolution of the process, given a player strategy.}
Given an initial set~$B(0) \subset\mathbb{Z}^{2}\times\left[h\right]$ of burning vertices, and an increasing function $\FF(t)_{t\in\N}$ of sets of vertices that the player would like to protect by time~$t$, we now specify the evolution of the process, by defining the set~$B(t)$ of vertices burning at time~$t$.

To avoid the question of precise timing (i.e., what happens first at time~$t$), we highlight the timeline of the game in the following table:

\begin{center}
\begin{tabular}{|c|c|c|}
\hline 
\textbf{Time} & \textbf{Event} & \textbf{Updated quantity}\\\Xhline{2\arrayrulewidth}
$-\frac{1}{3}$ & The initial fire is created & $B(0)$ is set\\[4pt] \hline 
$t-\frac{2}{3}$ & Additional vertices are protected & 
\begin{tabular}{@{}c@{}}
    The protected set becomes \\
    $\FF(t)\setminus B(t-1)$\\[2pt]
\end{tabular}
\\[4pt] \hline
$t-\frac{1}{3}$ & Fire spreads to unprotected adjacent vertices& $B(t)$ is determined\\[4pt] \hline
$t$ & Nothing &\\\hline 
\end{tabular}
\end{center}

We inductively define $B_{\FF}(t)=B\left(t\right)$ for all $t\in\N$ by
\begin{equation}\label{eq:main-evo}
    B(t)=B(t-1)\cup \left(B(t-1)^+\setminus \FF(t)\right).
\end{equation}
Observe that here we allow~$\FF(t)$ to overlap with~$B(t-1)$; however, such overlap has no effect on the evolution of~$B(t)$. 
The reader may think of~$\D \FF(t)$ as orders given by the player concerning which vertices to protect at time~$t$, so that ``illegal'' orders (i.e., orders to protect vertices that are already burning) are simply ignored. Hence, $\FF(t)\setminus B(t-1)$ are the vertices protected just before the fire spreads further at time $t-\tfrac{1}{3}$.

In~\Cref{subsec:fronts-evolution} we will define $F(t)$, which is a refinement of~$\FF(t)$,  consisting only of effectively protected vertices; that is, vertices that would have been burned up to time~$t$, but were protected.

\textbf{Directions.} 
First we define, in a clockwise fashion, the horizontal directions for the firefronts in~$G_1$ and~$G_2$ (see~\Cref{sec:introintroduction}).
For~$G_1$, these are 
\begin{align*}
\theta_{0}&:=(+\tfrac12,+\tfrac12,0), &
\theta_{1}&:=(+\tfrac12,-\tfrac12,0), \\
\theta_{2}&:=(-\tfrac12,-\tfrac12,0), &
\theta_{3}&:=(-\tfrac12,+\tfrac12,0); \\
\shortintertext{while for~$G_2$, these are}
\theta_{0}&:=(0,+1,0),&
\theta_{1}&:=(+1,0,0),\\
\theta_{2}&:=(0,-1,0),&
\theta_{3}&:=(-1,0,0).
\end{align*}
We denote by $\I:=\{0,1,2,3\}$ the index set of horizontal directions. Recall that we use these always modulo~$4$; in particular we write $|i-j|=1$ if $i$ and $j$ are consecutive modulo~$4$ (e.g., $i=3,j=0$). 

In addition, let~$\phi:=(0,0,1)$ denote the vertical unit vector.\footnote{Note the visual cue provided by the letter~$\theta$ (resp.,~$\phi$) to the vector's horizontal (resp., vertical) orientation.}

The (infinite) line of distance~$d$ from the origin, perpendicular to the $i$-th direction, in the~$k$-th layer,
can be written in terms of~$\{\theta_j: j\in\I\}$ and~$\phi$ as 
\[
  d\theta_i + \R\theta_{i+1} + k\phi = \{ d\theta_i + m\theta_{i+1} + k\phi: m\in\R\}.
\]
We will be concerned with vertices on finite segments of these lines, as the following notation captures.
For $a,b,d\in\N$, $i\in\I$ and $k\in[h]$, let
\[
    L^k_{i,d}(a,b):=\left\{d\theta_i+m\theta_{i+1}+k\phi\ :\ m \in [-a,b)\right\}\cap \Z^3.
\]
\begin{figure}[h!]
    \captionsetup{margin=1cm}
    \subfloat[$\scriptstyle{q=2},{d=2},{a=2},{b=3}$\label{fig:L1}]{
    \begin{tikzpicture}[scale=0.8]
        \draw[step=1cm,darkgray,thin] (-0.3,-3.3) grid (4.8,3.3);
        \draw[draw=none,darkgrayX, thick,fill=darkgrayX] (2,-2) rectangle (3,3);
        \draw [|-|, line width=0.3mm] (0.5,-3.1) -- +(2,0) node[midway,above] {$d=2$};
        \draw [|-|, line width=0.3mm] (3.3,0.525) -- +(0,2) node[midway,right] {$a=2$};
        \draw [|-|, line width=0.3mm] (3.3,0.475) -- +(0,-3) node[midway,right] {$b=3$};
        \draw [->] (0.5,0.5) -- +(0,0.3);
        \draw [->] (0.5,0.5) -- +(0.3,0);
        \foreach \x in {0, ..., 4} { \foreach \y in {-3, ..., 2} {
                \draw [gray] (\x + 0.5,\y + 0.5) circle (0.2mm);
            }
        }
    \end{tikzpicture}}
    \hfill
    \subfloat[$\scriptstyle{q=1},{d=3},{a=2},{b=5}$\label{fig:a=L2}]{
    \includegraphics{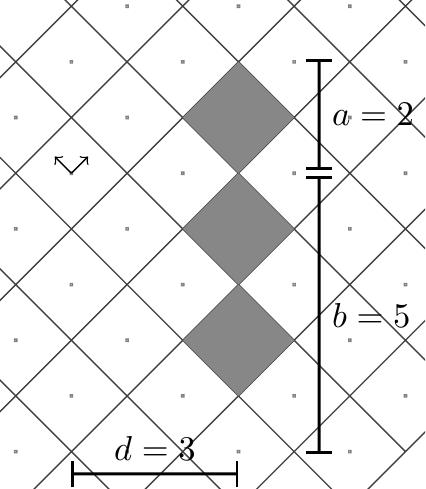}}
    \hfill
    \subfloat[$\scriptstyle{q=1},{d=4},{a=2},{b=5}$\label{fig:a=L3}]{
    \includegraphics{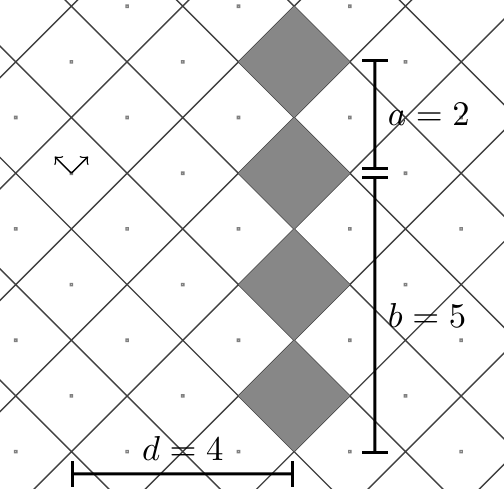}}
    \caption{Illustration of $L^k_{1,d}(a,b)$ (in dark gray) for various values of $q,d,a$ and $b$. The origin and the Cartesian axes are marked by short arrows. In each subfigure, $\theta_0$ points upwards, $\theta_1$ --- to the right, $\theta_2$ --- downwards and $\theta_3$ --- to the left.
    }
    \label{fig:fig_L}
\end{figure} 
For $G_2$ we simply have $|L^k_{i,d}(a,b)| = a+b$, but for $G_1$ the cardinality depends on the parities of $a+d$ and $b+d$ (see~\Cref{fig:fig_L}) and we have
\[
        |L^k_{i,d}(a,b)|
    =   \frac12(a+b) + \frac14(-1)^{a+d} - \frac14(-1)^{b+d}
    \le \frac12\left(a+b+1\right).
\]
For simplicity we use the unified bound
\begin{equation}\label{eq:Lab-length}
        |L^k_{i,d}(a,b)|
    \le \frac q2\left(a+b+2-q\right)
    \le \frac q2\left(a+b+1\right).
\end{equation}
\subsection{The fronts structure}
Given two vectors $\vec x=(x^1,\ldots,x^h),\vec y=(y^1,\ldots,y^h)$, we say that $\vec x$ \emph{dominates} $\vec y$ if $y^k\le x^k$ for all $k\in[h]$.
Call a vector $\vec x \in \N^h$ \emph{Lipschitz} if $|x^{k+1}-x^{k}|  \le 1$ for all $k\in [h-1]$.
Given a vector~$\vec x$, denote by $\lip(\vec x)$ the minimal (with respect to domination) Lipschitz vector that dominates~$\vec x$.

Observe that for any Lipschitz vector $\vec x\in \N^h$ and for any Boolean vector $\vec y\in \{0,1\}^h$ we have $\lip(\vec x+\vec y)-\vec x \in \{0,1\}^h$, since $\vec x+\vec{\mathbf{1}}$ is Lipschitz and dominates $\vec x+\vec y$, where $\vec{\mathbf{1}}\in\N^h$ is the all-ones vector.

We now define the firefronts structure corresponding to four Lipschitz vectors $\vec\rho_0, \vec\rho_1, \vec\rho_2, \vec\rho_3$ (see \Cref{fig:front_structure}).

\begin{figure}[h!]
    \centering
    \captionsetup{margin=1cm}
\foreach \Vangle in {65}
{
\tdplotsetmaincoords{\Vangle}{35} \begin{tikzpicture}[line join=round]
        \begin{scope}[tdplot_main_coords]
            \path[get projections];
            \draw [->,very thick, dotted] (\scaling*0.5,\scaling*0.5) -- +(31.6:3cm) -- +(31.9:3.04cm) node[right, scale = 1,yshift=1mm,xshift=-1mm] {$\theta_0$};
            \draw [very thick, dotted] (\scaling*0.5,\scaling*0.5) -- +(211.6:2cm) node[above, scale = 1] {$\theta_2$};
            \draw [->,very thick, dotted] (\scaling*0.5,\scaling*0.5) -- +(-16.2:2.7cm)-- +(-16.27:2.71cm) node[right, scale = 1,yshift=1mm,xshift=-1mm] {$\theta_1$};
            \draw [very thick, dotted] (\scaling*0.5,\scaling*0.5) -- +(163.8:2.5cm) node[below, scale = 1] {$\theta_3$};
            \foreach \z/\a/\b/\c/\d in { 0/ 2/ 5/ -2/ -3,
                                         1/ 1/ 4/ -1/ -3,
                                         2/ 2/ 3/ -1/ -2,
                                         3/ 3/ 2/ -2/ -2
                                         }
            {
\pgfmathsetmacro{\aa}{\a-1}
                
                \foreach \xi in {\d,...,\b}
                {
                        \path (\scaling*\xi,\scaling*\a,\scaling*\z) pic{unit cube};
                }
                \foreach \yi in {\aa,...,\c}
                {
                        \path (\scaling*\d,\scaling*\yi,\scaling*\z) pic{unit cube};
                }
                \foreach \xi in {\d,...,\b}
                {
                        \path (\scaling*\xi,\scaling*\c,\scaling*\z) pic{unit cube};
                }
                \foreach \yi in {\a,...,\c}
                {
                        \path (\scaling*\b,\scaling*\yi,\scaling*\z) pic{unit cube};
                }
            }
\end{scope}
    \end{tikzpicture}
    \hspace{13pt}
    \tdplotsetmaincoords{\Vangle+5}{215} \begin{tikzpicture}[line join=round]
        \begin{scope}[tdplot_main_coords]
            \path[get projections];
            \draw [very thick, dotted] (\scaling*0.5,\scaling*0.5) -- +(27.6:2cm) node[above, scale = 1, yshift=-1mm] {$\theta_2$};
            \draw [->,very thick, dotted] (\scaling*0.5,\scaling*0.5) -- +(207.6:2cm)-- +(207.7:2.01cm) node[above, scale = 1] {$\theta_0$};
            \draw [very thick, dotted] (\scaling*0.5,\scaling*0.5) -- +(-13.7:1.8cm) node[below, scale = 1] {$\theta_3$};
            \draw [->,very thick, dotted] (\scaling*0.5,\scaling*0.5) -- +(167.3:3cm)-- +(167.25:3.01cm) node[below, scale = 1] {$\theta_1$};
            
            \foreach \z/\a/\b/\c/\d in { 0/ 2/ 5/ -2/ -3,
                                         1/ 1/ 4/ -1/ -3,
                                         2/ 2/ 3/ -1/ -2,
                                         3/ 3/ 2/ -2/ -2
                                         }
            {
\pgfmathsetmacro{\cc}{\c+1}
                \foreach \xi in {\b,...,\d}
                {
                        \path (\scaling*\xi,\scaling*\c,\scaling*\z) pic{unit cube};
                }
                \foreach \yi in {\cc,...,\a}
                {
                        \path (\scaling*\b,\scaling*\yi,\scaling*\z) pic{unit cube};
                }
                \foreach \yi in {\c,...,\a}
                {
                        \path (\scaling*\d,\scaling*\yi,\scaling*\z) pic{unit cube};
                }
                \foreach \xi in {\b,...,\d}
                {
                        \path (\scaling*\xi,\scaling*\a,\scaling*\z) pic{unit cube};
                }
            }
        \end{scope}
    \end{tikzpicture}
}
    \caption{
        Two isometric projections of the firefronts structure corresponding to the Lipschitz vectors $\vec\rho_0=(2,1,2,3), \vec\rho_1=(5,4,3,2), \vec\rho_2=(2,1,1,2), \vec\rho_3=(3,3,2,2)$.}
    \label{fig:front_structure}
\end{figure}
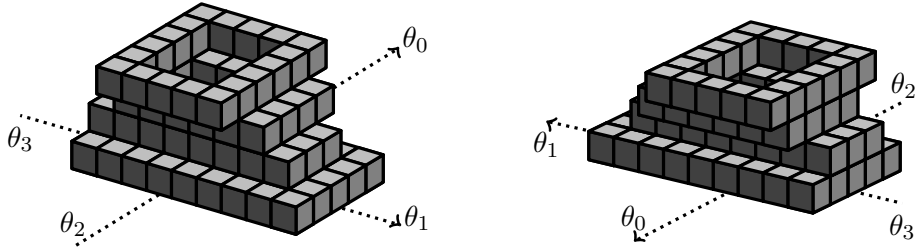 
\begin{defn}\label{def:lDef}
Given four Lipschitz vectors $\rho = (\vec\rho_0, \vec\rho_1, \vec\rho_2, \vec\rho_3)$, where $\vec\rho_i = (r^k_{i})_{k\in[h]}$, 
define the \emph{fronts structure} $(L_0(\rho),L_1(\rho),L_2(\rho),L_3(\rho))$ of $\rho$ to be $L_i(\rho)=\bigsqcup_{k\in[h]} L^k_i(\rho)$, where
\[
  L^k_i(\rho)=L^k_{i,r^k_i}(r^k_{i-1},r^k_{i+1})=\left\{r^k_i\theta_i+m \theta_{i+1}+k\phi\ :\ m\in [-r_{i-1}^k,r_{i+1}^k) \right\}\cap\Z^3.
\]
\end{defn}
Note that $L_i^k(\rho)\cap L_{j}^{\ell}(\rho)= \varnothing$ for $i\ne j\in\I$ and for all $k,\ell\in[h]$.
\subsection{Evolution of the fronts structure}\label{subsec:fronts-evolution}
As explained in \Cref{subsec:pf outline}, we inductively define $\vec\rho_i(t)=\big(r^1_{i}(t),\dots,r^h_{i}(t)\big)$,
 a non-decreasing sequence of Lipschitz vectors whose fronts structure we wish to analyze.
 
Let $\vec\rho(0)$ be some initial quadruple of Lipschitz vectors, and assume without loss of generality that $r_i^k(0)\ge 1$
for all $i\in\I$ and $k\in [h]$.
Given the initial fronts structure $\big(L_i(\vec\rho(0))\big)_{i\in\I}$, define
\begin{equation}\label{eq:r_def}
    \rho_i(t) := \lip\left(\vec\rho_i(t-1)+\vec\alpha_i(t)\right),
\end{equation}
where the Boolean vector $\vec\alpha_i(t)=(a^1_i(t),\ldots,a^h_i(t)) \in \{0,1\}^h$ consists of the \emph{activity indicators} for the levels of the~$i$-th front.
That is, we say that the~$k$-th level of the~$i$-th front is \emph{active} if and only if $a^k_i(t)=1$.
The precise definition of~$\vec\alpha_i(t)$ is delicate, and we postpone this until~\Cref{subs:precise evolution} (see~\Cref{eq:aDef});
for now, it suffices to state that it only depends on the history of the process up until time~$t-1$, and that it satisfies
two conditions related to the notion of fierity to be defined below (see~\Cref{eq:prime a property}).

Note that $\vec\rho_i(t)$ dominates $\vec\rho_i(t-1) + \vec\alpha_i(t)$, and thus $\Delta r_i^k(t) \ge a_i^k(t)$. Moreover, $\vec\rho_i(t-1)$ is Lipschitz and $\alpha_i(t)$ is Boolean so that $\D r_i^k(t) \in \{0,1\}$. In fact, summing over $k\in [h]$ we have
\begin{equation}\label{eq: being pulled}
\D  r_i(t)-a_i(t)=\norm{\vec\rho_i(t)-\left(\vec\rho_i(t-1)+\vec\alpha_i(t)\right)}_1.
\end{equation}
It may be useful for the reader to think of the change in this quantity caused by the $\lip$ operations as ``the $k$-th level of a front being \emph{pulled forward}  by some other active level of that front''.
Consider, for instance, $\vec\rho_1(t-1) = (5,4,3,2)$ of \Cref{fig:front_structure} and suppose that $\vec\alpha_1(t)=(1,0,0,0)$. In this case, only the bottom level is active and yet $\vec\rho_1(t) = (6,5,4,3)$ and $\D r_1(t) - a_1(t) = h-1$.

Henceforth, denote by $L_i^k(t):=L_i^k(\rho(t))$ the~$k$-th level of the~$i$-th front at time~$t$.

\noindent\textbf{Fierity.} Next we define the \emph{fierity}~$\varphi_i^k(t)$ of the~$k$-th level of the~$i$-th front at time~$t$.
\begin{align*}
    \Phi_i^k(t)    &:= L_i^k(t)\cap B(t),\\
    \varphi_i^k(t) &:= \left|\Phi_i^k(t)\right|.
\end{align*}
This quantity measures the number of burning vertices on a particular level. 

The definition of activity indicators in~\Cref{subs:precise evolution} will guarantee the following two fierity-related conditions.
\begin{subequations}\label{eq:prime a property}
\begin{align}
\label{eq:prime a property zero}
a^k_i(t) = 0 \mbox{\quad whenever \quad} \vp^k_i(t-1) &= 0,\\
\label{eq:prime a property one}
a^k_i(t) = 1 \mbox{\quad whenever \quad} \vp^k_i(t-1) &> 4qh^4 - 2.
\end{align}
\end{subequations}
Next, we inductively define~$\D F_i^k(t)$, the hitherto uncounted protected vertices on the~$k$-th level of the~$i$-th front at time~$t$. Setting $F_i^k(0):=\varnothing$, this defines the cumulative counterpart~$F_i^k(t)$. 
\begin{align*}
    \D F_i^k(t) &:= \left((\FF(t)\cap L_i^k(t))\setminus                               B(t)\right)\setminus F^k(t-1);\\
    \D f_i^k(t) &:= |\D F_i^k(t)|.
\end{align*}

Recall that $F^k(t)=\bigsqcup_{i\in\I}F_i^k(t)$ and $F(t)=\bigsqcup_{k\in[k]}F^k(t)$, by our index omitting convention.
Observing that $F(t)\subseteq \FF(t)$, we have $f(t)\le \ff(t)$.
Hence, we may replace the assumption $\ff(t) = ct$ by the bound
$f(t)\leq ct$.
The reader needs not be alarmed by the visual similarity between~$f$ and~$\ff$ (or between~$F$ and~$\FF$) since~$\ff$ and~$\FF$ are no longer required for the lower bound and will thus play no role in~\Cref{sec:Fire growth,,sec:Main proof} (except for a brief appearance in the proof of~\Cref{prop:Vert-Fogarty} in~\Cref{subsec:local-growth}).

\noindent\textbf{Remark.}
The task of selecting~$\vec\alpha_i(t)$ could be thought of as a solitaire sub-game, in which the player takes the challenge of finding a sequence of~$\vec\alpha_i(t)$ satisfying~\Cref{eq:prime a property} such that~$a(t)>0$ infinitely often.

\subsection{Proof of the main theorem}\label{subsec:main-statement}
\Cref{thm:1} naturally consists of an upper bound and a lower bound on $C(G_q)$. In light of the result of~\cite{feldheim20133}, we henceforth assume $h \ge 2$.
The upper bound, which is nothing more than repeating a two-dimensional strategy in each of the~$h$ layers simultaneously, is set by the following claim.
\begin{claim}\label{clm:thm-1-upper}
For $q\in[2]$ and for every $h\ge2$ we have $C(G_q) \le \frac32 qh$.
\end{claim}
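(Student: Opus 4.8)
The plan is to establish the upper bound $C(G_q)\le\frac32qh$ by lifting the known optimal two-dimensional strategy to the multi-layered graph. Recall from the cited result of~\cite{feldheim20133} that $C(\Z\;\Box\;\Z)=\frac32$ and $C(\Z\boxtimes\Z)=3$, which means that for any $\varepsilon'>0$ and any finite initial fire in the single-layer graph, there is a containment strategy using at most $(\tfrac32q+\varepsilon')t$ firefighters by time~$t$. Since a vertex of $G_q=(\Z^{\Box 2}\text{ or }\Z^{\boxtimes 2})\;\Box\;[h]$ in layer~$k$ has exactly two vertical neighbors (in layers $k\pm1$, when they exist) in addition to its horizontal neighbors within its own layer, the fire in $G_q$ spreads both horizontally (within each layer, exactly as in the single-layer model) and vertically (between consecutive layers). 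The idea is to treat the $h$ layers essentially independently, running the planar strategy in each.

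First I would fix $\varepsilon>0$ and a finite initial fire $B(0)\subset\Z^2\times[h]$. Let $\pi_k:\Z^2\times[h]\to\Z^2$ be the projection onto the $k$-th layer, and consider the ``shadow'' fire $S:=\bigcup_{k\in[h]}\pi_k(B(0))$, a single finite subset of $\Z^2$. The key observation is a domination/monotonicity argument: if in each layer~$k$ we run (a time-synchronized copy of) a planar containment strategy for the single-layer fire started from $S$, then the set of vertices ever burning in layer~$k$ of $G_q$ is contained in $S' \times\{k\}$, where $S'\subset\Z^2$ is the (finite) region that the planar strategy contains. The point is that vertical spread from layer~$k\pm1$ into layer~$k$ can only ignite vertices whose horizontal projection is already in the burning shadow of the neighboring layer — but by induction all layers have the same horizontal burning shadow $S'$, so vertical overflow never enlarges the horizontal footprint beyond $S'$. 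Formally one proves by induction on~$t$ that $\pi_k(B(t))\subseteq \beta(t)$ for every $k$, where $\beta(t)\subset\Z^2$ is the burning set at time~$t$ of the single planar process (with fire $S$ and the chosen planar strategy), using~\cref{eq:main-evo} and the fact that the closed neighborhood in $G_q$ of a set $A\times\{k\}$ projects into $A^+$ (planar closed neighborhood) in layer~$k$ and into $A$ in layers $k\pm1$.

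The firefighter budget is the only quantity to check. In layer~$k$ we place firefighters exactly at $\sigma(t)\times\{k\}$, where $\sigma(t)\subset\Z^2$ is the protected set prescribed by the planar strategy at time~$t$; since that strategy uses at most $(\tfrac32q+\tfrac\varepsilon2)t$ firefighters (valid for all sufficiently small positive slack, hence we may take slack $\varepsilon/2$), the total number of firefighters used in $G_q$ by time~$t$ is at most $h\cdot(\tfrac32q+\tfrac\varepsilon2)t = (\tfrac32qh + \tfrac{\varepsilon h}{2})t$. Rescaling the slack (replace $\varepsilon$ by $\varepsilon/h$ at the outset) gives a strategy with at most $(\tfrac32qh+\varepsilon)t$ firefighters that contains the fire: indeed, the planar strategy contains $\beta(t)$, so $\beta(t)$ is eventually constant, hence every $\pi_k(B(t))$ — and therefore $B(t)$ itself — is eventually constant. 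This proves $C(G_q)\le\frac32qh$.

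I do not expect a genuine obstacle here; the statement is the ``easy direction'' and the argument is a routine monotone-coupling. The one point requiring minor care is the bookkeeping of timing — aligning the cumulative budget $\ff(t)$ of the $G_q$-game with $h$ copies of the planar game's budget, and confirming that the definition of $C$ (containment for every $\varepsilon>0$ and every finite $B(0)$) is met after absorbing the factor $h$ into the slack — together with verifying that one may indeed start each planar copy from the common finite shadow $S$ rather than from $\pi_k(B(0))$ (this only makes the fire larger, which is harmless since the planar critical rate is attained for arbitrary finite initial fires). The mildly delicate modeling subtlety is that a planar containment strategy a priori guarantees containment only for $\ff(t)=(\tfrac32q+\varepsilon')t$ with $\varepsilon'>0$, never at the critical slope itself; but that is exactly what the definition of $C(G_q)\le\frac32qh$ requires, so there is nothing lost.
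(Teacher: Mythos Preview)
Your proposal is correct and follows essentially the same route as the paper: project the initial fire to a common planar shadow, run the known two-dimensional strategy (from~\cite{feldheim20133}) identically in all $h$ layers, and absorb the factor~$h$ into the slack~$\varepsilon$. You are in fact more explicit than the paper about the domination step (the induction showing $\pi_k(B(t))\subseteq\beta(t)$), which the paper leaves implicit.
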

\begin{proof}
We need to show that for any~$\varepsilon>0$ and any ${\ff(t)=(\frac32qh +\varepsilon)t}$, there exists a strategy that allows containment of the fire.
Fix~$\varepsilon>0$ and let $\varepsilon\p := \varepsilon/h$. 
Let~$S\subset G_q$ be the initial set of burning vertices. By taking the union of horizontal projections, we can find a finite $S\p\subset \Z^2$ such that $S\subseteq S\p\times[h]$. 
In~\cite[Section 3]{feldheim20133}, the second and third authors show the existence of a strategy capable of containing any finite-source fire in~$\Z\;\Box\;\Z$ (resp.,~$\Z\boxtimes\Z$) for $\ff(t) \ge (\frac32+\varepsilon\p)t$ (resp., $\ff(t) \ge (3+\varepsilon\p)t$). We refer to this strategy, applied to the initial set $S\p$, as the layer strategy.
To achieve the upper bound for~$G_q$, apply the layer strategy in each layer in parallel to obtain a containment strategy on~$G_q$ for
\[
        \ff (t) 
    \ge \left(\frac{3qh}2 + h\varepsilon\p\right)t 
    =   \left(\frac{3qh}2 + \varepsilon\right)t.\qedhere
\]
\end{proof}

The lower bound is expressed by the somewhat technical~\Cref{lem:technical}. To state it, we require the following constant.
Let
\[
    \lambda := h + q\max\{r_0(0) + r_2(0), r_1(0) + r_3(0)\}.
\]
In light of~\Cref{eq:Lab-length}, $\lambda$ is an upper bound on the total length of any two initial firefronts.
\begin{lem}\label{lem:technical}
    Let $q\in[2]$, and suppose that
    \begin{equation}\label{eq:large-vp0}
        \vp(0) \ge \lambda + 55h^5
    \end{equation}
    and that
    \begin{equation}\label{eq:small-f}
        f(t) \le \frac{3qh}2t \text{ for all $t\in\N$.}
    \end{equation}
    Then $\vp(t)+f(t)\ge 3qht$ for all $t \in \N$.
\end{lem}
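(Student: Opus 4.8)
The claim is an induction on $t$, and the engine behind it is the fierity/potential bookkeeping set up in the preceding sections. First I would reduce the desired inequality $\vp(t)+f(t)\ge 3qht$ to a statement about the finer potential $\mu^k_i(t)$ introduced in \Cref{sec:Fire growth}, using the fact (to be proved via the Fogarty-type \Cref{prop:mu>a>0}, \Cref{prop:mu>vp} alluded to in the outline) that each \emph{active} level contributes an increase of at least $q$ to its own potential. So the crux is to show that, averaged over time, there are at least $3h$ active levels per step. The plan is to run the induction in the form ``$\vp(t)+f(t)-3qht$ is non-decreasing up to a bounded error that is recouped within $2h$ steps.''

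Concretely, I would first establish the per-step identity/inequality: $\Delta\big(\vp(t)+f(t)\big) \ge q\cdot(\text{number of active levels at time }t) + (\text{overflow gains}) - (\text{shift corrections})$, where the shift corrections telescope to zero when summed over the four fronts (this is the role of the $p^k_i$ quantities — vertices moving between adjacent fronts cancel in the global count). The overflow term is non-negative and is exactly what the activity criterion in \Cref{subs:precise evolution} is designed to guarantee whenever a level ``pulls'' a neighbor. Then the whole problem becomes: show $\sum_{i}a_i(t) \ge 3h$ for ``most'' $t$, and bound the deficit. Following the outline, I expect one proves that whenever $a(t)<3h$ at some time, the configuration of fronts is constrained (some front has ``caught up'' to a short length or a level's fierity is small), and one shows the fronts structure must advance enough in the next $\le 2h$ steps to compensate — i.e. $\sum_{s=t}^{t+2h}a(s) \ge 3h(2h+1) - O(h^?)$, with the loss absorbed once and for all by the slack $55h^5$ in hypothesis~\Cref{eq:large-vp0}. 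The hypothesis $\vp(0)\ge\lambda+55h^5$ is there precisely to pay for (a) the initial front lengths $\lambda$ that get ``counted'' as protected/shifted without being genuinely new, and (b) the finitely-many-steps-worth of deficit $O(h^5)$ accumulated while fronts rearrange; the bound \Cref{eq:small-f} is used to control how fast $f$ can grow, preventing the player from hiding progress in protected vertices.

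I would carry it out as: (1) state and invoke the finer-potential inequalities from \Cref{sec:Fire growth} to get $\Delta\mu^k_i(t)\ge q a^k_i(t)$ for active levels and $\mu^k_i\ge\vp^k_i$, $\mu^k_i\ge a^k_i\ge 0$ type bounds; (2) sum over $k,i$ to relate $\mu(t):=\sum\mu^k_i(t)$ to $\vp(t)+f(t)$ up to the telescoping shift terms and the initial-length constant $\lambda$, yielding something like $\vp(t)+f(t) \ge \mu(t) - \lambda \ge \mu(0) + q\sum_{s\le t}a(s) - \lambda$; (3) prove the averaged activity lower bound $\sum_{s\le t}a(s)\ge 3ht - (\text{const}\cdot h^5)$ by the ``compensation within $2h$ steps'' argument, which is the main combinatorial lemma; (4) combine, using $\mu(0)\ge\vp(0)\ge\lambda+55h^5$ to clear all error terms, to conclude $\vp(t)+f(t)\ge 3qht$.

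\textbf{Main obstacle.} The hard part is step~(3): controlling times with fewer than $3h$ active levels. Unlike \cite{feldheim20133}, where every burning level advances and one simply counts $\ge 3h$ active levels at all times, here the new freedom to keep a burning level \emph{still} means the activity criterion is subtle, and one must show that any ``slack'' created by stalling a level (to accumulate enough vertices to guarantee overflow later) is repaid within a bounded window. This requires simultaneously tracking the lengths $r^k_i$, the fierities $\vp^k_i$, the Lipschitz ``pulling'' interactions between levels of a front, and the vertical overflow between layers — and arguing that the fronts structure as a whole cannot stall for more than $O(h)$ steps without the fire having grown proportionally. I expect this is where \Cref{prop:inductive-step} and the auxiliary lemmata of \Cref{sec:Main proof} do the real work, and where the specific constants ($4qh^4$, $55h^5$, the $2h$ window) are forced.
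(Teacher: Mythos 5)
Your high-level outline matches the paper's architecture in several respects: you correctly identify the induction on $t$, the role of the finer potential $\mu_i^k$, the telescoping of the shift terms $p_i^k$ (indeed the paper proves the exact identity $\mu(t)=\vp(t)+f(t)$, not merely an inequality up to $\lambda$), and the use of the slack $55h^5$ to absorb a bounded number of deficit steps. You also correctly locate where the real work is. However, the intermediate claim you propose in step~(3) --- a global averaged activity bound of the form $\sum_{s\le t}a(s)\ge 3ht - O(h^5)$ --- is not what the paper proves and, more importantly, is very likely \emph{false}, so the plan as written has a genuine gap.

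The issue is the ``pulling'' mechanism. When a level of some front advances because it is pulled by an adjacent active level, we have $\D r_i^k(t)=1$ but $a_i^k(t)=0$. In a scenario where a front starts with one very fiery level and the others empty, the activity count on that front grows only like $1,2,\dots,h$ over the first $h$ steps, losing roughly $h^2/2$ from the ``ideal'' count of $h$ per step. Such pulling episodes can recur, so the cumulative deficit in $\sum_s a(s)$ need not be a constant; it can grow with~$t$. The paper sidesteps this entirely by bounding the \emph{potential increments} rather than the activity count: Proposition~\ref{prop:mu>r} gives $\D\mu_i(t)\ge q\,\D r_i(t)$, which is strictly stronger than $\D\mu_i(t)\ge q\,a_i(t)$ precisely because $\D r_i(t)\ge a_i(t)$ with a gap accounted for by overflow (Lemma~\ref{lem:PullAndOverflow}). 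Your per-step inequality does mention overflow, but then your step~(3) collapses back to an activity count, discarding exactly the term that rescues the argument during pulling.

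A second, related gap is the feedback loop with the inductive hypothesis. The paper's inductive step does not just run a self-contained combinatorial argument about the fronts structure; it \emph{uses} $\mathcal{H}(s)$ for $s<t$ (via Proposition~\ref{prop:fierity-of-two}) to show that at any two nearby times, the combined fierity of any two fronts is at least $8qh^5$, so that not all fronts can stall simultaneously. Without some device of this sort, the possibility that all four fronts enter a low-fierity, low-growth state at once is not ruled out, and the ``compensation within $2h$ steps'' claim cannot be established. This feedback is invisible in your plan. Finally, the paper's actual bookkeeping is not a synchronous global count but a ``controlled times'' argument where each front $i$ carries its own time stamp $s_i$ (within $2h$ of the others) at which $g_i(s_i)=0$ and the per-front potentials $\mu_i(s_i)$ are tracked; summing these asynchronous per-front credits and using monotonicity of $\mu_i$ is what ultimately yields $\mathcal{H}(t)$. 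None of this is fatal to your plan's spirit --- the fix is to replace your step~(3) by potential-increment accounting together with the inductive fierity bound --- but as stated the middle of your argument relies on an intermediate claim that would not hold.
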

\begin{proof}[Reduction of the lower bound in~\Cref{thm:1} to \Cref{lem:technical}]
Set $r_i^k(0) = 30h^4$ for all $k\in[h]$ and $i\in\I$, and pick the initial burning set to be $B(0) = L(0) = \sqcup_{i\in\I} L_i(0)$.
Observe that $r_i(0)=30h^5$ and $\vp_i(0) = |L_i(0)| = 30q h^5$ for all $i\in\I$, so indeed
\[
    120qh^5 = \vp(0) \ge \lambda + 55h^5 = h + 60qh^5 + 55h^5.
\]
Set $c=\frac32qh$ and apply~\Cref{lem:technical} to deduce that $\vp(t)+f(t) \ge 2ct$ for all $t\in\N$. 
Now $\vp(t)\ge ct$ since $f(t)\le ct$, and by the definition of~$\vp(t)$ we obtain $|B(t)| \ge \vp(t) \ge ct$.
Thus the fire expands indefinitely and is never contained.
\end{proof}

To prove~\cref{lem:technical} we use an inductive argument. 
The induction hypothesis is
\[
    \text{\hyp{t}}: \quad\vp(t) + f(t) \ge 3qht + \lambda + 52h^5.
\]
The base case~\hyp{t} for $t\le 2h$ is implied by~\Cref{eq:large-vp0} since $3h^5 > 3qht$ for all $t\le 2h$. It thus remains to prove the following proposition, which is the inductive step. This will be done in~\Cref{subsec:induction-step}.
\begin{prop}\label{prop:inductive-step}
    Fix $t > 2h$.
    Assuming~\Cref{eq:large-vp0} and~\Cref{eq:small-f}, if~\hyp{s} holds for all $s<t$ then~\hyp{t} holds.
\end{prop}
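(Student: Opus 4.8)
The plan is to prove \hyp{t} from the hypotheses \hyp{s} for $s < t$ by a careful accounting of the potential increase over the window of time steps leading up to~$t$. The rough idea, as sketched in \Cref{subsec:pf outline}, is that at ``most'' time steps there are at least $3qh$ active levels, each contributing an increase of at least~$q$ to its potential $\mu_i^k$, and hence the global quantity $\vp+f$ increases by at least $3qh$ per step; the technical content is handling the time steps where fewer than $3qh$ levels are active and showing such a deficit is compensated within $2h$ steps, which is exactly why the window of length $2h$ (and the slack $52h^5$, $55h^5$ in the induction hypothesis) appears.

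First I would recall from \Cref{sec:Fire growth} the finer potential $\mu_i^k(t) = \vp_i^k(t) + f_i^k(t) + p_i^k(t)$, together with the two key growth inequalities (\Cref{prop:mu>a>0} and \Cref{prop:mu>vp}) controlling $\D\mu_i^k$: for an active level the potential increases by at least~$q$, and a level with large fierity retains large fierity. Summing $\mu_i^k$ over all $i\in\I$ and $k\in[h]$, the shift terms $p_i^k$ cancel telescopically (vertices leaving one front enter an adjacent one), so $\mu(t) = \vp(t) + f(t)$ and it suffices to lower-bound $\D\mu$ over the last few steps. The main case is when the number $a(t) = \sum_{i,k} a_i^k(t)$ of active levels is at least $3qh$ at every step $s$ with $t-2h < s \le t$: then \hyp{t-1} (or \hyp{t-2h}) plus $2h$ applications of $\D\mu(s) \ge q \cdot a(s) \ge 3q^2 h \ge 3qh$ immediately yields \hyp{t} with slack to spare, using that $f(t) \le \frac{3qh}{2} t$ keeps the constant terms consistent.

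The hard part — and the step I expect to be the main obstacle — is the deficient case: some step $s_0$ in the window has $a(s_0) < 3qh$. Here I would argue as in the outline that a front which fails to be active must have a short level (otherwise the activity criterion from \Cref{subs:precise evolution}, via \Cref{eq:prime a property}, would force activity once fierity grows large), and a short level forces a horizontally adjacent level of a neighbouring front to be long and burning, hence eventually active; moreover when an active level pulls a non-burning neighbour forward, vertical \emph{overflow} injects fire into that neighbour, boosting its fierity by an amount proportional to its length. Quantitatively one shows that whatever is ``lost'' at step $s_0$ relative to the target rate $3qh$ is recouped by extra potential gain (via overflow-boosted fierity feeding into future active levels) over the next at most $2h$ steps — this is the role of the window length. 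Assembling this requires a somewhat intricate ledger: track, over $t-2h < s \le t$, the running sum $\sum_s (\D\mu(s) - 3qh)$ and show it is at least $-(52h^5 - 52h^5) = 0$ up to the fixed polynomial slack absorbed into the induction hypothesis, using \Cref{eq:Lab-length} to bound front lengths by $O(h^5)$ and \Cref{eq:large-vp0} to guarantee we are already in the ``large fierity'' regime where overflow is available.

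To organize the deficient case cleanly I would isolate it as one or two auxiliary lemmata (to be proved in \Cref{sec:Main proof} alongside \Cref{prop:inductive-step}): one lemma quantifying the overflow gained when a level is pulled, and one ``compensation lemma'' asserting that $\sum_{s=\tau+1}^{\tau+2h} a(s) \ge 3qh \cdot 2h$ whenever $\vp$ is sufficiently large at time~$\tau$ (so that the average activity over any length-$2h$ window meets the target even though individual steps may fall short). Given such a compensation lemma, \hyp{t} follows by telescoping $\mu$ from $t-2h$ to $t$: $\mu(t) - \mu(t-2h) \ge q\sum_{s=t-2h+1}^{t} a(s) \ge 3q^2h \cdot 2h \ge 3qh \cdot 2h$, and combining with \hyp{t-2h} and the fact that the polynomial slack $52h^5$ dominates the discrepancy $3qh \cdot 2h = O(h^3)$ closes the induction. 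The delicate point throughout is verifying the activity criterion is consistent with \Cref{eq:prime a property} while still furnishing enough active levels on average — i.e. that the solitaire sub-game of choosing $\vec\alpha_i(t)$ can indeed be won — and it is there that the bulk of the case analysis in \Cref{sec:Main proof} will be spent.
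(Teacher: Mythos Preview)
Your high-level picture --- average potential growth of $3qh$ per step with deficits compensated within $2h$ steps --- matches the paper, but the concrete mechanism you propose does not work and leaves the essential difficulty open.

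First a small but symptomatic arithmetic slip: there are only $4h$ levels in total, so $a(s)\le 4h$; the target is $a(s)\ge 3h$ (so that $\D\mu(s)\ge qa(s)\ge 3qh$), not $a(s)\ge 3qh$, which is impossible for $q=2$. More seriously, your ``compensation lemma'' $\sum_{s=\tau+1}^{\tau+2h} a(s)\ge 6h^2$ is exactly the statement that needs proof, and you give no argument for it beyond restating the intuition. In fact the paper does \emph{not} prove (and likely cannot prove) a bound on the total activity $a(s)$ over a fixed window: the extra growth coming from overflow is captured as $\D\mu_i^\ell(t)\gg q$ for a \emph{single} pulled level (\Cref{lem:PullAndOverflow}), not as additional active levels, so bounding $\sum a(s)$ from below is the wrong target.

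The paper's actual argument differs structurally. It tracks each front separately at its own time $s_i$, allowing the four times to be staggered by up to $2h$, and maintains the invariant $\sum_i\mu_i(s_i)\ge \vp(0)+qh(\sum_i s_i - s_{\min})$. This asymmetric bound concedes that \emph{one} front may lag indefinitely; the remaining three supply the $3qh$ rate. Advancing the lagging time $s_i$ uses the indicator $g_i$ (which you do not mention) together with \Cref{lem:badTimes} and \Cref{lem:g_i(t)<8h^5}, and the reason only one front can lag is \Cref{prop:fierity-of-two}, which shows any two fronts have combined fierity exceeding $8qh^5$. Your fixed-window telescoping from \hyp{t-2h} misses this: if one front has small fierity throughout $[t-2h,t]$ then your compensation lemma fails on that window, yet \hyp{t} still holds because that front's contribution was never needed.
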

\section{Fire growth}\label{sec:Fire growth}

In this section we bound from below the increase in the number of
burning vertices on~$L_i^k(t)$. 

\subsection{Shifted vertices and the potential of the fronts structure}
Towards defining the potential~$\mu_i^k(t)$, we need to keep track of vertices that shifted from the $i$-th firefront to an adjacent front $j=i\pm1$.
To see how such a change might occur, consult~\Cref{fig:fog2}. 

For $i\in\I$, $k\in[h]$ and $t\in\N$ let
\begin{align*}
    V^k_{i,i-1}(t) &:= (L_{i-1}^k(t)\setminus L_{i-1}^k(t-1))\cap L_{i}^k(t-1)\\
    V^k_{i,i+1}(t) &:= (L_{i+1}^k(t)\setminus L_{i+1}^k(t-1))\cap\left(L_{i}^k(t-1)\right)^+
\end{align*}
The reader should keep in mind that each of $V^k_{i,i-1}(t)$ and $V^k_{i,i+1}(t)$ always consists of at most one vertex; formally, this will be established as part of~\Cref{prop:old&newFog}.

We may now define $p_i^k(t)$, which counts the total change in the number of vertices on the $k$-th level of the $i$-th front as the result of transition of vertices between this front and adjacent fronts.

First, define $p_{i,j}^k(t)$ for $i,j\in\I$ satisfying $|i-j|=1$ as follows. Set $p_{i,j}^k(0):=0$ and let $p_{i,j}^k(t) := p_{i,j}^k(t-1) + \D p_{i,j}^k(t)$, where
\[
    \D p_{i,j}^k(t):=\left|V_{i,j}^k(t) \cap \left(B(t)\cup F(t)\right)\right|-\left|V_{j,i}^k(t) \cap \left(B(t)\cup F(t)\right)\right|.
\]
Next, let 
\begin{eqnarray*}
    p^k_i(t) &:=& p^k_{i,i+1}(t)+ p^k_{i,i-1}(t);\\
    \D p^k_i(t) &:=& \D p^k_{i,i+1}(t) + \D p^k_{i,i-1}(t).
\end{eqnarray*}
The skew-symmetry in the definition of $\D p_{i,j}^k(t)$ leads to cancellations when summing over all directions $i\in\I$.
Indeed,
\begin{align}\label{eq: sum of dp is 0}
        \D p^k(t)
    &=  \sum_{i\in\I}\left(\D p_{i,i+1}^k(t) 
        + \D p_{i,i-1}^k(t)\right)
    =   \sum_{i\in\I}\D p_{i,i+1}^k(t)
        + \sum_{i\in\I}\D p_{i,i-1}^k(t)\notag\\
	&=  \sum_{i\in\I}\D p_{i,i+1}^k(t)
	    + \sum_{i\in\I}\D p_{i+1,i}^k(t)
	=   0,  
\end{align}
and thus also $\D p(t) = 0$ and $p(t) = 0$, for all $t\in\N$.

We now have all the ingredients to define the \emph{potential} of the $k$-th level of the $i$-th front as
\begin{equation}\label{muDef}
\mu_i^k(t) := \vp_i^k(t)+f_i^k(t)+p^k_i(t).
\end{equation}
Note that $\mu(t) = \vp(t) + f(t)$ by the skew-symmetry observation above, and that for $t=0$ we simply get $\mu_i(0) = \vp_i(0)$.

We now present two potential-related bounds, whose proofs are postponed to~\Cref{subsec:potential bounds}.
The first states that $\D \mu_i^k(t)$ is always non-negative, and grows by~$q$ whenever the level is active.
\begin{prop}\label{prop:mu>a>0}
    For all $t\in \N$, $k\in[h]$ and $i\in\I$
    we have $\D\mu_i^k(t)\ge qa_i^k(t)$.
\end{prop}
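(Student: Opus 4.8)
The plan is to prove $\D\mu_i^k(t) \ge q a_i^k(t)$ by expanding $\D\mu_i^k(t) = \D\vp_i^k(t) + \D f_i^k(t) + \D p_i^k(t)$ and splitting into two cases according to whether level $(i,k)$ is burning at time $t-1$. First I would dispose of the non-active case: if $a_i^k(t) = 0$ we only need $\D\mu_i^k(t) \ge 0$, and this should follow from a bookkeeping argument showing that every vertex removed from the level's potential is compensated — either it is still burning or protected and therefore got counted toward an adjacent front via the $V$-sets (hence $\D p_{i,j}^k$), or it simply leaves the region the fronts structure tracks in a way that cannot decrease the count, using that $\vec\rho_i(t)$ dominates $\vec\rho_i(t-1)$ so levels only move outward.

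For the active case, $a_i^k(t) = 1$, I would invoke a Fogarty-type argument (the variant promised in \Cref{subsec:pf outline}, presumably established as a lemma such as \Cref{prop:old&newFog} or \Cref{prop:mu>vp} referenced in the outline). The key observation is that by~\Cref{eq:prime a property zero}, an active level has $\vp_i^k(t-1) \ge 1$, so there is at least one burning vertex $v$ on $L_i^k(t-1)$. Since the level advances by one step in direction $\theta_i$, the new level $L_i^k(t)$ contains the $\theta_{i+1}$-neighbors (for $q=1$) or the full set of $q$ horizontal forward-neighbors of each burning vertex of $L_i^k(t-1)$ that lie at distance $r_i^k(t)$; each such neighbor is, at time $t$, either burning, or was protected (counted in $\D F_i^k(t)$ unless already counted), or was shifted in from an adjacent front. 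Carefully matching burning vertices on the old level to their forward-neighbors on the new level, and accounting for the possible double-counting of the two extreme vertices (the endpoints of the segment, which may be claimed by the neighboring fronts $L_{i\pm1}^k$), yields a net gain of at least $q$ in $\vp_i^k(t) + f_i^k(t) + p_i^k(t)$ — the $p_i^k(t)$ term absorbing exactly the boundary leakage. This is where the skew-symmetric design of $\D p_{i,j}^k$ pays off: a vertex lost to the adjacent front is subtracted here but the corresponding $V_{j,i}$ term makes it reappear, so it is not truly lost.

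The main obstacle I anticipate is the boundary analysis at the two ends of the segment $L_i^k(t-1)$, where a burning vertex's forward-neighbor may coincide with a vertex that belongs (after the update) to $L_{i-1}^k$ or $L_{i+1}^k$, and where the parity subtleties of $|L_{i,d}^k(a,b)|$ in the $G_1$ case (the $\pm\frac14(-1)^{a+d}$ corrections from~\Cref{eq:Lab-length}) could cause an off-by-one. The fact that each $V_{i,j}^k(t)$ consists of at most one vertex (to be proved in~\Cref{prop:old&newFog}) is exactly what keeps this leakage bounded by one vertex per side, so the Fogarty gain of $2q$ (two endpoints contributing forward-neighbors, roughly) minus at most the boundary corrections still clears $q a_i^k(t) = q$. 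I would structure the argument so that the unified treatment of $q \in \{1,2\}$ goes through by always using the inequality~\Cref{eq:Lab-length} rather than exact cardinalities, and by phrasing the Fogarty step in terms of closed neighborhoods $U^+$ so that the strong-adjacency case is handled without separate casework.
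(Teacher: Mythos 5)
Your overall toolkit is right --- Fogarty's forward-neighbor argument, the $V$-sets governing transfers between adjacent fronts, and \Cref{prop:old\&newFog} to control the boundary --- but the case split is not. You propose to split on either ``$a_i^k(t)=0$ vs.\ $1$'' or ``$\vp_i^k(t-1)=0$ vs.\ $>0$'' (you mention both, and note that they are not the same condition: a level with $\vp_i^k(t-1)>0$ can still have $a_i^k(t)=0$ if its fierity falls below the threshold in \Cref{eq:aDef}). Neither split is the right one. The paper splits on $\D r_i^k(t)\in\{0,1\}$, and this is forced by the structure of $\D p_i^k(t)$: \Cref{obs:invasion} says that $\D r_i^k(t)=0$ kills the outgoing $V$-sets $V_{i,i\pm1}^k(t)$ while $\D r_i^k(t)=1$ kills the incoming ones $V_{i\pm1,i}^k(t)$, so the algebra of $\D p_i^k(t)$ looks completely different in the two regimes.

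The concrete gap is in your treatment of the non-active case. A level can satisfy $a_i^k(t)=0$ and still have $\D r_i^k(t)=1$ --- this is the ``pulled'' situation (\Cref{eq: being pulled}), which is central to the whole argument, not an edge case. When that happens, $L_i^k(t)$ is a genuinely new segment at distance $r_i^k(t-1)+1$, and the burning vertices on $L_i^k(t-1)$ simply \emph{disappear} from $\Phi_i^k(t)$; they are not captured by the $V$-sets either (those only see the two corner vertices). So $\D\vp_i^k(t)$ can be very negative, and the bookkeeping claim ``every vertex removed from the level's potential is compensated'' is false as stated. The only way to recover the lost mass is exactly Fogarty: each burning vertex on the old segment forces a burning-or-protected vertex on the new segment (or a corner handed to an adjacent front). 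In other words, you \emph{must} invoke \Cref{prop:fog1} in the pulled-but-inactive sub-case, not just the active one. The correct organization is: if $\D r_i^k(t)=0$, bookkeeping gives $\D\mu_i^k(t)\ge 0$ (and $a_i^k(t)=0$ automatically since $\D r_i^k(t)\ge a_i^k(t)$); if $\D r_i^k(t)=1$, reduce via \Cref{prop:old\&newFog} to showing $|(L_i^k(t)\sqcup V_{i,i-1}^k(t)\sqcup V_{i,i+1}^k(t))\cap(B(t)\sqcup F(t))|\ge qa_i^k(t)+\vp_i^k(t-1)$, dispose of $\vp_i^k(t-1)=0$ trivially, and apply \Cref{prop:fog1} otherwise. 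Your sub-argument about parity corrections and unified $q\in\{1,2\}$ treatment is fine, but it lives entirely inside \Cref{prop:fog1}, which the paper treats as a separate lemma rather than re-deriving here.
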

In particular, this implies
\begin{equation}\label{eq:mu-increasing}
    \mu_i^k(t) \ge \mu_i^k(s) \text { for all $t\ge s$.}
\end{equation}
The second bound states that the potential of each firefront exceeds its fierity minus a constant.
\begin{prop}\label{prop:mu>vp}
    For all $t\in\N$ and $i\in\I$
    we have $\mu_i(t) > \vp_i(t) - 4qh^5$.
\end{prop}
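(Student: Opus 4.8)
The plan is to prove the sharper per-level estimate $\mu_i^k(t)\ge\varphi_i^k(t)-4qh^4+1$ for every $k\in[h]$ and then sum over the $h$ levels, which gives $\mu_i(t)\ge\varphi_i(t)-4qh^5+h>\varphi_i(t)-4qh^5$. Equivalently, one must show $f_i^k(t)+p_i^k(t)>-4qh^4$. If $\varphi_i^k(t)\le 4qh^4-2$ this is immediate: the target value $\varphi_i^k(t)-4qh^4+1$ is then at most $-1<0\le\varphi_i^k(0)=\mu_i^k(0)\le\mu_i^k(t)$, using $\mu_i^k(0)=\varphi_i^k(0)\ge0$ together with the monotonicity recorded in \Cref{eq:mu-increasing}. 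So from now on assume $\varphi_i^k(t)>4qh^4-2$.

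Next I would unwind $p_i^k$ as $-p_i^k(t)=\operatorname{rec}_i^k(t)-\operatorname{ced}_i^k(t)$, where $\operatorname{rec}_i^k(t):=\sum_{s\le t}|(V_{i+1,i}^k(s)\cup V_{i-1,i}^k(s))\cap(B(s)\cup F(s))|$ counts the burning-or-protected vertices \emph{received} by level $(i,k)$ from its two neighbouring fronts, and $\operatorname{ced}_i^k(t)\ge0$ counts those it \emph{cedes}. Since $\operatorname{ced}_i^k(t)\ge0$, it suffices to prove $\operatorname{rec}_i^k(t)-f_i^k(t)<4qh^4$. A short geometric check from \Cref{def:lDef} and the monotonicity of the Lipschitz vectors $\vec\rho_i(\cdot)$ shows that a reception into level $(i,k)$ at time $s$ is possible only when $\Delta r_i^k(s)=0$; in that case the received vertex is one of the (at most two) vertices by which $L_i^k(s)$ extends $L_i^k(s-1)$, and once received it stays on $L_i^k$ with unchanged status until the level next advances. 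Because $\Delta r_i^k(s)=0$ forces $a_i^k(s)=0$, \Cref{eq:prime a property one} gives $\varphi_i^k(s-1)\le 4qh^4-2$ at every reception step. If the received vertex $v$ at time $s$ is effectively protected, then $v\in(\FF(s)\cap L_i^k(s))\setminus B(s)$ and, apart from a bounded number of exceptions (coming from a vertex that migrated directly from level $(i+1,k)$ while already in $F^k(s-1)$, a situation controlled via \Cref{prop:mu>a>0} and the fact that the fronts never retreat), $v\notin F^k(s-1)$, so $v\in\Delta F_i^k(s)$. Hence the protected receptions are, up to those exceptions, distinct elements of $F_i^k(t)$, and $\operatorname{rec}_i^k(t)-f_i^k(t)$ is bounded by the number $\beta_i^k(t)$ of burning receptions (plus that exceptional count, which I would fold into the final budget).

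Finally I would bound $\beta_i^k(t)$ by $4qh^4$. Between two consecutive advances of level $(i,k)$ the level is still and only grows, so the burning vertices it receives during such a run are, at every later still step of the run, simultaneously present on $L_i^k$ and still burning, hence counted in the fierity there; together with the bound $\varphi_i^k(\cdot)\le 4qh^4-2$ that holds along the run (except possibly at its last step, contributing at most two more) this caps the burning receptions of a single run. The crux — and the step I expect to be the main obstacle — is to prevent these per-run counts from accumulating over the successive still/advance cycles of a level: once the level advances, the burning receptions of the completed run that were not ceded to a neighbour are buried in the interior, and one must show that in the ledger defining $\mu_i^k$ each such buried vertex is matched by genuine new fire created at the advance step (the same variant of Fogarty's inequality that underlies \Cref{prop:mu>a>0}), so that they never need to be recounted after the level resets. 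Granting this non-accumulation, $\operatorname{rec}_i^k(t)-f_i^k(t)<4qh^4$ and the proof concludes. Everything else is either immediate from the monotonicity of $\mu_i^k$, a routine verification that receptions occur only at the extending ends of a still level, or the observation that protected receptions land in $F_i^k$.
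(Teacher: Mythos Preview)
Your per-level reduction and the ``easy'' case $\varphi_i^k(t)\le 4qh^4-2$ are correct and match the paper's setup. The problem is your ``hard'' case: the receptions/cessions ledger is the wrong tool, and the gap you flag---preventing burning receptions from accumulating across successive still/advance cycles---is real and not closed by anything you wrote. Your handling of protected receptions is also shaky: a vertex received from front $i{+}1$ that was already in $F_{i+1}^k(s-1)\subseteq F^k(s-1)$ is \emph{not} in $\Delta F_i^k(s)$, and nothing in your sketch bounds how often this happens.

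The paper sidesteps all of this with a much simpler idea. Instead of case-splitting on the \emph{current} fierity, let $s=\max\{\tau\le t:\Delta r_i^k(\tau)=0\}$ (if no such $\tau$ exists, then $p_i^k(t)\ge 0$ by \Cref{cor:p>0} and you are done). At time $s$ the level is still, so $a_i^k(s)=0$ and \Cref{eq:prime a property one} gives $\varphi_i^k(s-1)\le 4qh^4-2$. For every $\tau>s$ one has $\Delta r_i^k(\tau)=1$, hence $\Delta p_i^k(\tau)\ge 0$ by \Cref{cor:p>0}; together with $f_i^k(t)\ge f_i^k(s-1)$ and $\mu_i^k(s-1)\ge\mu_i^k(0)=\varphi_i^k(0)>0$ from \Cref{eq:mu-increasing}, this yields
\[
f_i^k(t)+p_i^k(t)\ \ge\ f_i^k(s-1)+p_i^k(s)\ =\ \mu_i^k(s-1)-\varphi_i^k(s-1)+\Delta p_i^k(s)\ >\ 0-(4qh^4-2)-2\ =\ -4qh^4,
\]
using $|\Delta p_i^k(s)|\le 2$ from \Cref{eq:p-small}. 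The point is that after the \emph{last} still time the shift term $p_i^k$ is monotone, so there is no accumulation to track; your observation that receptions occur only when $\Delta r_i^k=0$ is exactly \Cref{obs:invasion}/\Cref{cor:p>0}, but the efficient way to exploit it is to anchor at the final still step rather than to inventory every run.
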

 
\subsection{Local fire growth bounds}\label{subsec:local-growth}
\noindent\textbf{Growth within layers.}
The following bound is a slight generalization of a result of Fogarty~\cite[Theorem 1]{fogarty2003catching}. 
To better comprehend this variant, please see~\Cref{fig:fog1}.
\begin{figure}[h!]
    \captionsetup{margin=1cm}
    \subfloat[$\scriptstyle{q=2},{a=2},{b=2}$,\\\phantom{(a) }$\scriptstyle{\delta_-=\delta_+=1}$\label{fig:a1} ]{
    \begin{tikzpicture}[scale=0.8]
        \draw[step=1cm,grayX,ultra thin] (0,-3) grid (3,4);
        \draw [->] (0.5,0.5) -- +(0,0.3);
        \draw [->] (0.5,0.5) -- +(0.3,0);
        \draw [draw=none,darkgrayX, fill=darkgrayX] (1,-1) rectangle (2,3);
        \draw [black, very thick, dashed,pattern=my north east lines,pattern color=darkgrayX] (2,-2) rectangle (3,4);
        \draw [black, very thick, dotted] (0,2.5) -- (3,2.5);
        \draw [darkgrayX, very thick, dotted] (0,-1.5) -- (3,-1.5);
        \draw [grayX, very thick, dotted] (0,3.5) -- (3,3.5);
        \draw [grayX, very thick, dotted] (0,-2.5) -- (3,-2.5);

        \pgfmathsetmacro{\op}{0.5};

        \foreach \y in {0,2}{
            \node at (1.5,-0.5+\y) {\Fire};
        }
        \node at (1.5,-0.5+1) {\faShield};
        \foreach \y in {-1,0}{
            \node at (2.5,-0.5+\y) {\faShield};
        }
        \foreach \y in {1,...,3}{
            \node at (2.5,-0.5+\y) {\Fire};
        }
        \end{tikzpicture}}
    \hfill
    \subfloat[$\scriptstyle{q=2},{a=2},{b=2}$,\\\phantom{(a) }$\scriptstyle{\delta_-=\delta_+=0}$\label{fig:a2}]{
    \begin{tikzpicture}[scale=0.8]
        \draw[step=1cm,grayX,ultra thin] (0,-3) grid (3,4);
        \draw [->] (0.5,0.5) -- +(0,0.3);
        \draw [->] (0.5,0.5) -- +(0.3,0);
        \draw [draw=none,darkgrayX, thick,fill=darkgrayX] (1,-1) rectangle (2,3);
        \draw [black, thick, dashed,pattern=my north east lines,pattern color=grayX] (2,-1) rectangle +(1,4);
        \draw [black, thick, dashed] (2,-2) rectangle +(1,1);
        \draw [black, thick, dashed] (1,2) rectangle +(1,1);
        \node [draw,scale=0.5] at (1.75,2.25) {$-$};
        \node [draw,scale=0.5] at (2.75,-1.75) {$+$};
        \draw [darkgrayX, very thick, dotted] (0,2.5) -- (3,2.5);
        \draw [darkgrayX, very thick, dotted] (0,-1.5) -- (3,-1.5);
    \end{tikzpicture}}
    \hfill
    \subfloat[$\scriptstyle{q=1},{a=3},{b=3}$,\\\phantom{(a) }$\scriptstyle{\delta_-=\delta_+=0}$\label{fig:a3}]{
        \includegraphics{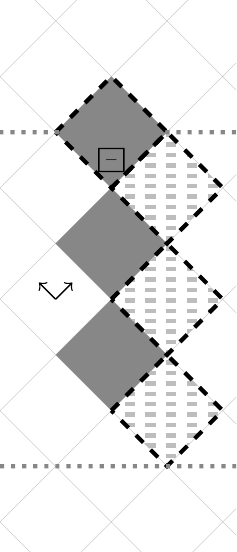}}
    \hfill
    \subfloat[$\scriptstyle{q=1},{a=3},{b=3}$,\\\phantom{(a) }$\scriptstyle{\delta_-=\delta_+=1}$\label{fig:a4}]{
        \includegraphics{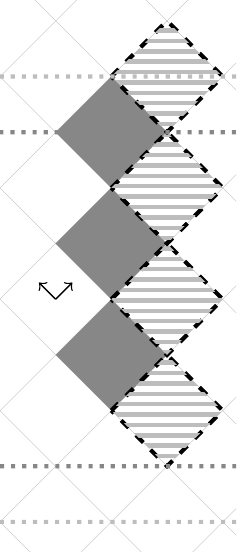}}
    \hfill
    \subfloat[$\scriptstyle{q=1},{a=4},{b=2}$,\\\phantom{(a) }$\scriptstyle{\delta_-=\delta_+=0}$\label{fig:a5}]{
        \includegraphics{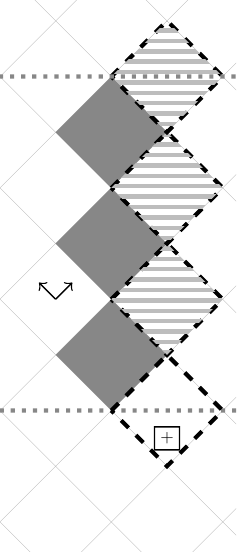}}
    \hfill
    \subfloat[$\scriptstyle{q=1},{a=4},{b=2}$,\\\phantom{(a) }$\scriptstyle{\delta_-=\delta_+=1}$\label{fig:a6}]{
        \includegraphics{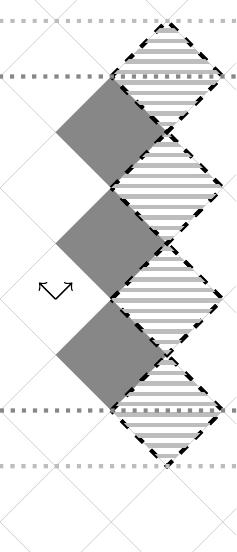}}
    \caption{The transition from $L^k_{1,1}(a,b)$ to $L^k_{1,2}(a+\delta_-,b+\delta_+)$ for different values of $q$, $a$, $b$, $\delta_-$ and $\delta_+$.
        Actual spread of burning vertices is illustrated in sub-figure~(\ref{fig:a1}).
        $L^k_{1,1}(a,b)$ vertices are colored with a darker shade, while $L^k_{1,2}(a+\delta_-,b+\delta_+)$ bear a lighter striped color. 
        The origin and the Cartesian axes are marked by short arrows, and when $V_-$ (resp.,~$V_+$) is non-empty, its vertex is tagged by $\boxminus$ (resp.,~$\boxplus$).
        Dark dotted lines mark $-a\theta_0+\R\theta_1+k\phi$ and $b\theta_0+\R\theta_1+k\phi$, while light dotted lines mark $-(a+\delta_-)\theta_0+\R\theta_1+k\phi$ and $(b+\delta_+)\theta_0+\R\theta_1+k\phi$.
Observe, for $q=1$, how the parity of $d+a$ and $d+b$ affects the shape of $L^k_{1,2}(a+\delta_-,b+\delta_+)$, and the cardinality of $V_-$ and $V_+$.}
    \label{fig:fog1}
\end{figure} 
\begin{prop}\label{prop:fog1}
Let $q\in\{1,2\}$, $i\in\I$, $k\in[h]$, and suppose that $\left|L^k_{i,d}(a,b)\cap B(t-1)\right|>0$ in~$G_q$ for some $a,b,d\in\N$.
Then, for $\delta_{-},\delta_{+}\in \{0,1\}$ we have
\[
        \left|\left(L^k_{i,d+1}(a+\delta_{-},b+\delta_{+})
                \sqcup V_{-}\sqcup V_{+}\right)
            \cap \left(B(t)\cup \FF(t)\right)\right|
    \ge \left|L^k_{i,d}(a,b)\cap B(t-1)\right| + q,
\]
where
\begin{align*}
    V_{-} &= 
    \begin{cases}
        \left(\left\{d\theta_i-a\theta_{i+1}+k\phi\right\}
            \cap \Z^3\right) & \delta_{-}=0,\\
        \varnothing & \delta_{-}=1,
    \end{cases}\\
    V_{+} &=
    \begin{cases}
        \left(\left\{(d+1)\theta_i+b\theta_{i+1}+k\phi\right\}
            \cap \Z^3\right) & \delta_{+}=0,\\
        \varnothing & \delta_{+}=1.
    \end{cases}
\end{align*}
\end{prop}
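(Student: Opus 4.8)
The plan is to reduce the claim to a purely one-dimensional statement about a single burning level by exploiting the fact that fire cannot skip vertices. Fix the segment $L := L^k_{i,d}(a,b)$ and let $P := \Phi := L\cap B(t-1)$, which by hypothesis is nonempty. The key local observation, due to Fogarty, is that every horizontal neighbor (in the $\theta_i$ direction, i.e. toward distance $d+1$) of a burning vertex of $L$ must itself be burning or protected at time~$t$; that is, it lies in $B(t)\cup\FF(t)$. So I would first identify exactly which vertices these ``forward images'' are. Write $L$ as a sequence of consecutive lattice points along $\theta_{i+1}$ sitting on the line $d\theta_i + \R\theta_{i+1} + k\phi$. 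For $q=2$ (the case to keep in mind) the forward neighbor of a point $d\theta_i + m\theta_{i+1} + k\phi$ is simply $(d+1)\theta_i + m\theta_{i+1} + k\phi$, a point of the target line, lying in $L^k_{i,d+1}(a,b)\subseteq L^k_{i,d+1}(a+\delta_-, b+\delta_+)$ unless it is one of the (at most two) extreme points, which is precisely where $V_-$ and $V_+$ come in. For $q=1$ the geometry is the $45^\circ$-rotated one and each point of $L$ on the line of distance~$d$ has its forward neighbors among the points of the line of distance $d+1$; here the parity bookkeeping in~\Cref{eq:Lab-length} controls which of the two endpoints actually produces an ``overhang'' vertex, matched by $V_-$ or $V_+$.

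Next I would make the counting precise. Let $N(\Phi)$ denote the set of all vertices at distance $d+1$ that are $\theta_i$-forward neighbors of some vertex of $\Phi$. The two facts I need are: (i) $N(\Phi)\subseteq B(t)\cup\FF(t)$, which follows from~\Cref{eq:main-evo} since a forward neighbor of a burning vertex is burning at time~$t$ unless it was protected, i.e. unless it lies in $\FF(t)$ — in either case it is in $B(t)\cup\FF(t)$; and (ii) $|N(\Phi)| \ge |\Phi| + q$, together with the containment $N(\Phi)\subseteq L^k_{i,d+1}(a+\delta_-,b+\delta_+)\sqcup V_-\sqcup V_+$. Claim (ii) is the heart of the matter and is where I expect the real work. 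For $q=2$: the map $m\mapsto m$ sending a point of $L$ to its forward neighbor is injective and order-preserving along $\theta_{i+1}$, and the image of an interval of $s$ consecutive integers, when one also throws in the two neighbors of the endpoints that have no preimage in $\Phi$ (the rightmost and leftmost forward neighbors "beyond" $\Phi$), yields $s+2 = |\Phi|+2$ points — these extra two points are exactly the candidates for $V_-, V_+$ (or already inside the enlarged segment when $\delta_\pm = 1$). For $q=1$: here the forward-neighbor relation is "two-to-one" in a way that still nets a gain of $1$; a point on the distance-$d$ line has two forward neighbors on the distance-$(d+1)$ line, consecutive points of $\Phi$ share one such neighbor, so $s$ burning points force $s+1$ forward-image points, one of which may fall outside $L^k_{i,d+1}(a+\delta_-,b+\delta_+)$ and is then captured by $V_-$ or $V_+$, the parity of $a+d$ and $b+d$ dictating on which side. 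In both cases the arithmetic is exactly the $|L|$-cardinality computation preceding~\Cref{eq:Lab-length}, applied once to the source segment and once to the target.

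Finally I would assemble the pieces: by (i), $N(\Phi)\cap(B(t)\cup\FF(t)) = N(\Phi)$, so
\[
    \left|\left(L^k_{i,d+1}(a+\delta_-,b+\delta_+)\sqcup V_-\sqcup V_+\right)\cap(B(t)\cup\FF(t))\right|
    \ge |N(\Phi)| \ge |\Phi| + q
    = \left|L^k_{i,d}(a,b)\cap B(t-1)\right| + q,
\]
using the containment from (ii) in the first inequality. The disjointness of $L^k_{i,d+1}(\cdot,\cdot)$, $V_-$ and $V_+$ is immediate since $V_\pm$, when nonempty, consists of a single point just outside the enlarged segment on the appropriate side. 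The main obstacle is purely the bookkeeping in step (ii): one must handle the four sign patterns of $(\delta_-,\delta_+)$ and, for $q=1$, the four parity combinations of $(a+d, b+d)$, checking in each case that the "lost" forward-image endpoint is exactly compensated by the correct $V_\pm$ — this is routine but must be done carefully, and the figures (\Cref{fig:fog1}) are there precisely to make it checkable.
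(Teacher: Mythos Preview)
Your overall strategy matches the paper's, but there is a genuine gap in step~(ii). You assert the containment
\[
N(\Phi)\subseteq L^k_{i,d+1}(a+\delta_-,b+\delta_+)\sqcup V_-\sqcup V_+,
\]
treating $V_-$ and $V_+$ symmetrically as ``overhang catchers'' for forward neighbors that fall just outside the target segment. But look again at the definitions: $V_+=\{(d+1)\theta_i+b\theta_{i+1}+k\phi\}\cap\Z^3$ sits on the distance-$(d{+}1)$ line, whereas $V_-=\{d\theta_i-a\theta_{i+1}+k\phi\}\cap\Z^3$ sits on the \emph{original} distance-$d$ line. So $V_-$ can never absorb a forward-image overhang on the new line. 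Concretely, for $q=2$ and $\delta_-=0$, if the leftmost burning vertex $x_{\min}\in\Phi$ has position $m=-a$, then its forward neighbor $x_{\min}+\theta_i-\theta_{i+1}$ sits at position $m=-a-1$ on the distance-$(d{+}1)$ line, and this point lies in none of $L^k_{i,d+1}(a,b+\delta_+)$, $V_-$, or $V_+$. Your containment fails exactly here.

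The paper handles this asymmetry by a case split on whether $x_{\min}$ reaches the left boundary of $L$. If it does not, then the whole set $N$ already sits inside $L^k_{i,d+1}(a,b+1)$ and there is no overhang on the left. If it does, then $V_-=\{x_{\min}\}\subseteq A\subseteq B(t)$ is itself a burning vertex, and this one extra point in the left-hand side compensates for the one forward neighbor lost on the left. The analogous split for $x_{\max}$ and $V_+$ is the one that matches your intuition: there $V_+$ really is adjacent to $x_{\max}$ and acts as the right-side overhang. A minor second point: your $q=2$ description speaks of ``the image of an interval of $s$ consecutive integers,'' but $\Phi$ need not be an interval; the paper avoids this by using the injective shift $x\mapsto x+\theta_i+\theta_{i+1}$ on all of $A$, disjointly adjoining the extra forward neighbor(s) of $x_{\min}$, which gives $|N|\ge |A|+q$ for arbitrary $A$.
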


\begin{proof} Write $A=L^k_{i,d}(a,b)\cap B(t-1)$ and observe that $A\neq\varnothing$ by our assumption.
Denote by~$x_{\max}$ (resp.,~$x_{\min}$) the element $d\theta_i+m\theta_{i+1}+k\phi\in A$ with the maximal (resp., minimal) value of~$m$. Let
\[
    N = (\Z^2\times \{k\}) \cap A^+ \cap L_{i,d+1}^k(a+1,b+1)
\]
and note that $\theta_i \pm \theta_{i+1}\in \Z^3$ for both $q=1$ and $q=2$, while $\theta_i\in \Z^3$ only for $q=2$.
It follows that $|N| \ge |A|+q$, since
\[
    N \supseteq \{x+\theta_i+\theta_{i+1}\ :\ x \in A\}
        \sqcup \{x_{\min}+\theta_i-\theta_{i+1}\}
        \sqcup \left(\{x_{\min}+\theta_i\}\cap\Z^3\right).
\]
This resolves the case $\delta_-=\delta_+=1$.

If $x_{\min}\neq d\theta_i-a\theta_{i+1}+k\phi$, then $N \subseteq L_{i,d+1}^k(a,b+1)$ (see~\Cref{fig:a1,,fig:a4,,fig:a5,,fig:a6});
otherwise, $V_-\subseteq A \subseteq B(t)$ (see~\Cref{fig:a3,fig:a2}). This resolves the case $\delta_->\delta_+$.
Similarly, if $x_{\max}\neq d\theta_i+(b-1)\theta_{i+1}+k\phi$,
then $N\subseteq L_{i,d+1}^k(a+1,b)$ (see~\Cref{fig:a1,,fig:a3,,fig:a4,,fig:a6});
otherwise, $V_+$ is non-empty and its sole member is adjacent to $x_{\max}$ so that
$V_+\subseteq B(t)\cup F(t)$ (see \Cref{fig:a5,fig:a2}). This resolves the case $\delta_-<\delta_+$.
Since $L_{i,d+1}^k(a,b+1)\cap L_{i,d+1}^k(a+1,b)=L_{i,d+1}^k(a,b)$, which resolves the last case $\delta_-=\delta_+=0$.
\end{proof}

To apply~\Cref{prop:fog1} for the fronts structure we require some technical computations, provided in the following proposition.
Please see~\Cref{fig:fog2} for a more intuitive perspective.
\begin{figure}[h!]
    \captionsetup{margin=1cm}
    \subfloat[$\scriptstyle{q=2}$,
    \\\phantom{(a) }$\scriptstyle{r_{i- 1}(t-1)=2}$,
    \\\phantom{(a) }$\scriptstyle{r_{i+ 1}(t-1)=2}$,
    \\\phantom{(a) }$\scriptstyle{\D r_{i\pm 1}(t)=1}$                                                                                     
    \label{fig:b1} ]{
\begin{tikzpicture}[scale=0.7]
        \pgfmathsetmacro{\ymin}{-2};
        \pgfmathsetmacro{\yclip}{\ymin-1.5};
        \clip (-0.3,\yclip) rectangle (3.25,4.5);
        \draw[step=1cm,grayX,ultra thin] (-3,-5) grid (4,5);
        \draw [->] (0.5,0.5) -- +(0,0.3);
        \draw [->] (0.5,0.5) -- +(0.3,0);
        
\pgfmathsetmacro{\temp}{\ymin+1}
\draw [draw=none,fill=darkgrayX] (1,\temp) rectangle (2,3);
        \draw [black, very thick, dashed] (2,\ymin) rectangle (3,4);
        \draw [draw=none,pattern=my north east lines,pattern color=darkgrayX] (2,\ymin) rectangle (3,4);

\draw [draw=none,fill=grayX] (-2,2) rectangle (1,3);
        \draw [black, very thick, dashed] (-2,3) rectangle (2,4);
        \draw [draw=none,pattern=my north west lines,pattern color=grayX] (-2,3) rectangle (2,4);

\pgfmathsetmacro{\temp}{\ymin-1}
        \draw [draw=none,fill=grayX] (-2,\ymin) rectangle +(4,1);
        \draw [black, very thick, dashed] (-2,\temp) rectangle +(5,1);
        \draw [draw=none,pattern=my north west lines,pattern color=grayX] (-2,\temp) rectangle +(5,1);
    \end{tikzpicture}}
    \hfill
    \subfloat[$\scriptstyle{q=2}$,
    \\\phantom{(a) }$\scriptstyle{r_{i- 1}(t-1)=2}$,
    \\\phantom{(a) }$\scriptstyle{r_{i+ 1}(t-1)=2}$,
    \\\phantom{(a) }$\scriptstyle{\D r_{i\pm 1}(t)=0}$
        \label{fig:b2}]{
    \begin{tikzpicture}[scale=0.7]
        \clip (-0.3,-3.5) rectangle (3.25,4.5);
        \draw[step=1cm,grayX,ultra thin] (-3,-5) grid (4,5);
        \draw [->] (0.5,0.5) -- +(0,0.3);
        \draw [->] (0.5,0.5) -- +(0.3,0);

\draw [draw=none,fill=darkgrayX] (1,-1) rectangle (2,3);
        \draw [black, thick, dashed] (2,-1) rectangle (3,3);
        \draw [draw=none,pattern=my north east lines,pattern color=darkgrayX] (2,-1) rectangle (3,3);

\draw [draw=none,fill=grayX] (-2,2) rectangle (1,3);
        \draw [black, very thick, dashed] (-2,2) rectangle (2,3);
        \draw [draw=none,pattern=my north west lines,pattern color=grayX] (-2,2) rectangle (2,3);

        \draw [draw=none,fill=grayX] (-2,-2) rectangle +(4,1);
        \draw [black, very thick, dashed] (-2,-2) rectangle +(5,1);
        \draw [draw=none,pattern=my north west lines,pattern color=grayX] (-2,-2) rectangle +(5,1);

        \node [draw,scale=0.5] at (1.75,2.25) {$-$};

        \pgfmathsetmacro{\sqx}{2+0.75};
        \pgfmathsetmacro{\sqy}{-2+0.25};
        \node [draw,scale=0.5] at (\sqx,\sqy) {$+$};
    \end{tikzpicture}}
    \hfill
    \subfloat[$\scriptstyle{q=1}$,
    \\\phantom{(a) }$\scriptstyle{r_{i-1}(t-1)=3}$,
    \\\phantom{(a) }$\scriptstyle{r_{i+1}(t-1)=3}$,
    \\\phantom{(a) }$\scriptstyle{\D r_{i\pm 1}(t)=1}$
        \label{fig:b3}]{
            \includegraphics{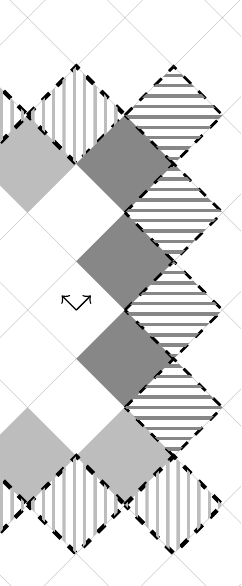}}
    \hfill
    \subfloat[$\scriptstyle{q=1}$,
    \\\phantom{(a) }$\scriptstyle{r_{i-1}(t-1)=3}$,
    \\\phantom{(a) }$\scriptstyle{r_{i+1}(t-1)=3}$,
    \\\phantom{(a) }$\scriptstyle{\D r_{i\pm 1}(t)=0}$
        \label{fig:b4}]{
            \includegraphics{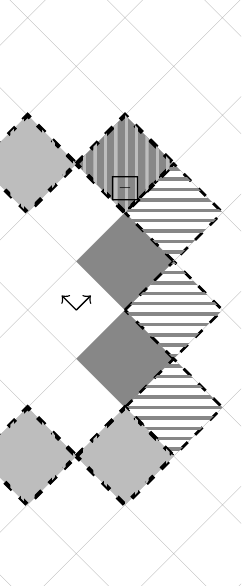}}
    \hfill
    \subfloat[$\scriptstyle{q=1}$,
    \\\phantom{(a) }$\scriptstyle{r_{i-1}(t-1)=4}$,
    \\\phantom{(a) }$\scriptstyle{r_{i+1}(t-1)=2}$,
    \\\phantom{(a) }$\scriptstyle{\D r_{i\pm 1}(t)=1}$
        \label{fig:b5}]{
            \includegraphics{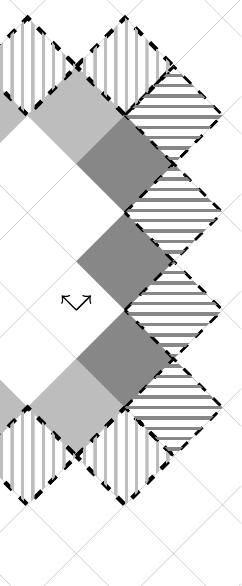}}
    \hfill
    \subfloat[$\scriptstyle{q=1}$,
    \\\phantom{(a) }$\scriptstyle{r_{i-1}(t-1)=4}$,
    \\\phantom{(a) }$\scriptstyle{r_{i+1}(t-1)=2}$,
    \\\phantom{(a) }$\scriptstyle{\D r_{i\pm 1}(t)=0}$
        \label{fig:b6}]{
            \includegraphics{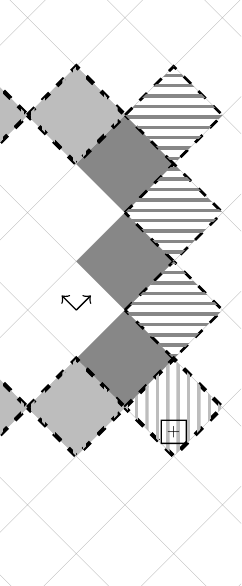}}
    \caption{
    The interaction between $L_{i-1}^k, L_{i+1}^k$ and $L_i^k$ for time $t-1$ and $t$, as a result of 
    different parities of $r_i^k(t-1)+r_{i-1}^k(t-1)$, $r_i^k(t-1)+r_{i+1}^k(t-1)$ and different values of $\D r_{i-1}^k(t)$, $\D r_{i+1}^k(t)$. 
    $L_i^k(t-1)$ and $L_i^k(t)$ are indicated by dark squares and stripes, respectively, while $L_{i\pm 1}^k(t-1)$ and $L_{i\pm 1}^k(t)$ are indicated by light squares and stripes. The origin and the Cartesian axes are marked by short arrows.
    Observe that $L_{i\pm 1}^k(t-1)$
    and $L_i^k(t)$ follow a different striped pattern.
    Since $\Delta r_i^k(t-1)=1$ in all sub-figures, $\D p_{i,i\pm 1}^k(t)$ are always non-negative and $\D p_{i\pm 1,i}^k(t)$ are non-positive.
    When $\D p_{i,i\pm 1}^k(t)$ take non-zero value, the vertex of $V_{i,i\pm 1}^k(t-1)$ is tagged by~$\boxminus$ or~$\boxplus$.}
    \label{fig:fog2}
\end{figure} 
\begin{prop}\label{prop:old&newFog}
    For $d=r_i^k(t-1)$, $d_{\pm}=r_{i\pm1}^k(t-1)$, $\delta=\D r_{i}^k(t)$, $\delta_{\pm}=\D r_{i\pm 1}^k(t)$ we have
    \begin{equation}\label{eq:oldnewfogeq1}
        L_{i,d}^k(d_-,d_+)=L_i^k(t-1), \qquad\qquad\qquad L_{i,d+\delta}^k(d_-+\delta_-,d_++\delta_+)=L_i^k(t).
    \end{equation}
    \begin{align}
    V^k_{i,i-1}(t)&=\begin{cases}
        \left(\left\{d\theta_i^{(q)}-d_-\theta_{i+1}^{(q)}+k\phi\right\}\cap \Z^3\right) & \delta(1-\delta_{-})=0,\\
        \mathrlap{\varnothing}\hphantom{\left(\left\{(d+1)\theta_i^{(q)}+d_+\theta_{i+1}^{(q)}+k\phi\right\}\cap \Z^3\right)} & \delta(1-\delta_{-})=1,
    \end{cases}\label{eq:oldnewfogeq2a}\\
    V^k_{i,i+1}(t)&=\begin{cases}
        \left(\left\{(d+1)\theta_i^{(q)}+d_+\theta_{i+1}^{(q)}+k\phi\right\}\cap \Z^3\right)  & \delta(1-\delta_{+})=1,\\
        \varnothing & \delta(1-\delta_{+})=0.
    \end{cases}\label{eq:oldnewfogeq2b}
    \end{align}
\end{prop}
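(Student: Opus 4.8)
The identity~\eqref{eq:oldnewfogeq1} is just a matter of unwinding notation: by~\Cref{def:lDef} one has $L_i^k(\rho(s))=L^k_{i,r_i^k(s)}\big(r_{i-1}^k(s),r_{i+1}^k(s)\big)$ at every time~$s$, so $s=t-1$ gives $L^k_{i,d}(d_-,d_+)$ straight from the definitions of $d$ and $d_\pm$, while $s=t$, together with $\Delta r_j^k(t)\in\{0,1\}$ (noted just after~\eqref{eq:r_def}), gives $r_i^k(t)=d+\delta$ and $r^k_{i\pm1}(t)=d_\pm+\delta_\pm$, hence $L^k_{i,d+\delta}(d_-+\delta_-,d_++\delta_+)$.

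For the two descriptions of~$V$ the plan is to compute coordinate-wise in the basis $\{\theta_i,\theta_{i+1},\phi\}$ of~$\R^3$, using $\theta_{i-1}=-\theta_{i+1}$ and $\theta_{i+2}=-\theta_i$ to rewrite the parametrizations of $L_{i\pm1}^k$ furnished by~\eqref{eq:oldnewfogeq1}. Begin with $V^k_{i,i-1}(t)=\big(L_{i-1}^k(t)\setminus L_{i-1}^k(t-1)\big)\cap L_i^k(t-1)$. Since distinct fronts of a single structure are disjoint (the observation right after~\Cref{def:lDef}), $L_{i-1}^k(t-1)\cap L_i^k(t-1)=\varnothing$, so $V^k_{i,i-1}(t)=L_{i-1}^k(t)\cap L_i^k(t-1)$. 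Now $L_i^k(t-1)$ lies on the line $\{d\theta_i+m\theta_{i+1}+k\phi:m\in\R\}$ and $L_{i-1}^k(t)$ on the perpendicular line $\{m\theta_i-(d_-+\delta_-)\theta_{i+1}+k\phi:m\in\R\}$, so matching the $\theta_i$- and $\theta_{i+1}$-coordinates forces any common point to equal $d\theta_i-(d_-+\delta_-)\theta_{i+1}+k\phi$ — which in particular establishes the ``at most one vertex'' remark made when $V$ was introduced. It then remains to read off, from the half-open index ranges in~\Cref{def:lDef} (the $\theta_i$-range of $L_{i-1}^k(t)$ has upper endpoint $d+\delta$; the $\theta_{i+1}$-range of $L_i^k(t-1)$ is $[-d_-,d_+)$), exactly for which values of $\delta$ and $\delta_-$ this point lies both in the two levels and in~$\Z^3$; this pins it down as $d\theta_i-d_-\theta_{i+1}+k\phi$ and determines $V^k_{i,i-1}(t)$, as in~\eqref{eq:oldnewfogeq2a}.

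The set $V^k_{i,i+1}(t)=\big(L_{i+1}^k(t)\setminus L_{i+1}^k(t-1)\big)\cap\big(L_i^k(t-1)\big)^+$ is treated the same way, except that the closed neighborhood calls for one more step: I would first check that $\big(L_i^k(t-1)\big)^+$ can reach $L_{i+1}^k(t)$ only near the corner they share — at $\theta_i$-coordinate in $\{d-1,d,d+1\}$ and $\theta_{i+1}$-coordinate $d_++\delta_+$ — and that the only such point not already lying in $L_{i+1}^k(t-1)$ is $(d+1)\theta_i+d_+\theta_{i+1}+k\phi$; here one uses that $\theta_i+\theta_{i+1}$ is an edge of $G_q$ for both $q\in\{1,2\}$. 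The same endpoint computation then yields~\eqref{eq:oldnewfogeq2b}.

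I expect the step requiring the most care to be the endpoint bookkeeping for $q=1$: there $\theta_i$ and $\theta_{i+1}$ are half-integer vectors, so a level picks out only every other lattice point of its geometric segment, and whether the relevant corner vertex is actually realized depends on the parities of $d+d_-$ and $d+d_+$ — precisely the phenomenon illustrated in~\Cref{fig:fog2} and recorded in the cardinality formula preceding~\eqref{eq:Lab-length}. Intersecting with~$\Z^3$ both the levels and the claimed singletons is what collapses this parity casework, so that $q=1$ and $q=2$ can be handled in a single stroke.
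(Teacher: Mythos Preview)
Your plan is correct and mirrors the paper's proof almost exactly: both establish~\eqref{eq:oldnewfogeq1} by unwinding~\Cref{def:lDef}, then compute $V^k_{i,i-1}(t)$ by intersecting the perpendicular segments $L_{i-1}^k(t)$ and $L_i^k(t-1)$ (using the disjointness of fronts at a fixed time to drop the ``$\setminus L_{i-1}^k(t-1)$''), and handle $V^k_{i,i+1}(t)$ by listing the at-most-three candidate corner points of $(L_i^k(t-1))^+$ on the line of $L_{i+1}^k(t)$ and discarding those already in $L_{i+1}^k(t-1)$. The only cosmetic difference is that the paper opens with a case split on $\delta_\pm$ whereas you first locate the unique candidate point and then read off the endpoint constraints; the content is the same.
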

\begin{proof}
First, observe that~\Cref{eq:oldnewfogeq1} is immediate from~\Cref{def:lDef}, recalling that $L_i^k(t)=L_i^k(\rho(t))$. 
Next, to see~\Cref{eq:oldnewfogeq2a} and~\Cref{eq:oldnewfogeq2b},
observe that if $\D r_{i-1}^k(t)=\delta_{-}=1$ then $d_-<r_{i-1}^k(t)$,
so that $L_{i-1}^k(t)\cap L_i^k(t-1)=\varnothing$ (see~\Cref{fig:b1,fig:b3,fig:b5}). 
If $\D r_{i-1}^k(t)=0$, then, by definition, an element is in both $L_{i-1}^k(t)$ and $L_i^k(t-1)$ if and only if it is of the form 
\begin{align*}
        r_i^k(t-1)\theta_i-      r_{i-1}^k(t)                         \theta_{i+1}+k\phi
    &=  r_i^k(t-1)\theta_i-\left(r_{i-1}^k(t-1)+\D r_{i-1}^k(t)\right)\theta_{i+1}+k\phi \\
    &=  d\theta_i-d_-\theta_{i+1}+k\phi,
\end{align*}
and $\delta_-=0$ (see~\Cref{fig:b2,fig:b4}).
Because $L_{i-1}^k(t)$ and $L_i^k(t)$ are disjoint, we must have $d\theta_i-d_-\theta_{i+1}+k\phi \notin L_{i-1}^k(t-1)$.

Similarly, if $\D r_{i+1}^k(t)=\delta_{+}=1$ then $d_+<r_{i+1}^k(t)$, so that $L_{i+1}^k(t)\cap L_i^k(t-1)=\varnothing$,
whereas if $\D r_{i+1}^k(t)=0$ then an element is in both $L_{i+1}^k(t)$ and $\big(L_i^k(t-1)\big)^+$ if it is of the form 
\[
        \left(r_{i}^k(t-1)+\beta\right)\theta_i+r_{i+1}^k(t-1)\theta_{i+1}+k\phi
    =   \left(d+\beta\right)\theta_i+d_+\theta_{i+1}+k\phi
\]
with $\beta\in \{0,\pm1\}$ (see~\Cref{fig:b2,fig:b6}).
However, we have
\begin{multline*}
    \left\{(d+\beta)\theta_i+d_+\theta_{i+1}+k\phi\, :\, \beta\in \{0,\pm1\}\right\} \setminus L_{i+1}^k(t-1)\\
    =\begin{cases}
        \left\{(d+1)\theta_i+d_+\theta_{i+1}+k\phi\right\}&\D r_i^k(t)=1,\\
        \varnothing&\D r_i^k(t)=0,
    \end{cases}
\end{multline*}
and the proposition follows.
\end{proof}

\Cref{prop:old&newFog} implies the following, which serves a role in the proof of~\Cref{prop:mu>a>0}.
\begin{cor}\label{obs:invasion}
    Let $t\in\N$, $k\in[h]$ and $i,j\in\I$ such that $|i-j|=1$.
    Then $\D r_i^k(t)=0$ implies $V_{i,j}^k(t) = \varnothing$, while $\D r_i^k(t)=1$ implies $V_{j,i}^k(t) = \varnothing$.
\end{cor}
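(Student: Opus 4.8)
The statement is an immediate consequence of the explicit description of the sets $V_{a,b}^k(t)$ furnished by \Cref{prop:old&newFog}. The plan is to extract from \Cref{eq:oldnewfogeq2a} and \Cref{eq:oldnewfogeq2b} the single observation that, for indices with $|a-b|=1$, the set $V_{a,b}^k(t)$ can be nonempty only when the ``central'' front $a$ advances at time $t$ while the ``side'' front $b$ stays still; that is, only when $\D r_a^k(t)=1$ and $\D r_b^k(t)=0$. Granting this, both halves of the corollary follow by choosing which of $i$ and $j$ is treated as the central index.

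For the first claim, fix $j\in\{i-1,i+1\}$ and apply \Cref{prop:old&newFog} with the central direction equal to $i$, so that the parameter $\delta$ there equals $\D r_i^k(t)$. The set $V_{i,j}^k(t)$ is then described by \Cref{eq:oldnewfogeq2a} when $j=i-1$ and by \Cref{eq:oldnewfogeq2b} when $j=i+1$. Substituting the hypothesis $\D r_i^k(t)=0$ in either case lands us in the branch of \Cref{prop:old&newFog} that returns $\varnothing$, whence $V_{i,j}^k(t)=\varnothing$.

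For the second claim the roles of $i$ and $j$ are swapped, so that now $i$ is a side index. If $j=i+1$, then $V_{j,i}^k(t)=V_{i+1,i}^k(t)$ has the shape $V_{c,c-1}^k(t)$ with $c=i+1$, so it is governed by \Cref{eq:oldnewfogeq2a} applied with central direction $i+1$, in which $\D r_i^k(t)$ plays the role of the side derivative $\delta_-$. If $j=i-1$, then $V_{j,i}^k(t)=V_{i-1,i}^k(t)$ has the shape $V_{c,c+1}^k(t)$ with $c=i-1$, so it is governed by \Cref{eq:oldnewfogeq2b} applied with central direction $i-1$, in which $\D r_i^k(t)$ plays the role of $\delta_+$. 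In both cases substituting $\D r_i^k(t)=1$ into that side derivative again places us in the branch of \Cref{prop:old&newFog} returning $\varnothing$, so $V_{j,i}^k(t)=\varnothing$.

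There is no genuine analytic difficulty here beyond \Cref{prop:old&newFog} itself; the only thing requiring attention is the arithmetic of the cyclic indices modulo $4$ --- pairing each of the sets $V_{i,i-1}^k(t),\,V_{i,i+1}^k(t),\,V_{i-1,i}^k(t),\,V_{i+1,i}^k(t)$ with the correct one of \Cref{eq:oldnewfogeq2a}--\Cref{eq:oldnewfogeq2b} after relabeling the central direction, and keeping straight which of $\delta$, $\delta_-$, $\delta_+$ is which. In particular, no parity considerations (the intersections with $\Z^3$) are needed, since in each case we reach the branch that returns $\varnothing$ outright.
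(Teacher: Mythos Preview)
Your proof is correct and matches the paper's approach: the paper simply states that \Cref{obs:invasion} follows from \Cref{prop:old&newFog}, and you spell out that deduction. The key observation you extract --- that $V_{a,b}^k(t)$ can be nonempty only when $\D r_a^k(t)=1$ and $\D r_b^k(t)=0$ --- is exactly what the proof of \Cref{prop:old&newFog} establishes, and both halves of the corollary are immediate from it after the index bookkeeping you describe.

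One caveat worth flagging: as printed, the two cases in \cref{eq:oldnewfogeq2a} appear to be interchanged relative to \cref{eq:oldnewfogeq2b} and to the proof text of \Cref{prop:old&newFog} (and to \Cref{fig:fog2}). Read literally, \cref{eq:oldnewfogeq2a} would \emph{not} return $\varnothing$ when $\delta=0$. Your argument is nonetheless correct, since the proof of \Cref{prop:old&newFog} shows $V^k_{i,i-1}(t)=\varnothing$ whenever $\delta_-=1$ \emph{or} $\delta=0$; but you may want to cite the proof (or the corrected form of \cref{eq:oldnewfogeq2a}) rather than the displayed formula itself when handling the $j=i-1$ case.
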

In particular,  for all $t\in\N$, $k\in[h]$ and $i\in\I$ we have
\begin{equation}\label{eq:p-small}
    |\D p^k_i(t)| \le 2
\end{equation}
since $\D p^k_{i,i\pm1}(t)\in\{-1,0,1\}$.
Furthermore, by~\Cref{obs:invasion} we obtain the following, which is used to prove~\cref{prop:mu>vp}.
\begin{cor}\label{cor:p>0}
	Let $t\in\N$, $k\in[h]$ and $i\in\I$.
	Then $\D r_i^k(t)=1$ implies $\D p_i^k(t) \ge 0$.
\end{cor}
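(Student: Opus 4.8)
The plan is to read this off directly from \Cref{obs:invasion} together with the defining formula for $\D p_i^k(t)$. Recall that
\[
  \D p_i^k(t) = \D p_{i,i+1}^k(t) + \D p_{i,i-1}^k(t),
  \qquad
  \D p_{i,j}^k(t) = \bigl|V_{i,j}^k(t)\cap(B(t)\cup F(t))\bigr| - \bigl|V_{j,i}^k(t)\cap(B(t)\cup F(t))\bigr|,
\]
so the quantity $\D p_i^k(t)$ is a sum of two ``inflow'' cardinalities, namely those of $V_{i,i+1}^k(t)\cap(B(t)\cup F(t))$ and $V_{i,i-1}^k(t)\cap(B(t)\cup F(t))$, minus two ``outflow'' cardinalities coming from $V_{i+1,i}^k(t)$ and $V_{i-1,i}^k(t)$. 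The inflow terms are manifestly non-negative, so it suffices to show that under the hypothesis $\D r_i^k(t)=1$ both outflow terms vanish.

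First I would apply \Cref{obs:invasion} with the pair $(i,j)$ taken to be $(i,i+1)$ and then $(i,i-1)$: in each case $|i-j|=1$ and, since $\D r_i^k(t)=1$, the corollary gives $V_{j,i}^k(t)=\varnothing$, i.e.\ $V_{i+1,i}^k(t)=\varnothing$ and $V_{i-1,i}^k(t)=\varnothing$. Substituting into the formula above, the two subtracted terms are zero, leaving
\[
  \D p_i^k(t) = \bigl|V_{i,i+1}^k(t)\cap(B(t)\cup F(t))\bigr| + \bigl|V_{i,i-1}^k(t)\cap(B(t)\cup F(t))\bigr| \ge 0,
\]
which is the claim. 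The only point requiring a moment's care is matching the roles of the two subscripts in \Cref{obs:invasion} to the ``from $j$ into $i$'' direction in the definition of $\D p_{i,j}^k(t)$; once that bookkeeping is pinned down there is no further obstacle, as the statement is an immediate consequence of the already-established invasion dichotomy.
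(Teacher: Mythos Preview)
Your proposal is correct and matches the paper's own reasoning: the paper derives \Cref{cor:p>0} directly from \Cref{obs:invasion} without further elaboration, and your argument is precisely the unpacking of that derivation. The bookkeeping you flag is handled correctly --- $\D r_i^k(t)=1$ kills the two ``outflow'' sets $V_{i\pm1,i}^k(t)$, leaving only non-negative terms.
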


\noindent\textbf{Vertical growth.} The inter-layer graph connectivity of~$G_q$ implies the following.
For $\ell=k\pm 1$, we have
\begin{equation}\label{obs:overflowing}
        \left|L^\ell_{i,d}(a,b)\cap \left(B(t)\cup \FF(t)\right)\right|
    \ge \left|L^k_{i,d}(a,b)\cap B(t-1)\right|.
\end{equation}

\begin{prop}\label{prop:Vert-Fogarty}
    Let $i\in \I$ and $k,\ell\in [h]$ such that $|k-\ell| = 1$,
    and assume that $\D r_i^\ell(t) = 1$ and $r_i^k(t-1) = r_i^\ell(t-1) + 1$. 
    Then $\vp_i^\ell(t) + \D f_i^\ell(t) \ge \vp_i^k(t-1) - 2$.
\end{prop}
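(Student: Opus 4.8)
The plan is to transport the burning vertices of level $k$ at time $t-1$ vertically into layer $\ell$ via the overflow inequality~\Cref{obs:overflowing}, then to note that the line they land on almost coincides with $L_i^\ell(t)$, and finally to argue that this line is ``fresh'' so that the overflowed protected vertices are genuinely counted by $\Delta f_i^\ell(t)$.

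\emph{Aligning the two levels.} Since $\Delta r_i^\ell(t)=1$ and $r_i^k(t-1)=r_i^\ell(t-1)+1$, both levels sit over the same line perpendicular to $\theta_i$: put $d:=r_i^k(t-1)=r_i^\ell(t-1)+1=r_i^\ell(t)$, $a:=r_{i-1}^k(t-1)$ and $b:=r_{i+1}^k(t-1)$. By~\Cref{def:lDef}, $L_i^k(t-1)=L_{i,d}^k(a,b)$, so $\varphi_i^k(t-1)=|L_{i,d}^k(a,b)\cap B(t-1)|$, and \Cref{obs:overflowing} (applied with this $d$ and the layer $\ell=k\pm1$) gives $|L_{i,d}^\ell(a,b)\cap(B(t)\cup\FF(t))|\ge\varphi_i^k(t-1)$.

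\emph{Replacing $L_{i,d}^\ell(a,b)$ by $L_i^\ell(t)$.} By~\Cref{def:lDef}, $L_i^\ell(t)=L_{i,d}^\ell(r_{i-1}^\ell(t),r_{i+1}^\ell(t))$. Because $\vec\rho_{i\pm1}(t)$ is Lipschitz and $|k-\ell|=1$, and because each $r_{i\pm1}^k$ is non-decreasing, $r_{i-1}^\ell(t)\ge r_{i-1}^k(t)-1\ge a-1$ and $r_{i+1}^\ell(t)\ge b-1$. Hence the vertices of $L_{i,d}^\ell(a,b)$ not lying on $L_i^\ell(t)$ have their $\theta_{i+1}$-coordinate in one of the two intervals $[-a,-r_{i-1}^\ell(t))$ or $[r_{i+1}^\ell(t),b)$, each of length at most $1$ and hence meeting the line in at most one lattice point; so at most two such vertices exist. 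It follows that $|L_i^\ell(t)\cap(B(t)\cup\FF(t))|\ge\varphi_i^k(t-1)-2$.

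\emph{Identifying the count, and the main difficulty.} Split $L_i^\ell(t)\cap(B(t)\cup\FF(t))$ into $L_i^\ell(t)\cap B(t)$, of size $\varphi_i^\ell(t)$, and $L_i^\ell(t)\cap(\FF(t)\setminus B(t))$. By definition $\Delta F_i^\ell(t)=\bigl(L_i^\ell(t)\cap(\FF(t)\setminus B(t))\bigr)\setminus F^\ell(t-1)$; since a protected vertex never burns, $F^\ell(t-1)\cap B(t)=\varnothing$, so it suffices to show $L_i^\ell(t)\cap F^\ell(t-1)=\varnothing$. This is the one genuinely geometric point, and the main obstacle: we have $F^\ell(t-1)\subseteq\bigcup_{j\in\I}\bigcup_{s\le t-1}L_j^\ell(s)$, and the line underlying $L_i^\ell(t)$ is at distance $d=r_i^\ell(t)$ perpendicular to $\theta_i$ in layer $\ell$; one must rule out that a point of it was reached by an earlier level of layer $\ell$. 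It is disjoint from every $L_i^\ell(s)$ with $s\le t-1$, since $r_i^\ell(s)\le r_i^\ell(t-1)<d$; and a short computation using $\theta_{i+2}=-\theta_i$, $\theta_{i-1}=-\theta_{i+1}$ and the bounds $r_\bullet^\ell(s)\le r_\bullet^\ell(t-1)$ rules out $L_{i\pm1}^\ell(s)$ and $L_{i+2}^\ell(s)$ as well (this is where the $45^\circ$ versus axis-aligned geometry of $G_1$ versus $G_2$ enters, but it is routine in both cases). Therefore $\Delta F_i^\ell(t)=L_i^\ell(t)\cap(\FF(t)\setminus B(t))$, and summing the two disjoint parts gives $\varphi_i^\ell(t)+\Delta f_i^\ell(t)=|L_i^\ell(t)\cap(B(t)\cup\FF(t))|\ge\varphi_i^k(t-1)-2$, as claimed.
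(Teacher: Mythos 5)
Your proof is correct and follows essentially the same route as the paper's: identify $L_i^k(t-1)=L_{i,d}^k(a,b)$ with $d=r_i^\ell(t)$, push it down to layer $\ell$ via the overflow inequality, and use the Lipschitz property of $\vec\rho_{i\pm1}$ to show $L_{i,d}^\ell(a,b)$ misses $L_i^\ell(t)$ by at most two vertices. The one place you go beyond the paper is the third step: the paper simply writes $\vp_i^\ell(t)+\Delta f_i^\ell(t)=\bigl|\Phi_i^\ell(t)\sqcup\Delta F_i^\ell(t)\bigr|=\bigl|L_i^\ell(t)\cap(B(t)\cup\FF(t))\bigr|$ without comment, and you correctly observe that because $\Delta F_i^\ell(t)$ subtracts $F^\ell(t-1)$, this equality (rather than a useless $\le$) requires $L_i^\ell(t)\cap F^\ell(t-1)=\varnothing$, which you establish by noting that $F^\ell(t-1)$ lies on the union of the fronts $L_j^\ell(s)$ for $s\le t-1$, and each of these is disjoint from $L_i^\ell(t)$ because $r_i^\ell(s)\le r_i^\ell(t-1)<d$ and the appropriate sign considerations for $j\ne i$. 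So: same proof, with the paper's one elided step made explicit and carefully justified.
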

\begin{proof}
We have $L_i^k(t-1) = L^k_{i,d}(d_-,d_+)$,
where $d = r_i^k(t-1) = r_i^\ell(t)$ and $d_\pm = r_{i\pm1}^k(t-1)$.
Moreover, by the Lipschitz property of~$\vec\rho_{i\pm 1}(t-1)$ we have
\[
        d_\pm 
    =   r_{i\pm 1}^k(t-1)
    \le 1 + r_{i\pm 1}^{\ell}(t-1)
    \le 1 + r_{i\pm 1}^{\ell}(t),
\]
so that
\[
    L_i^\ell(t) \supseteq L_{i,d}^\ell(d_- - 1, d_+ - 1)
\]
and thus
\begin{align*}
         \vp_i^{\ell}(t) + \D f_i^{\ell}(t)
    &=   \left|\Phi_i^{\ell}(t) \sqcup \D F_i^{\ell}(t)\right|
\\
    &=   \left|L_i^{\ell}(t) \cap 
            \left(B(t) \cup \FF(t)\right)\right| \\
    &\ge \left|L_{i,d}^\ell(d_- - 1,d_+ - 1) \cap 
            \left(B(t) \cup \FF(t)\right)\right| \\
    &\ge \left|L_{i,d}^\ell(d_-,d_+)\cap 
            \left(B(t) \cup \FF(t)\right)\right| - 2 \\
    &\ge \vp_i^k(t-1) - 2,
\end{align*}
where the last inequality is by~\cref{obs:overflowing}.
\end{proof}
\subsection{Potential bounds' proofs}\label{subsec:potential bounds}
\begin{proof}[Proof of~\Cref{prop:mu>a>0}]
We consider two cases separately.
First, if $\D r_i^k(t)=0$ then $a_i^k(t) = 0$, so we need to show that $\D\mu_i^k(t)\ge 0$. Now
\[
        \vp_i^k(t)+\D f_i^k(t)
    =   \left|\Phi_i^k(t) \sqcup \D F_i^k(t)\right|
    =   \left|\Phi_i^k(t-1)\right| 
        + \left|\D\Phi_i^k(t) \sqcup \D F_i^k(t)\right|,
\]
since $\Phi_i^k(t-1)\subseteq \Phi_i^k(t)$.
Moreover, $V_{i,i-1}^k(t)=V_{i,i+1}^k(t)=\varnothing$ by \cref{obs:invasion}, so 
\[
\D p_i^k(t)
    =   \D p_{i,i-1}^k(t) + \D p_{i,i+1}^k(t)
    =    - \left|\left(V_{i-1,i}^k(t)
                \sqcup V_{i+1,i}^k(t)\right)
            \cap \left(B(t)\sqcup F(t)\right)\right|.
\]
Putting these together we obtain
\begin{align*}
         \D\mu_i^k(t) 
    &=   \D\vp_i^k(t) + \D f_i^k(t) + \D p_i^k(t) \\
    &=   \left|\D\Phi_i^k(t) \sqcup \D F_i^k(t)\right| 
         - \left|\left(V_{i-1,i}^k(t)
                \sqcup V_{i+1,i}^k(t)\right)
            \cap \left(B(t)\sqcup F(t)\right)\right|
    \ge  0,
\end{align*}
since $V_{i,i\pm1}^k(t)\subseteq L_{i\pm1}^k(t) \setminus L_{i\pm1}^k(t-1)$.

Next, if $\D r_i^k(t)=1$ we have $V_{i-1,i}^k(t)=V_{i+1,i}^k(t)=\varnothing$ by~\cref{obs:invasion}, so that
\begin{align*}
        \vp_i^k(t)+\D p^k_i(t)
    &=  |L_i^k(t)\cap B(t)|+\D p^k_{i,i+1}(t)+\D p^k_{i,i-1}(t)\\
    &=  \Big|\Big(L_i^k(t)\sqcup\big(V_{i,i-1}^k(t)\sqcup     V_{i,i+1}^k(t)\big)\Big)\cap B(t)\Big| + \Big|\big(V_{i,i-1}^k(t)\sqcup V_{i,i+1}^k(t)\big)\cap F(t)\Big|.
\end{align*}
Note that as $\D r_i^k(t)=1$, we have $\D F_i^k(t)=(L_i^k(t)\setminus B(t))\cap F(t)$ so that showing
\[
        \D\mu_i^k(t)
    =   \D\vp_i^k(t)+\D p_i^k+\D f_i^k
    \ge qa_i^k(t)
\] 
reduces to proving that
\[
        \left|\Big(L_i^k(t) \sqcup \big(V_{i,i-1}^k(t) \sqcup V_{i,i+1}^k(t)\big)\Big) \cap \big(B(t)\sqcup F(t)\big)\right|
    =   \vp_i^k(t)+\D p_i^k+\D f_i^k
    \ge qa_i^k(t) + \vp_i^k(t-1).
\]
When $\vp_i^k(t-1)=0$ we have $a_i^k(t)=0$ by~\Cref{eq:prime a property zero}, and this is straightforward;
otherwise, this follows from~\Cref{prop:fog1}, using~\Cref{prop:old&newFog}.
\end{proof}

\begin{proof}[Proof of~\Cref{prop:mu>vp}]
It suffices to show that for all $k\in[h]$
\[
     p_i^k(t)+f_i^k(t) = \mu_i^k(t) - \vp_i^k(t) > -4qh^4.
\]
Fix $t\in\N$ and $k\in[h]$. 
If $\D r_i^k(\tau)=1$ for all $\tau\in[t]$, then
\[
    p_i^k(t) = \sum_{\tau\in[t]}\D p_i^k(\tau) \ge 0
\]
by~\Cref{cor:p>0}, so $p_i^k(t)+f_i^k(t) \ge 0$ and we are done;
otherwise, let
\[
s = \max\{\tau\in[t]\ :\ \D r_i^k(\tau)=0\}.
\]
By the definition of $s$ we have $\D r_i^k(s)=0$, so that $a_i^k(s)=0$ and, by~\Cref{eq:prime a property one}, $\vp_i^k(s-1) \le 4qh^4 - 2$.

Next we verify the following inequalities.
\begin{align*}
    f_i^k(t) &\ge f_i^k(s-1), \\
    p_i^k(t) &\ge p_i^k(s), \\
    \mu_i^k(s-1) & > 0.
\end{align*}
The first follows from the fact that $\D f_i^k(\tau)\ge 0$ for all~$\tau\in\N$;
the second --- 
from the fact that
$\D p_i^k(\tau)\ge 0$ for $s<\tau\le t$
which, in turn, follows from~\Cref{cor:p>0} and the choice of~$s$;
the last follows from~\cref{eq:mu-increasing} and the fact that $\mu_i^k(0) = \vp_i^k(0)>0$.

Putting these three inquealities together, we obtain
\begin{align*}
         f_i^k(t) + p_i^k(t)
    &\ge f_i^k(s-1) + p_i^k(s)\\
    &=   \mu_i^k(s-1) - \vp_i^k(s-1) + \D p_i^k(s)\\
    &\ge 0 - (4qh^4 - 2) - 2
    =   -4qh^4,    
\end{align*}
where the last inequality is by~\Cref{eq:p-small}.
\end{proof} 
\section{Proof of the main technical statement}\label{sec:Main proof}

\subsection{Precise evolution of the fronts structure}\label{subs:precise evolution}
Finally we are ready to present the definition of the activity indicators~$a_i^k(t)$, which concludes the exact description of the evolution of the fronts structure.
This is done using an auxiliary indicator function~$g_i(t)$.
The role of~$g_i(t)$ is to indicate whether the~$i$-th front have experienced a period of reduced growth of its potential while having significant fierity (in comparison with the natural~$qh$ growth per time-step, guaranteed by~\Cref{prop:mu>a>0} when $a_i^k(t)=1$ for all~$k\in[h]$).
This definition guarantees that leaving such a period (that is, to have~$g_i(t)=0$ and~$g_i(t-1)=1$) requires either an increase of~$2qh^2$ in the potential, or a reduction of the fierity of the front below a threshold.

Define
\begin{equation}\label{eq:gDef}    
    g_i(t) := \begin{cases} 1 & 
    \Big(\vp_i(t) > 4qh^5\Big)\text{ and } \Big(\D\mu_i(t) < qh\left(1-g_i(t-1)\right\} + 2qh^2g_i(t-1)\Big)\\
    0 & \text{otherwise.}
    \end{cases}
\end{equation}
Using this, we define $a_i^k(t)$ as
\begin{equation}\label{eq:aDef}    
    a_i(t) := \begin{cases} 1 & 
    \varphi^k_i(t-1) >
    \left(4h\left(1-g_i(t-1)\right) + 4qh^3g_i(t-1)\right)\left(r_i^k(t-1) - r_i^{\min}(t-1)\right)\\
    0 & \text{otherwise.}
    \end{cases}
\end{equation}
It is easy to verify that both~\Cref{eq:prime a property zero} and~\Cref{eq:prime a property one} hold.

Equipped with the precise definition of $a_i^k(t)$, we state three auxiliary lemmata, whose proofs we postpone to~\Cref{subsec:aux-lemmata-proofs}.
\Cref{lem:PullAndOverflow} establishes that whenever a level is ``pulled'', its potential growth is typical. \Cref{lem:badTimes} establishes that the duration of a period of reduced potential growth is at most~$2h$ time steps. Finally, \Cref{lem:g_i(t)<8h^5} establishes that a long period of high fierity can be subdivided into small pieces, such that during each piece (except the last), the average potential growth is typical.
\begin{lem}\label{lem:PullAndOverflow}
    Let $t\in\N$, $i\in\I$, and suppose that $\D r_i(t)>a_i(t)$.
    Then $g_i(t)=0$ and $\D \mu_i(t) \ge qh$.
\end{lem}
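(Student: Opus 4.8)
The plan is to locate a single ``source'' layer whose vertical overflow, passed to an adjacent pulled layer via \Cref{prop:Vert-Fogarty}, already carries essentially all of the required growth; the remaining layers will be handled by the cheap bound of \Cref{prop:mu>a>0}.

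\textbf{Step 1 (a structural fact about $\lip$).} Since $\D r_i^k(t)\ge a_i^k(t)$ with both quantities in $\{0,1\}$, the hypothesis $\D r_i(t)>a_i(t)$ (equivalently, the right-hand side of~\Cref{eq: being pulled} is positive) forces a \emph{pulled} layer to exist: some $k$ with $\D r_i^k(t)=1$ and $a_i^k(t)=0$. Among all pulled layers I would pick $p^*$ with $r_i^{p^*}(t-1)$ maximal. Writing $\vec x=\vec\rho_i(t-1)+\vec\alpha_i(t)$ and using the explicit form $\lip(\vec x)^k=\max_{\ell}(x^\ell-|k-\ell|)$ of the minimal dominating Lipschitz vector, the equality $r_i^{p^*}(t)=x^{p^*}+1$ shows this maximum is attained at some $\ell\ne p^*$; tracing it one step toward $p^*$ produces a neighbor $\ell^*$ of $p^*$ with $r_i^{\ell^*}(t)\ge r_i^{p^*}(t-1)+2$, hence $\D r_i^{\ell^*}(t)=1$ and, by the Lipschitz property of $\vec\rho_i(t-1)$, $r_i^{\ell^*}(t-1)=r_i^{p^*}(t-1)+1$. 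Finally, $\ell^*$ must be \emph{active}: if $a_i^{\ell^*}(t)=0$ then $\ell^*$ would itself be a pulled layer with $r_i^{\ell^*}(t-1)>r_i^{p^*}(t-1)$, contradicting the maximality of $p^*$; in particular $p^*\ne\ell^*$.

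\textbf{Step 2 (overflow at $p^*$).} The pair $(k,\ell)=(\ell^*,p^*)$ now satisfies the hypotheses of \Cref{prop:Vert-Fogarty}, which gives $\vp_i^{p^*}(t)+\D f_i^{p^*}(t)\ge \vp_i^{\ell^*}(t-1)-2$. Put $c_i:=4h(1-g_i(t-1))+4qh^3 g_i(t-1)$, so $c_i\ge 4h$. The activity criterion~\Cref{eq:aDef}, applied to the active layer $\ell^*$ and the inactive layer $p^*$ (both using the same $r_i^{\min}(t-1)$), gives $\vp_i^{\ell^*}(t-1)>c_i(r_i^{\ell^*}(t-1)-r_i^{\min}(t-1))$ and $\vp_i^{p^*}(t-1)\le c_i(r_i^{p^*}(t-1)-r_i^{\min}(t-1))$; subtracting and using $r_i^{\ell^*}(t-1)-r_i^{p^*}(t-1)=1$ yields $\vp_i^{\ell^*}(t-1)-\vp_i^{p^*}(t-1)>c_i$. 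Since $\D r_i^{p^*}(t)=1$, \Cref{cor:p>0} gives $\D p_i^{p^*}(t)\ge 0$, so
\[
  \D\mu_i^{p^*}(t)=\vp_i^{p^*}(t)-\vp_i^{p^*}(t-1)+\D f_i^{p^*}(t)+\D p_i^{p^*}(t)\ge \vp_i^{\ell^*}(t-1)-\vp_i^{p^*}(t-1)-2>c_i-2,
\]
and by integrality $\D\mu_i^{p^*}(t)\ge c_i-1$.

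\textbf{Step 3 (summation and conclusion).} By \Cref{prop:mu>a>0}, $\D\mu_i^k(t)\ge q a_i^k(t)$ for all $k$; in particular $\D\mu_i^{\ell^*}(t)\ge q$ and $\D\mu_i^k(t)\ge 0$ for every other $k$. Since $p^*\ne\ell^*$, summing over $k\in[h]$ gives $\D\mu_i(t)=\sum_{k\in[h]}\D\mu_i^k(t)\ge (c_i-1)+q$. If $g_i(t-1)=0$ then $c_i=4h$ and $\D\mu_i(t)\ge 4h-1+q\ge qh$ (since $q(h-1)\le 2h-2\le 4h-1$); as this makes the second clause of~\Cref{eq:gDef} fail, $g_i(t)=0$. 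If $g_i(t-1)=1$ then $c_i=4qh^3$ and $\D\mu_i(t)\ge 4qh^3-1+q$, which is at least both $qh$ and $2qh^2$, so again~\Cref{eq:gDef} gives $g_i(t)=0$. In both cases $\D\mu_i(t)\ge qh$ and $g_i(t)=0$, as required.

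\textbf{Main obstacle.} The delicate point is Step~1: one must extract from the $\lip$ bookkeeping an \emph{active} layer lying exactly one unit ahead of an \emph{adjacent} pulled layer --- precisely the configuration the hypotheses of \Cref{prop:Vert-Fogarty} are tailored for. After that, everything hinges on the size of the multiplier $c_i$ in~\Cref{eq:aDef}: it is calibrated so that the overflow gain at $p^*$ dominates the possibly large stale fierity $\vp_i^{p^*}(t-1)$ that gets subtracted, which is exactly why a single pulled layer (together with its source) suffices while the remaining layers need only the trivial bound.
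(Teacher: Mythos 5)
Your proof follows essentially the same route as the paper's: locate an adjacent pulled/active pair $(\ell^*,p^*)$ with $r_i^{\ell^*}(t-1)=r_i^{p^*}(t-1)+1$, feed it into \Cref{prop:Vert-Fogarty}, and then extract the gap in fierities from the activity threshold in \Cref{eq:aDef}. Your Step~1 is a more explicit rendering of the same combinatorial fact (the paper just invokes ``by the Lipschitz property, such $k$ and $\ell$ exist''); picking $p^*$ with maximal $r_i^{p^*}(t-1)$ is a clean way to avoid the iteration implicit in the paper's terse wording, and the formula $\lip(\vec x)^k=\max_\ell(x^\ell-|k-\ell|)$ is correct and does the job. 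Two small improvements over the paper's write-up are worth noting. First, you use \Cref{cor:p>0} to get $\D p_i^{p^*}(t)\ge 0$ (valid since $\D r_i^{p^*}(t)=1$) rather than the weaker $\D p_i^{p^*}(t)\ge -2$ from \Cref{eq:p-small}, and you also add the $\D\mu_i^{\ell^*}(t)\ge q$ contribution, giving $\D\mu_i(t)\ge c_i-1+q$ rather than the paper's $\D\mu_i(t)>4h-4$. Second, you actually carry the full threshold $c_i=4h(1-g_i(t-1))+4qh^3g_i(t-1)$ through to the end, splitting on $g_i(t-1)$; this is needed to verify $g_i(t)=0$ via \Cref{eq:gDef} when $g_i(t-1)=1$ (where the relevant bar is $2qh^2$, not $qh$), a point the paper's written proof leaves implicit after replacing $c_i$ by $4h$. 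So the proof is correct and slightly more careful than the paper's; conceptually it is the same argument.
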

\begin{lem}\label{lem:badTimes}
    Let $t\in\N$, $i\in\I$, and suppose that $\D \mu_i(t+1) < qh$ and $g_i(t)=0$.
    Then there exists $t+1 \le t\p \le t+2h$ with $g_i(t\p)=0$.
\end{lem}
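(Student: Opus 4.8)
I would argue by contradiction: assume $g_i(\tau)=1$ for all $\tau\in\{t+1,\dots,t+2h\}$, and aim to violate one of the two defining conditions of~\Cref{eq:gDef} somewhere in this window.

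\textbf{Step 1 — unpacking a long $g_i=1$ streak.} By~\Cref{eq:gDef}, $g_i(\tau)=1$ forces $\vp_i(\tau)>4qh^5$ for every $\tau$ in the window, together with $\D\mu_i(t+1)<qh$ (our hypothesis) and $\D\mu_i(\tau)<2qh^2$ for $t+2\le\tau\le t+2h$. By~\Cref{lem:PullAndOverflow} in contrapositive form, no level is pulled during the window, i.e.\ $\D r_i^k(\tau)=a_i^k(\tau)$ for all $k\in[h]$ and all $\tau$ in the window. Since $\D\mu_i^k(\tau)\ge qa_i^k(\tau)\ge 0$ by~\Cref{prop:mu>a>0}, we also get $\D\mu_i^k(\tau)\le\D\mu_i(\tau)$ for each single $k$, which in turn caps the per-level fierity increments: $\D\vp_i^k(\tau)\le\D\mu_i^k(\tau)+2<2qh^2+2$ for $\tau\ge t+2$.

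\textbf{Step 2 — a persistent fierity gap.} As $\vp_i(\tau)=\sum_k\vp_i^k(\tau)>4qh^5=h\cdot 4qh^4$, at each time in the window some level has fierity above $4qh^4$; since $r_i^k-r_i^{\min}\le h-1$ makes the threshold in~\Cref{eq:aDef} less than $4qh^4$, such a ``hot'' level is active and advances next step. Conversely, $\D\mu_i(t+1)<qh$ and $\D\mu_i(t+1)\ge qa_i(t+1)$ give $a_i(t+1)\le h-1$, so some level is inactive at $t+1$; by~\Cref{eq:aDef} with $g_i(t)=0$, its fierity at $t$ is at most $4h(h-1)$, and by the increment bound of Step~1 it can gain only $\approx 2qh^2$ per step, hence remains well below $4qh^4$ throughout the window. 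Thus at every time there are both a hot and a ``cold'' level; tracing a vertical path between them and using that fierities along it vary by at most the Lipschitz-bounded total, one isolates a vertically-adjacent pair $(\ell,\ell+1)$ in which one member is hot and the fierity gap is of order $4qh^3$.

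\textbf{Step 3 — the gap resolves via a pull or via overflow.} The hot member of the pair advances each step; by the no-pulling conclusion and the Lipschitz shape of~$\vec\rho_i$, the cold neighbour cannot stay more than one step behind it without being pulled, which Step~1 forbids. Hence within a bounded number of steps the cold level $\ell$ advances while sitting one step behind a hot level; then~\Cref{prop:Vert-Fogarty} endows $\ell$ in that single step with more than $4qh^4-2$ burning-or-newly-protected vertices, whereas $\ell$'s fierity just before was at most the activity threshold plus one step of growth, i.e.\ at most $\approx 4qh^4-4qh^3$. This gives $\D\mu_i^\ell>2qh^2$ at that step, so $\D\mu_i>2qh^2$ and $g_i=0$ there — contradicting the streak. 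Since the pair is separated by at most $h-1$ levels and each advancement in the propagation costs at most two time-steps, this event must occur within $2h$ time-steps of $t$, which completes the contradiction.

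\textbf{Expected main obstacle.} The delicate part is the quantitative bookkeeping in Steps~2--3: keeping track of \emph{which} level is hot as the front evolves and checking that a hot level stays hot long enough (its fierity drains by only $\lesssim 2qh^2$ per step by Step~1, but the margin against the threshold $\approx 4qh^4$ is tight), making the ``at most two time-steps per advancement'' claim precise from the Lipschitz/no-pulling structure, and verifying that all the inequalities close with the constants $4qh^5,4qh^4,4qh^3,2qh^2,qh$ built into~\Cref{eq:gDef,eq:aDef} — in particular for the smallest case $h=2$.
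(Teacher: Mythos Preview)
Your Step~1 is correct and matches the paper's setup: under the contradiction hypothesis you get $\vp_i(\tau)>4qh^5$, the bound $\D\mu_i(\tau)<2qh^2$, no pulling (via the contrapositive of \Cref{lem:PullAndOverflow}), and the per-level increment bound $\D\vp_i^k(\tau)<2qh^2+2$.

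The genuine gap is in Steps~2--3. Your argument hinges on the cold level~$\ell$ eventually \emph{advancing} next to a hot neighbour so that \Cref{prop:Vert-Fogarty} fires; but you never rule out the possibility that~$\ell$ simply never advances throughout the whole window. Under your Step~1 there is no pulling, and~$\ell$ may remain inactive (its fierity stays small, as you note), so $r_i^\ell$ can sit still for all $2h-1$ steps. In that scenario your overflow mechanism never applies. The paper resolves exactly this case --- its ``third case'' --- by a different route: if~$\ell$ never advances then, by Lipschitz, no level can advance more than $2(h-1)$ times in a $(2h-1)$-step window, so \emph{every} level~$k$ is inactive at some time $\tau^k$; bounding each $\vp_i^k(\tau^k)$ by $4qh^3(h-1)$ and propagating via the increment bound then forces $\vp_i(\tau^{\max})<4qh^5$, contradicting the first clause of \Cref{eq:gDef}.

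More broadly, the paper's proof does not rely on isolating an adjacent hot/cold pair and invoking \Cref{prop:Vert-Fogarty} directly. Instead it fixes the single cold level~$\ell$ from the start and runs a clean three-way case split: (i)~$\ell$ becomes active at some $T\le t+2h-1$ --- then its fierity must exceed the $g_i=1$ threshold of order $4qh^3$, whereas accumulation from $\vp_i^\ell(t)\le 4h(h-1)$ keeps $\vp_i^\ell(T-2)$ well below this, so $\D\vp_i^\ell(T-1)>2qh^2+2$ (a threshold-jump argument, not overflow); (ii)~some level~$k$ is active throughout --- then $r_i^k$ gains $2h-1$ and Lipschitz forces~$\ell$ to be pulled, contradicting Step~1 via \Cref{lem:PullAndOverflow}; (iii)~otherwise --- the total-fierity argument above. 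Your ``tracing a vertical path'' and ``at most two time-steps per advancement'' heuristics are attempting to merge cases~(i) and~(ii) into a single overflow picture, but neither the adjacency of the hot level to~$\ell$ nor its persistence is established, and the obstacle you flag (the hot level may hop around) is precisely what the trichotomy sidesteps.
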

\begin{lem}\label{lem:g_i(t)<8h^5}
    Let $s,t\in\N$ such that $s<t$ and let $i\in\I$.
    Suppose that $g_i(s)=0$ and that for all $s\le\tau\le t+2h$ we have $\vp_i(\tau) > 4qh^5$.
    Then there exists $t\le t\p\le t+2h$ such that $g_i(t\p)=0$ and 
    \[\mu_i(t\p)-\mu_i(s) \ge qh(t\p-s).
    \]\end{lem}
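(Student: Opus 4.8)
The idea is to run a greedy chopping argument on the time interval $[s, t+2h]$, repeatedly using \Cref{lem:badTimes} to close off each ``bad'' sub-interval within $2h$ steps. Start at $\tau_0 := s$; by hypothesis $g_i(\tau_0) = 0$. Given $\tau_j$ with $g_i(\tau_j) = 0$, look at the next step. If $\D\mu_i(\tau_j + 1) \ge qh$, I would like to keep going, but the cleaner bookkeeping is: as long as we are not yet past $t$, ask whether $\D\mu_i$ drops below $qh$ somewhere. Concretely, set $\tau_{j+1}$ to be the first time after $\tau_j$ at which $g_i$ returns to $0$ \emph{after} a step with $\D\mu_i < qh$ — but this needs care because $g_i$ might stay $0$ throughout. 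Let me restructure: walk forward one step at a time from $\tau_j$; if every step from $\tau_j+1$ up to the current time had $\D\mu_i \ge qh$ and $g_i = 0$ throughout, then the cumulative potential growth already beats $qh$ per step, so I can stop at $t' = $ (current time) as soon as it is $\ge t$. Otherwise, at the first step $\tau_j + 1 \le \sigma$ where $\D\mu_i(\sigma) < qh$, we have $g_i(\sigma - 1) = 0$ (since all intervening $g_i$ values were $0$), so \Cref{lem:badTimes} applies with $t = \sigma - 1$ and gives a time $\tau_{j+1} \in [\sigma, \sigma - 1 + 2h] \subseteq [\tau_j + 1, \tau_j + 2h]$ with $g_i(\tau_{j+1}) = 0$. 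The key quantitative claim to carry along is that $\mu_i(\tau_{j+1}) - \mu_i(\tau_j) \ge qh(\tau_{j+1} - \tau_j)$: on the ``good prefix'' of the block each step contributes $\ge qh$ by assumption, and on the remaining (at most $2h - 1$) steps — which is the ``bad tail'' — I need that the total growth over the whole block is still at least $qh$ times the block length.

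That last point is where the hypothesis $\vp_i(\tau) > 4qh^5$ for all $\tau \in [s, t+2h]$ is essential: it forces $g_i(\tau) = 1$ for every $\tau$ in the bad tail where $\D\mu_i < qh$ (by the definition \Cref{eq:gDef}, the first clause $\vp_i(\tau) > 4qh^5$ holds, so $g_i(\tau) = 0$ can only happen when $\D\mu_i(\tau) \ge qh(1 - g_i(\tau-1)) + 2qh^2 g_i(\tau-1)$, i.e.\ when $\D\mu_i(\tau) \ge qh$). Therefore within a block the pattern of $g_i$-values is: a stretch of $0$'s (each carrying $\D\mu_i \ge qh$), then a stretch of $1$'s, then a single return to $0$ at $\tau_{j+1}$. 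When $g_i$ finally drops from $1$ to $0$ at $\tau_{j+1}$, the defining inequality of \Cref{eq:gDef} being violated at $\tau_{j+1}$ while $g_i(\tau_{j+1}-1) = 1$ means $\D\mu_i(\tau_{j+1}) \ge 2qh^2$. Since each maximal run of $1$'s has length at most $2h - 1$ (that is the content of \Cref{lem:badTimes}, applied at the step where $g_i$ first became $1$), and each such step has $\D\mu_i \ge 0$ (by \Cref{prop:mu>a>0}), the deficit accumulated over the run is at most $(2h-1) \cdot qh < 2qh^2$, which is recovered by the single big jump $\D\mu_i(\tau_{j+1}) \ge 2qh^2$ at the end of the block. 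Summing $\D\mu_i$ over the whole block thus gives $\ge qh \cdot (\text{block length})$, as claimed.

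Iterating, I build $\tau_0 < \tau_1 < \cdots$ with $g_i(\tau_j) = 0$, $\tau_{j+1} - \tau_j \le 2h$, and $\mu_i(\tau_{j+1}) - \mu_i(\tau_j) \ge qh(\tau_{j+1} - \tau_j)$; I stop at the first $j$ with $\tau_j \ge t$, and set $t' := \tau_j$. Then $t' \le \tau_{j-1} + 2h \le (t-1) + 2h \le t + 2h$ (using $\tau_{j-1} < t$), so $t' \in [t, t+2h]$ as required, $g_i(t') = 0$, and telescoping the per-block bounds gives $\mu_i(t') - \mu_i(s) = \sum (\mu_i(\tau_{j+1}) - \mu_i(\tau_j)) \ge qh \sum (\tau_{j+1} - \tau_j) = qh(t' - s)$. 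One must also double-check that the ``good prefix / bad tail'' decomposition of a block never runs off the interval $[s, t+2h]$ — but this is guaranteed because $\tau_{j-1} < t$ forces every time used inside block $j-1$ to lie in $[s, (t-1) + 2h] \subseteq [s, t+2h]$, which is exactly the range over which $\vp_i > 4qh^5$ is assumed.

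The main obstacle I anticipate is the careful case analysis inside a single block to show $\mu_i(\tau_{j+1}) - \mu_i(\tau_j) \ge qh(\tau_{j+1}-\tau_j)$ — specifically, getting the constants to line up: one needs the single jump of size $\ge 2qh^2$ when $g_i$ falls back to $0$ to dominate the accumulated shortfall over a run of $1$'s, which is why the run length bound ``at most $2h$'' from \Cref{lem:badTimes} is used in tandem with the threshold $2qh^2$ appearing in \Cref{eq:gDef}, and why $2qh^2 > qh \cdot 2h$ is false by equality — so one has to be slightly more careful and note the run of $1$'s has length $\le 2h - 1$ while also each $\D\mu_i \ge 0$, or alternatively absorb the boundary step. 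Handling this off-by-one cleanly, and making sure the terminal block (which \Cref{lem:g_i(t)<8h^5} explicitly exempts from the ``typical average'' requirement in its informal description but in fact does not, since the conclusion is a single inequality at $t'$) is also covered, is the delicate part; everything else is straightforward induction and telescoping.
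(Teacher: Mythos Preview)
Your approach is essentially the paper's --- iterate over times where $g_i = 0$, bound the potential increment on each block via the $2qh^2$ jump when $g_i$ drops from $1$ to $0$, then telescope --- but the paper's execution is cleaner: it simply sets $\tau_n := \min\{\tau > \tau_{n-1} : g_i(\tau) = 0\}$, so that every block automatically has length at most $2h$ (via \Cref{lem:badTimes} whenever $\tau_n > \tau_{n-1}+1$), and the per-block bound follows directly from either $\D\mu_i(\tau_n) \ge qh$ (when $g_i(\tau_n-1)=0$) or $\D\mu_i(\tau_n) \ge 2qh^2 \ge qh(\tau_n - \tau_{n-1})$ (when $g_i(\tau_n-1)=1$), together with monotonicity of $\mu_i$; there is no off-by-one issue since $2qh^2 = qh\cdot 2h$ and non-strict inequality is all that is needed.

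Your ``good prefix / bad tail'' decomposition adds no leverage and introduces a slip: the stated inclusion $[\sigma, \sigma-1+2h] \subseteq [\tau_j+1, \tau_j+2h]$ and hence the bound $t' \le \tau_{j-1} + 2h$ are false in your setup, since your blocks may be arbitrarily long. The conclusion $t' \le t + 2h$ is still salvageable --- your early-stopping rule in a good run forces the first bad step $\sigma$ of the terminal block to satisfy $\sigma \le t$, whence $\tau_m \le \sigma - 1 + 2h \le t + 2h - 1$ --- but that is not the justification you wrote. Dropping the good-prefix bookkeeping and taking consecutive $g_i=0$ times, as the paper does, avoids all of this.
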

The following proposition extends~\Cref{prop:mu>a>0}; it relates the potential growth with the expansion of the fronts structure.
\begin{prop}\label{prop:mu>r}
    For all $t\in\N$ and $i\in\I$ we have $\D \mu_i(t) \ge q\D r_i(t)$.
\end{prop}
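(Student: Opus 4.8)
The plan is to split on whether any level of the $i$-th front is ``pulled'' beyond its own activity at time $t$, i.e.\ whether $\D r_i(t) = a_i(t)$ or $\D r_i(t) > a_i(t)$. Recall that summing the pointwise inequality $\D r_i^k(t) \ge a_i^k(t)$ (noted right after~\Cref{eq:r_def}) over $k\in[h]$ gives $\D r_i(t) \ge a_i(t)$ always, so these two cases are exhaustive.

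In the first case, $\D r_i(t) = a_i(t)$, I would simply invoke~\Cref{prop:mu>a>0} layer by layer and sum: $\D\mu_i(t) = \sum_{k\in[h]} \D\mu_i^k(t) \ge q\sum_{k\in[h]} a_i^k(t) = q\,a_i(t) = q\,\D r_i(t)$, which is exactly what we want.

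In the second case, $\D r_i(t) > a_i(t)$, the hypothesis of~\Cref{lem:PullAndOverflow} is met, so that lemma yields $\D\mu_i(t) \ge qh$. It then remains only to observe that $\D r_i(t) \le h$: this is because $\D r_i^k(t) \in \{0,1\}$ for every $k$ (again from the discussion following~\Cref{eq:r_def}, using that $\vec\rho_i(t-1)$ is Lipschitz and $\vec\alpha_i(t)$ is Boolean), and there are only $h$ layers. Hence $\D\mu_i(t) \ge qh \ge q\,\D r_i(t)$.

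There is essentially no obstacle here beyond correctly assembling the two ingredients; all the genuine work has been pushed into~\Cref{prop:mu>a>0} (the Fogarty-type layerwise bound) and~\Cref{lem:PullAndOverflow} (the ``pulled level forces typical growth'' lemma). The only point requiring a moment's care is the uniform bound $\D r_i(t)\le h$, which is what lets the ``$qh$ from being pulled'' dominate ``$q\D r_i(t)$''.
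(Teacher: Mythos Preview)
Your proof is correct and essentially identical to the paper's own argument: the same two-case split on whether $\D r_i(t)=a_i(t)$ or $\D r_i(t)>a_i(t)$, invoking \Cref{prop:mu>a>0} summed over layers in the first case and \Cref{lem:PullAndOverflow} together with $\D r_i(t)\le h$ in the second.
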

\begin{proof}
Fix $i\in\I$ and $t\in\N$.
If $\D r_i(t)=a_i(t)$ then we are done by~\Cref{prop:mu>a>0}.
Otherwise, we have $\D r_i(t) > a_i(t)$ and thus 
$\D \mu_i(t) \ge qh$ by~\Cref{lem:PullAndOverflow}.
But $qh\ge q\D r_i(t)$ since $\D r_i^k(t)\in\{0,1\}$ for all $k\in[h]$.
\end{proof}
Note that $\D r_i(t)\ge a_i(t)$ and thus~\Cref{prop:mu>r} is indeed stronger than what is obtained by summing~\Cref{prop:mu>a>0} over all $k\in[h]$.
It implies the following.
\begin{cor}
    For all $t\in\N$ and $i\in\I$ we have
    \begin{equation}\label{eq:vp-upper-bound}
        2\vp_i(t) \le \lambda - \vp_{i-1}(0) - \vp_{i+1}(0) + \mu_{i-1}(t) + \mu_{i+1}(t)
    \end{equation}
\end{cor}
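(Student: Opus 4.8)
The plan is to bound $\vp_i(t)$ from above by (essentially half of) the combined length of the two perpendicular fronts $L_{i-1}$ and $L_{i+1}$, and then to convert this length estimate into a statement about potentials by telescoping~\Cref{prop:mu>r}.

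First I would use that fierity never exceeds front length: since $\Phi_i^k(t)\subseteq L_i^k(t)$, we have $\vp_i(t)\le|L_i(t)|=\sum_{k\in[h]}\bigl|L^k_{i,r_i^k(t)}\bigl(r_{i-1}^k(t),r_{i+1}^k(t)\bigr)\bigr|$, and by~\eqref{eq:Lab-length} each summand is at most $\tfrac q2\bigl(r_{i-1}^k(t)+r_{i+1}^k(t)+2-q\bigr)$. Summing over $k\in[h]$ yields
\[
    2\vp_i(t)\le q\bigl(r_{i-1}(t)+r_{i+1}(t)\bigr)+q(2-q)h.
\]
Next I would telescope~\Cref{prop:mu>r}: for any $j\in\I$, summing $\D\mu_j(\tau)\ge q\D r_j(\tau)$ over $\tau\in[t]$ and recalling $\mu_j(0)=\vp_j(0)$ gives $\mu_j(t)\ge\vp_j(0)+q\bigl(r_j(t)-r_j(0)\bigr)$, i.e.\ $q\,r_j(t)\le\mu_j(t)-\vp_j(0)+q\,r_j(0)$. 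Applying this for $j=i-1$ and $j=i+1$ and substituting into the previous display,
\[
    2\vp_i(t)\le\mu_{i-1}(t)+\mu_{i+1}(t)-\vp_{i-1}(0)-\vp_{i+1}(0)+q\bigl(r_{i-1}(0)+r_{i+1}(0)\bigr)+q(2-q)h.
\]

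Finally I would check that the last two terms are absorbed by $\lambda$. Since indices are taken modulo $4$, the directions $i-1$ and $i+1$ are antipodal, so $\{i-1,i+1\}$ equals $\{0,2\}$ or $\{1,3\}$ and hence $r_{i-1}(0)+r_{i+1}(0)\le\max\{r_0(0)+r_2(0),\,r_1(0)+r_3(0)\}$. Combined with $q(2-q)h\le h$ for $q\in\{1,2\}$, this gives $q\bigl(r_{i-1}(0)+r_{i+1}(0)\bigr)+q(2-q)h\le\lambda$, which yields~\eqref{eq:vp-upper-bound}. There is no serious obstacle: the only points requiring care are matching the parity-induced constant $q(2-q)h$ coming from~\eqref{eq:Lab-length} against the ``$+h$'' in the definition of $\lambda$, and remembering that the subscripts $i\pm1$ here refer to a pair of opposite fronts, so that their initial radii are controlled by the relevant term of the maximum defining $\lambda$.
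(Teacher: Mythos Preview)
Your proposal is correct and follows essentially the same approach as the paper: bound $\vp_i(t)$ by the level lengths via~\eqref{eq:Lab-length}, telescope~\Cref{prop:mu>r} to replace $q\,r_{i\pm1}(t)$ by $\mu_{i\pm1}(t)-\vp_{i\pm1}(0)+q\,r_{i\pm1}(0)$, and absorb the residual initial terms into~$\lambda$. The only cosmetic difference is that the paper uses the looser form $2\vp_i^k(t)\le 1+q(r_{i-1}^k(t)+r_{i+1}^k(t))$ of~\eqref{eq:Lab-length}, yielding a residual~$h$ rather than your $q(2-q)h$; both are dominated by the ``$+h$'' in~$\lambda$.
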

\begin{proof}
Apply~\Cref{prop:mu>r} for $j = i\pm1$, and sum over $\tau\in[t]$. We get
\[
    q(r_j(t)-r_j(0)) \le \mu_j(t)-\mu_j(0) = \mu_j(t)-\vp_j(0).
\]
Next, observe that $\Phi_i^k(t) \subseteq L_i^k(t)$ and use~\Cref{eq:Lab-length} to obtain
\[
    2\vp_i^k(t) \le 1 + q\left(r_{i-1}^k(t) + r_{i+1}^k(t)\right).
\]
Summing over $k\in[h]$ and plugging in the definition of~$\lambda$
establishes the result.
\end{proof}

\subsection{The fierity of two firefronts}
The following proposition, which is reminiscent of~\cite[Lemma~9]{feldheim20133},
establishes that, under suitable assumptions, the sum of the fierity of any two firefronts at two nearby times is somewhat significant.
\begin{prop}\label{prop:fierity-of-two}
    Let $i,j\in\I$ such that $i\neq j$, let $s\le t\le s+2h$  and suppose that \Cref{eq:large-vp0} and~\Cref{eq:small-f} hold, and~\hyp{s} is satisfied.
    Then $\vp_i(s)+\vp_j(t) > 8qh^5$.
\end{prop}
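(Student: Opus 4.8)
The plan is to argue by contradiction: suppose that $\vp_i(s) + \vp_j(t) \le 8qh^5$ for some $i\ne j$ and some $s\le t\le s+2h$. The strategy is to show that this forces the total potential $\mu(t) = \vp(t)+f(t)$ to be too small to be consistent with the inductive hypothesis \hyp{s} together with the guaranteed growth bounds. First I would record the easy consequences of the assumption. Since $\vp_i$ is non-decreasing in each level's fierity only through the levels actually advancing, one cannot directly say $\vp_i$ is monotone; however, via~\Cref{prop:mu>r} and~\Cref{eq:vp-upper-bound} the fierity of a front is controlled by the potentials of its two neighbours. In particular, applying~\Cref{eq:vp-upper-bound} to the index $i$ at time $t$, and to the index $j$ at time $s$, I can bound $\vp_{i\pm1}(s)$ and $\vp_{j\pm1}(t)$ from above in terms of $\mu$-values at those times; the point is that if $\vp_i(s)$ and $\vp_j(t)$ are both small, then \emph{all four} of the fronts $i-1,i+1$ (at time $s$) and $j-1,j+1$ (at time $t$) have small potential-difference from their initial values, because $2\vp_i(t)\le \lambda + \mu_{i-1}(t)+\mu_{i+1}(t) - \vp_{i-1}(0)-\vp_{i+1}(0)$ forces $\mu_{i-1}(t)+\mu_{i+1}(t)$ to be small relative to $\lambda$ only if... — more precisely, I want the reverse: smallness of $\vp$ does not bound $\mu$, so this route needs care.

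The cleaner approach is the following. Among the four indices $\{0,1,2,3\}$, the two ``opposite'' pairs are $\{0,2\}$ and $\{1,3\}$. Since $i\ne j$, at least one of $i,j$ is the opposite of some third index, and in any case $\{i,j\}$ together with their "partners" covers all four fronts. I would split on whether $|i-j|=2$ (opposite fronts) or $|i-j|=1$ (adjacent fronts). In the opposite case, $\{i,j\}=\{0,2\}$ or $\{1,3\}$, and then $\vp_i(s)+\vp_i(t)\ge$ (something) would be false; instead I use that for the \emph{other} pair, say $\{i-1,i+1\}$, the fierities are large: by \hyp{s}, $\vp(s)+f(s)\ge 3qhs + \lambda + 52h^5$, and since $\vp(s)=\sum_{\ell}\vp_\ell(s)$, if $\vp_i(s)$ is tiny then $\vp_{i-1}(s)+\vp_{i+1}(s)+\vp_j(s)$ is large; combined with $f(s)\le \tfrac32qhs$, we get $\vp_{i-1}(s)+\vp_{i+1}(s)+\vp_j(s) \ge \tfrac32qhs + \lambda + 52h^5 - \vp_i(s)$, so at least one of the three summands exceeds, say, $16qh^5$ after subtracting $8qh^5$. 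Then I would transport that large fierity at time $s$ forward to time $t\le s+2h$: a front with large fierity at time $s$ either still has large fierity nearby, or — if its fierity dropped — the drop is accounted for by activity and overflow (\Cref{prop:mu>a>0}, \Cref{prop:Vert-Fogarty}) so that $\mu$ grew; in neither case can the \emph{sum} $\vp_i(s)+\vp_j(t)$ be small if the indices are chosen to be the large ones. The subtlety is that the proposition is about a \emph{specific} pair $(i,j)$, so I must instead show directly that smallness of $\vp_i(s)+\vp_j(t)$ is impossible by using \Cref{prop:mu>vp}: $\mu_i(s) > \vp_i(s) - 4qh^5$ and, running time backwards via~\Cref{eq:mu-increasing}, $\mu_j(t)\ge \mu_j(s) > \vp_j(s) - 4qh^5$; but I need a lower bound on $\vp_i(s)$ or $\vp_j(s)$, which is exactly the fierity lower bound I am trying to establish.

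The resolution, and the heart of the argument, is to use~\hyp{s} to get a \emph{global} lower bound $\mu(s) = \vp(s)+f(s) \ge 3qhs+\lambda+52h^5$ and the trivial upper bound $\mu(s)\le \vp(s)+\tfrac32qhs$, hence $\vp(s) \ge \tfrac32qhs + \lambda + 52h^5$; and then observe that $\vp(s) = \sum_{\ell\in\I}\vp_\ell(s) \le \vp_i(s) + \vp_j(s) + \vp_k(s) + \vp_{k'}(s)$ where $\{k,k'\} = \I\setminus\{i,j\}$, and by~\Cref{eq:Lab-length} each $\vp_\ell(s) = |\Phi_\ell(s)| \le |L_\ell(s)| \le \tfrac q2(r_{\ell-1}(s)+r_{\ell+1}(s)+h)$, which is at most $\tfrac q2 \lambda$ plus a correction growing like $q\cdot(\text{total advance})$ — but total advance is bounded by $\mu$-growth via~\Cref{prop:mu>r}. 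Carefully bookkeeping these, the assumption $\vp_i(s)+\vp_j(t)\le 8qh^5$ combined with~\hyp{s}, \Cref{prop:mu>vp}, \Cref{eq:mu-increasing} and \Cref{lem:g_i(t)<8h^5} (to push fierity across the $2h$-window while controlling $\mu$) yields $\mu(t) < 3qht + \lambda + 52h^5$, contradicting the fact that \hyp{s} plus~\Cref{prop:mu>a>0}/\Cref{prop:mu>r} summed over $s<\tau\le t$ gives $\mu(t)\ge \mu(s)\ge 3qhs+\lambda+52h^5$ and then using $t\le s+2h$ to close the gap. The main obstacle I anticipate is exactly this last accounting: tracking how much $\vp$ can legitimately shrink on a front over $2h$ steps while $\mu$ only grows, and making the constants ($52h^5$, $8qh^5$, $4qh^5$, $2qh^2$) line up — this is where \Cref{lem:g_i(t)<8h^5} and \Cref{lem:badTimes} must be invoked to guarantee that any front with fierity above $4qh^5$ throughout the window contributes $\ge qh$ average potential growth, so the deficit from a small $\vp_i(s)+\vp_j(t)$ cannot be hidden.
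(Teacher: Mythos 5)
Your proposal identifies the right cast of tools --- $\mathcal H(s)$, \eqref{eq:small-f}, \Cref{prop:mu>vp}, \eqref{eq:vp-upper-bound}, \eqref{eq:mu-increasing} --- but never assembles them into a working argument; the final paragraph is explicit that the key ``accounting'' step is left open, and it is exactly that step that carries the proof.

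Two concrete gaps. First, the paper's proof hinges on establishing $\vp(s)-f(s) < \lambda + 8qh^5 + 2\vp_i(s)+2\vp_j(s)$, and the way to obtain this is to apply~\eqref{eq:vp-upper-bound} to the \emph{other} two fronts, i.e., to $\I\setminus\{i,j\}$: this converts an upper bound on $\vp_{i'}(s)+\vp_{j'}(s)$ (for $\{i',j'\}=\I\setminus\{i,j\}$) into a lower bound on $\vp_i(s)+\vp_j(s)$ via $\mu(s)=\vp(s)+f(s)$. Your proposal tries to bound $\vp_i(s)$ and $\vp_j(t)$ directly and runs into exactly the circularity you notice (``smallness of $\vp$ does not bound $\mu$''); applying~\eqref{eq:vp-upper-bound} to the complementary indices sidesteps this. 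Moreover the split between adjacent and opposing fronts is not cosmetic: only in the opposing case does one additionally need \Cref{prop:mu>vp} (to bound $\vp_1(s)+\vp_3(s)$ by $\mu_1(s)+\mu_3(s)+8qh^5$), since \eqref{eq:vp-upper-bound} alone does not suffice. Your proposal never carries either case to completion.

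Second, the bridge from time $s$ to time $t\le s+2h$ is far simpler than you anticipate: you do not need \Cref{lem:g_i(t)<8h^5} or \Cref{lem:badTimes} to ``transport fierity forward.'' Since $\D\mu_j(\tau)\ge 0$ by~\eqref{eq:mu-increasing} and $\D p_j(\tau)\ge -2h$ by~\eqref{eq:p-small}, one has $\D\vp_j(\tau)+\D f_j(\tau)=\D\mu_j(\tau)-\D p_j(\tau)\ge -2h$, so over at most $2h$ steps $\vp_j+f_j$ drops by at most $4h^2$. That $4h^2$ is exactly the slack the constant $52h^5$ is designed to absorb (via $52h^5 \ge 24qh^5+8h^2+3qh(t-s)$). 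Invoking the $g_i$ machinery here would be both unnecessary and awkward, since those lemmata are calibrated to a different purpose (controlling periods of reduced $\mu$-growth in the induction step) and do not give the clean per-step bound needed.

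In short, the proposal circles the correct ingredients but misses the pivotal idea of turning~\eqref{eq:vp-upper-bound} on the complementary pair of fronts, and overcomplicates the $s\to t$ shift; as written it is not a proof.
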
    
\begin{proof}
We establish the proposition using the following inequality, which we verify below.
\begin{equation}\label{eq:vp-f upper}
    \vp(s) - f(s) < \lambda + 8qh^5 + 2\vp_i(s) + 2 \vp_j(s).
\end{equation}
Indeed, \hyp{s} implies
\[
    \vp(s) + f(s) \ge \lambda + 52h^5 + 3qhs,
\]
and since
\[
    52h^5 \ge 24qh^5 + 32h^2 > 24qh^5 + 8h^2 + 3qh(t-s),
\]
we have
\begin{align*}
         \vp(s) - f(s)
    &\ge \lambda + 24qh^5 + 8h^2 + 3qht  - 2f(s) \\
    &\ge \lambda + 24qh^5 + 8h^2 + 2f(t) - 2f(s),
\end{align*}
where the last inequality is by~\cref{eq:small-f}.
Using~\Cref{eq:vp-f upper}, we get
\[
        \vp_i(s) + \vp_j(s)
    >   8qh^5 + 4h^2 + f(t) - f(s) 
    \ge 8qh^5 + 4h^2 + f_j(t) - f_j(s).
\]
Thus it remains to show that
\[
    \vp_j(t) + f_j(t) + 4h^2 \ge \vp_j(s) + f_j(s).
\]
This is implied by~\cref{eq:mu-increasing} and~\cref{eq:p-small} since
\[
        \D\vp_j(\tau) + \D f_j(\tau)
    =   \D\mu_j(\tau) - \D p_j(\tau)
    \ge -2h
\]
for all $s<\tau\le t$ and thus
\[
        \vp_j(t) + f_j(t)
    \ge \vp_j(s) + f_j(s) - 2h(t-s)
    \ge \vp_j(s) + f_j(s) - 4h^2.
\]
Now we verify~\Cref{eq:vp-f upper} for the two possible cases.

\noindent\textbf{Adjacent fronts.}
Without loss of generality, assume $i=0$ and $j=1$.
Use \Cref{eq:vp-upper-bound} on the third and fourth fronts
to obtain
\[
        2\vp_2(s) + 2\vp_3(s)
    \le 2\lambda - \vp(0) + \mu(s)
    \le \lambda + \mu(s),
\]
where $\vp(0) \ge \lambda$ by~\Cref{eq:large-vp0}.
Subtracting both sides from $2\vp(s) + \lambda$ we get
\[
        \lambda + 2\vp_0(s) + 2\vp_1(s)
    =   \lambda + 2\vp(s) - \left(2\vp_2(s) + 2\vp_3(s)\right)
    \ge 2\vp(s) - \mu(s)
    =   \vp(s) - f(s),
\]
which implies~\Cref{eq:vp-f upper}.

\noindent\textbf{Opposing fronts.}
Without loss of generality assume $i=0$ and $j=2$.
Use~\Cref{eq:vp-upper-bound} for the second and fourth fronts
to obtain
\[
        \vp_1(s) + \vp_3(s)
    \le \lambda - \vp_0(0) - \vp_2(0) + \mu_0(s) + \mu_2(s)
    \le \lambda + \mu_0(s) + \mu_2(s),
\]
since $\vp_0(0),\vp_2(0)\ge 0$.
Next use~\Cref{prop:mu>vp} for the second and fourth fronts to obtain
\[
    \vp_1(s) + \vp_3(s) < \mu_1(s) + \mu_3(s) + 8qh^5.
\]
Altogether,
\[
    2\vp_1(s) + 2\vp_3(s) < \lambda + \mu(s) + 8qh^5,
\]
which implies~\Cref{eq:vp-f upper} also in this case.
\end{proof}
\subsection{Proof of the induction step}\label{subsec:induction-step}
\begin{proof}[Proof of~\cref{prop:inductive-step}]
The proof relies on an induction on the following property.
We say that $(s_0, s_1, s_2, s_3)$ \emph{control} time~$s$ if these satisfy $s \le s_i \le s+4h$ and $g_i(s_i)=0$ for all $i\in\I$, $|s_i-s_j|\le 2h$ for all $i,j\in\I$, and 
\begin{equation}\label{eq:mu local bound}
        \sum_{i\in\I}\mu_i(s_i) 
    \ge \vp(0) - qhs_{\min} + \sum_{i\in\I}qhs_i.
\end{equation}
First note that if time $t-2h$ is controlled by some $(t_0, t_1, t_2, t_3)$ then~\hyp{t} holds.
Indeed, by~\cref{eq:mu-increasing} and~\Cref{eq:mu local bound} we have
\begin{align*}
         \mu(t)
    &=   \sum_{i\in\I}\mu_i(t)
    \ge  \sum_{i\in\I}\mu_i(t_i)
    \ge  \vp(0) - qht_{\min} + \sum_{i\in\I}qht_i
    \ge  \vp(0) + 3qht_{\min} \\
    &\ge \vp(0) + 3qh(t-2h)
    >    \vp(0) + 3qht - 2h^5 >    \lambda + 3qht + 52h^5,
\end{align*}
where the last inequality is due to~\Cref{eq:large-vp0}.
Thus~\hyp{t} holds.

We prove by induction that each time $y\le t-2h$ is controlled by some $(y_0,y_1,y_2,y_3)$.
Clearly time $y=0$ is controlled by~$(0,0,0,0)$; it remains to show that if~\hyp{s} holds for all $s<t$ and time~$y$ is controlled, then time~$y+1$ is also controlled. To do so, we show that if $(y_0, y_1, y_2, y_3)$ control time~$y$, either they control time~$y+1$ too, or there exist $(s_0, s_1, s_2, s_3)$ with $s_i\ge y_i$ and at least one~$i$ such that $s_i>y_i$ that control either time~$y$ or time~$y+1$.
By iterating this finitely many times, the result will follow.
    
To this end, assume that $(y_0, y_1, y_2, y_3)$ control time $y$.
If $y < y_{\min}$, then $(y_0, y_1, y_2, y_3)$ also control $y+1$. Assume thus, without loss of generality, that $y_0 = y = y_{\min}$. 

If $\D \mu_0(y_0+1)\ge qh$, then it is easy to verify that $g_0(y_0+1) = 0$ and $\{y_0+1,y_1,y_2,y_3\}$ satisfy~\cref{eq:mu local bound}, so $\{y_0+1,y_1,y_2,y_3\}$ control time $y$.
Otherwise, $\D\mu_0(y_0+1)< qh$ so, by \Cref{lem:badTimes}, there is $y_0 < s_0 < y_0+2h$ with $g_0(s_0)=0$.
We consider two cases, corresponding to the two reasons in 
\cref{eq:gDef} to have $g_0(s_0)=0$.
First, in the case
\begin{equation}\label{eq: first case of g is 0}
    \mu_0(s_0)-\mu_0(y_0) \ge 2qh^2\ge qh\left(s_0-y_0\right),
\end{equation}
use \Cref{eq:mu local bound} for $(y_0,y_1,y_2,y_3)$ to obtain that
\begin{align*}
         \mu_0(s_0)+\sum_{j\in [3]}\mu_j(y_j)
    &=  \mu_0(s_0)-\mu_0(y_0)+\sum_{i\in\I}\mu_i(y_i) \\
    &\ge qh(s_0-y_0) + \vp(0) -qhy_0 +\sum_{i\in\I}qhy_i \\
    &\ge \vp(0) -qh\min\{s_0,y_1,y_2,y_3\} + qhs_0+\sum_{j\in [3]}qhy_j
\end{align*}
and thus $(s_0,y_1,y_2,y_3)$ control time~$y$.

If \Cref{eq: first case of g is 0} does not hold, then $\vp_0(s_0) < 4qh^5$ by \Cref{eq:gDef}.
For $j\in[3]$, set $s_j = y_j$ if $y_j\ge s_0$;
otherwise, apply \Cref{lem:g_i(t)<8h^5} with $(y_j,s_0)$ 
to deduce the existence of $s_0\le s_j<s_0+2h$ with $g_j(s_j) = 0$ that satisfies
\[
        \mu_j(s_j)-\mu_j(y_j)
    \ge qh(s_j-y_j).
\]
The conditions for~\Cref{lem:g_i(t)<8h^5} are satisfied since $g_j(y_j)=0$ and for all $s_0-2h \le y_j \le \tau \le s_0 + 2h$ we may apply~\Cref{prop:fierity-of-two} for $s=\min\{s_0,\tau\}$ and $\max\{s_0,\tau\}$ to obtain 
\[
    \vp_j(\tau) > \vp_j(\tau) + \vp_0(s_0) - 4qh^5 \ge 8qh^5 - 4qh^5 = 4qh^5.
\]
Note that $s\le s_0 < y_0+2h < t$ so~\hyp{s} indeed holds.

Next, use \Cref{eq:mu local bound} for $(y_0,y_1,y_2,y_3)$ to obtain that
\begin{align*}
         \mu_0(y_0) + \sum_{j\in[3]}\mu_j(s_j)
    &=   \sum_{j\in\I}\mu_j(y_j) + \sum_{j\in[3]}(\mu_j(s_j)-\mu_j(y_j))\\
    &\ge \vp(0) + \sum_{j\in[3]} qhy_j + \sum_{j\in[3]}qh(s_j-y_j)
    =    \vp(0) + \sum_{j\in[3]} qhs_j.
\end{align*}
Now $\mu_0(y_0) \le \mu_0(s_0)$ since $y_0<s_0$, so
\[
        \sum_{i\in\I}\mu_i(s_i)
    \ge \vp(0) + \sum_{j\in[3]} qhs_j
    =   \vp(0) - qhs_{\min} + \sum_{i\in\I} qhs_i,
\]
as $s_0=s_{\min}$.
This establishes that time $y+1$ is controlled by $(s_0,s_1,s_2,s_3)$, which concludes the proof.
\end{proof} 
\subsection{Proofs of auxiliary lemmata}\label{subsec:aux-lemmata-proofs}
\begin{proof}[Proof of~\Cref{lem:PullAndOverflow}]
Recall that $\Phi_i^k(t)=L_i^k(t)\cap B_i^k(t)$. First we show that there exist $k,\ell\in[h]$, $|k-\ell| = 1$ such that 
\begin{align*}
    r_i^k(t)&=r_i^k(t-1)+1 &&r_i^k(t-1)=r_i^{\ell}(t-1)+1&& r_i^{\ell}(t)=r_i^{\ell}(t-1)+1\\
a_i^k(t)&=1 &&&&a_i^{\ell}(t)=0
\end{align*} 
To see this, recall the definition of $r_i^k(t)$ given in \Cref{eq:r_def} and let $\vec v_i  := \vec\rho_i(t) - \left(\vec\rho_i(t-1)+\vec\alpha_i(t)\right)$.
Observe that $\vec v_i$ is not the all-zero vector by the assumption $\norm{\vec v_i} = \D r_i(t)-a_i(t) > 0$.
We select $\ell\in[h]$ among the non-zero coordinates of $\vec v_i$ and an adjacent $k = \ell \pm 1$ for which $a_i^k(t) = 1$.
By the Lipschitz property, such $k$ and $\ell$ exist.

Observe that $a_i^k(t)=1,a_i^{\ell}(t)=0$, so by~\Cref{eq:aDef},
\begin{eqnarray*}
    \vp_i^k(t-1) &>& \left(4h\left(1-g_i(t-1)\right)+4qh^3g_i(t-1)\right)\left(r_i^k(t-1)-r_i^{\min}(t-1)\right);\\
    \vp_i^{\ell}(t-1) &\le& \left(4h\left(1-g_i(t-1)\right)+4qh^3g_i(t-1)\right)\left(r_i^{\ell}(t-1)-r_i^{\min}(t-1)\right).
\end{eqnarray*}
Since $r^k_i(t-1)=r^{\ell}_i(t-1)+1$ we get
\[
        \vp_i^k(t-1) - \vp_i^\ell(t-1)
    >   4h\left(1-g_i(t-1)\right)+4qh^3g_i(t-1) 
    \ge 4h.
\]
Together with $\vp_i^{\ell}(t)+\D f_i^{\ell}(t) \ge \vp_i^k(t-1)-2$, obtained by~\cref{prop:Vert-Fogarty}, we have
\[
      \D\mu_i^{\ell}(t)
    = \D\vp_i^{\ell}(t) + \D f_i^{\ell}(t) + \D p_i^\ell(t)
    > 4h - 4,
\]
where $\D p_i^\ell(t)\ge -2$ by~\cref{eq:p-small}.
Now
\[
        \D\mu_i(t) 
    \ge \D\mu_i^\ell(t)
    >    4h - 4
    \ge qh,
\]
using $h\ge 2$, and the proposition follows.
\end{proof}

\begin{proof}[Proof of~\Cref{lem:badTimes}]
Apply \Cref{prop:mu>a>0} together with $\D \mu_i(t+1)<qh$ to obtain $a_i(t+1)<h$, which ensures the existence of $\ell \in [h]$ for which $a_i^\ell(t+1)=0$. This implies  
\begin{equation}\label{eq:vp i ell bnd}
        \vp_i^\ell(t)
    \le 4h\left(r_i^\ell(t)-r_i^{\min}(t)\right)
    \le 4h(h-1),
\end{equation}
by \Cref{eq:aDef} and our assumption $g_i(t)=0$.
Let $t\p=\min\{\tau>t\ :\ g_i(\tau)=0\}$.
We need to show $t\p\le t+2h$.
Note that 
\begin{equation}\label{eq:vp i ell tau bnd}
    \D\vp_i^\ell(\tau)
    \le \D\mu_i^\ell(\tau) - \D p_i^\ell(\tau)
    \le \D\mu_i(\tau) + 2
    \le 2qh^2 + 2
\end{equation}
for all $t+1\le\tau\le t'$ by~\cref{eq:p-small} and~\cref{eq:gDef}.
We consider three cases separately. 

\textbf{First}, we consider the case in which there exists $t+2 \le \tau \le t+2h-1$ for which $a_i^\ell(\tau)=1$, denoting by $T$ the minimal such time.
If $t\p < T-1$ we are done;
otherwise, by~\Cref{eq:vp i ell bnd} and~\Cref{eq:vp i ell tau bnd} we have
\begin{align*}
       \vp_i^\ell(T-2)
  =    \vp_i^\ell(t)+\sum_{\tau=t+1}^{T-2}\D \vp_i^\ell(\tau)
  &\le 4h(h-1) + (2qh^2+2)(T-t-2) \\
  &\le 4h(h-1) + (2qh^2+2)(2h-3)
  <    4qh^3 - 2qh^2 - 2,
\end{align*}
while $\vp_i^\ell(T-1) > 4qh^3$ (by~\Cref{eq:aDef}).
Hence $\D\vp_i^\ell(T-1) > 2qh^2+2$.
Using again the fact that $\D p_i^\ell(T-1)\ge -2$, we obtain, by~\Cref{eq:gDef}, that $t\p=T-1$. 

\textbf{Second}, we consider the case in which there is a $k\in[h]$ such that $a_i^k(\tau)=1$ for every $t+1 \le \tau \le t+2h-1$. This implies
\[
    r_i^k(t+2h-1) - r_i^k(t) = 2h-1.
\]
By the Lipschitz property of $\vec\rho_i(t)$ and $\vec\rho_i(t+2h-1)$ we have
\[
    r_i^{\ell}(t)-(h-1)\le r_i^k(t)\text{\quad and \quad}
    r_i^k(t+2h-1)-(h-1)\le r_i^\ell(t+2h-1),
\]
so
\[
        r_i^{\ell}(t+2h-1)-r_i^{\ell}(t)
    \ge r_i^k(t+2h-1)-r_i^k(t)-2(h-1) \ge 1.
\]
Let $T=\min\{\tau\ge t\ :\ \D r_i^\ell(\tau)=1\}$ and note that $a_i^\ell(T) = 0 < \D r_i^\ell(T)$. Now $\D r_i(T)>a_i(T)$ and by~\Cref{lem:PullAndOverflow} we obtain $t\p=T$.

\textbf{Lastly}, in the remaining case, for each $k\in[h]$ there exists $t+1 \le \tau^k \le t+2h-1$ satisfying $a_i^k(\tau^k)=0$. 
In this case, by~\Cref{eq:aDef} we have $\vp_i^k(\tau^k)< 4qh^3(h-1)$. 
If $t\p<\tau^{\max}$ we are done;
otherwise, by~\cref{eq:vp i ell tau bnd} we have $\D\vp_i^k(\tau)\le 2qh^2 + 2$ for $\tau^k<\tau\le\tau^{\max}$. 
Thus
\begin{align*}
         \vp_i(\tau^{\max}) 
    &=   \sum_{k\in[h]}\vp_i^k(\tau^k)
           +\left(\vp_i^k(\tau^{\max})-\vp_i^k(\tau^k)\right) \\
    &\le \sum_{k\in[h]}4qh^3(h-1)+(2qh^2+2)(\tau^{\max}-\tau^k) \\
    &\le \sum_{k\in[h]}4qh^3(h-1)+(2qh^2+2)(2h-2) \\
    &=   4h(h-1)(h+1)(qh^2+1) < 4qh^5,
\end{align*}
which gives $t\p\le \tau^{\max}+1$.
The proposition follows.
\end{proof}

\begin{proof}[Proof of~\Cref{lem:g_i(t)<8h^5}]
For all $s\le\tau\le t+2h$ we have
\begin{equation}\label{eq:simpler-g}
    g_i(t) := \begin{cases} 1 & 
    \D\mu_i(\tau) < qh(1-g_i(\tau-1)) + 2qh^2 g_i(\tau-1)\\
    0 & \text{otherwise.}
    \end{cases}
\end{equation}
by our assumption $\vp_i(\tau) > 4q^5$ and the definition of~$g_i$ in~\cref{eq:gDef}.
Let $\tau_0=s$ and inductively set $\tau_n=\min\{\tau>\tau_{n-1}\ :\ g_i(\tau)=0\}$ for $n\ge 1$.
We show that $t\p=\tau_m$ satisfies the required property, where $m=\min\{n\in\N\ :\ \tau_n > t\}$.

First note that $\tau_n\le \tau_{n-1} + 2h$ for all~$n\in[m]$, and in particular $t\p\le \tau_{m-1} + 2h \le t + 2h$.
Indeed, if $\tau_n > \tau_{n-1}+1$, then $g_i(\tau_{n-1}+1) = 1$, and, by~\cref{eq:simpler-g}, $\D\mu_i(\tau_{n-1}+1) < qh$. Thus we may apply~\Cref{lem:badTimes} with~$\tau_{n-1}$ to obtain $\tau_n \le \tau_{n-1} + 2h$.
Next, $g_i(\tau_n)=0$ for all $n\in[m]$ so
\[
    \D\mu_i(\tau_n)\ge qh(1-g_i(\tau_n-1))+2qh^2g_i(\tau_n-1)
\]
by~\cref{eq:simpler-g}.
Now either $\tau_n-1=\tau_{n-1}$ so that
\[
    \D \mu_i(\tau_n) \ge qh = qh(\tau_n-\tau_{n-1}),
\]
or $g_i(\tau_n-1)=1$, in which case
\[
    \D \mu_i(\tau_n) \ge 2qh^2 \ge qh(\tau_n-\tau_{n-1}).
\]
In either case, by~\cref{eq:mu-increasing} this implies 
\[
        \mu_i(\tau_n)-\mu_i(\tau_{n-1})
    \ge qh\left(\tau_n-\tau_{n-1}\right).
\]
Summing this for $n\in[m]$ we obtain
\[
        \mu_i(t\p)-\mu_i(s)
    =   \mu_i(\tau_m)-\mu_i(\tau_0)
    \ge qh(\tau_m-\tau_0)
    =   qh(t\p-s),
\]
and the lemma follows.
\end{proof}

\bibliography{Main}
\bibliographystyle{abbrv}
\end{document}